\documentclass[runningheads]{llncs}

\bibliographystyle{splncs04}

\newcommand{\ponders}[3]{\ignorespaces}
\newcommand{\TODO}[1]{}

\usepackage{fancyvrb}
\usepackage{mathpartir}
\usepackage{bold-extra}
\usepackage{amsmath}
\usepackage{cite}

\DefineVerbatimEnvironment%
  {core}{Verbatim}
  {xleftmargin=\mathindent}

\usepackage{subcaption}
\usepackage{caption}
\usepackage{tikz}
\usepackage{tikz-cd}
\usepackage{sansmath}
\usepackage{mathtools}
\usepackage{quiver}
\usepackage{mathpartir}
\usepackage{hyperref}
\usepackage[capitalize,nameinlink,noabbrev]{cleveref} 

\usepackage{enumitem}
\usepackage{float}
\usepackage{bbding}
\usepackage{orcidlink}

\makeatletter
\@ifundefined{lhs2tex.lhs2tex.sty.read}%
  {\@namedef{lhs2tex.lhs2tex.sty.read}{}%
   \newcommand\SkipToFmtEnd{}%
   \newcommand\EndFmtInput{}%
   \long\def\SkipToFmtEnd#1\EndFmtInput{}%
  }\SkipToFmtEnd

\newcommand\ReadOnlyOnce[1]{\@ifundefined{#1}{\@namedef{#1}{}}\SkipToFmtEnd}
\usepackage{amstext}
\usepackage{amssymb}
\usepackage{stmaryrd}
\DeclareFontFamily{OT1}{cmtex}{}
\DeclareFontShape{OT1}{cmtex}{m}{n}
  {<5><6><7><8>cmtex8
   <9>cmtex9
   <10><10.95><12><14.4><17.28><20.74><24.88>cmtex10}{}
\DeclareFontShape{OT1}{cmtex}{m}{it}
  {<-> ssub * cmtt/m/it}{}

\DeclareFontShape{OT1}{cmtt}{bx}{n}
  {<5><6><7><8>cmtt8
   <9>cmbtt9
   <10><10.95><12><14.4><17.28><20.74><24.88>cmbtt10}{}
\DeclareFontShape{OT1}{cmtex}{bx}{n}
  {<-> ssub * cmtt/bx/n}{}

\newcommand{\Conid}[1]{\mathit{#1}}
\newcommand{\Varid}[1]{\mathit{#1}}
\newcommand{\anonymous}{\kern0.06em \vbox{\hrule\@width.5em}}

\newcommand{\bind}{\mathbin{>\!\!\!>\mkern-6.7mu=}}

\usepackage{polytable}

\@ifundefined{mathindent}%
  {\newdimen\mathindent\mathindent\leftmargini}%
  {}%

\def\resethooks{%
  \global\let\SaveRestoreHook\empty
  \global\let\ColumnHook\empty}
\newcommand*{\savecolumns}[1][default]%
  {\g@addto@macro\SaveRestoreHook{\savecolumns[#1]}}
\newcommand*{\restorecolumns}[1][default]%
  {\g@addto@macro\SaveRestoreHook{\restorecolumns[#1]}}
\newcommand*{\aligncolumn}[2]%
  {\g@addto@macro\ColumnHook{\column{#1}{#2}}}

\resethooks

\newcommand{\onelinecommentchars}{\quad-{}- }
\newcommand{\commentbeginchars}{\enskip\{-}
\newcommand{\commentendchars}{-\}\enskip}

\newcommand{\visiblecomments}{%
  \let\onelinecomment=\onelinecommentchars
  \let\commentbegin=\commentbeginchars
  \let\commentend=\commentendchars}

\newcommand{\invisiblecomments}{%
  \let\onelinecomment=\empty
  \let\commentbegin=\empty
  \let\commentend=\empty}

\visiblecomments

\newlength{\blanklineskip}
\setlength{\blanklineskip}{0.66084ex}

\newcommand{\hsindent}[1]{\quad}%
\let\hspre\empty
\let\hspost\empty

\EndFmtInput
\makeatother
\ReadOnlyOnce{polycode.fmt}%
\makeatletter

\newcommand{\hsnewpar}[1]%
  {{\parskip=0pt\parindent=0pt\par\vskip #1\noindent}}

\newcommand{\hscodestyle}{}

\newcommand{\sethscode}[1]%
  {\expandafter\let\expandafter\hscode\csname #1\endcsname
   \expandafter\let\expandafter\endhscode\csname end#1\endcsname}

  {\par\noindent
   \advance\leftskip\mathindent
   \hscodestyle
   \let\\=\@normalcr
   \let\hspre\(\let\hspost\)%
   \pboxed}%
  {\endpboxed\)%
   \par\noindent
   \ignorespacesafterend}

  {\hsnewpar\abovedisplayskip
   \advance\leftskip\mathindent
   \hscodestyle
   \let\hspre\(\let\hspost\)%
   \pboxed}%
  {\endpboxed%
   \hsnewpar\belowdisplayskip
   \ignorespacesafterend}

  {\hsnewpar\abovedisplayskip
   \advance\leftskip\mathindent
   \hscodestyle
   \let\\=\@normalcr
   \(\pboxed}%
  {\endpboxed\)%
   \hsnewpar\belowdisplayskip
   \ignorespacesafterend}

\newcommand{\plainhs}{\sethscode{plainhscode}}

\plainhs

  {\hsnewpar\abovedisplayskip
   \advance\leftskip\mathindent
   \hscodestyle
   \let\\=\@normalcr
   \(\parray}%
  {\endparray\)%
   \hsnewpar\belowdisplayskip
   \ignorespacesafterend}

  {\parray}{\endparray}

  {\(\parray}{\endparray\)}

\def\codeframewidth{\arrayrulewidth}
\RequirePackage{calc}

  {\parskip=\abovedisplayskip\par\noindent
   \hscodestyle
   \arrayrulewidth=\codeframewidth
   \tabular{@{}|p{\linewidth-2\arraycolsep-2\arrayrulewidth-2pt}|@{}}%
   \hline\framedhslinecorrect\\{-1.5ex}%
   \let\endoflinesave=\\
   \let\\=\@normalcr
   \(\pboxed}%
  {\endpboxed\)%
   \framedhslinecorrect\endoflinesave{.5ex}\hline
   \endtabular
   \parskip=\belowdisplayskip\par\noindent
   \ignorespacesafterend}

\newcommand{\framedhslinecorrect}[2]%
  {#1[#2]}

  {\(\def\column##1##2{}%
   \let\>\undefined\let\<\undefined\let\\\undefined
   \newcommand\>[1][]{}\newcommand\<[1][]{}\newcommand\\[1][]{}%
   \def\fromto##1##2##3{##3}%
   }{\) }%

  {\let\orighscode=\hscode
   \let\origendhscode=\endhscode
   \def\endhscode{\def\hscode{\endgroup\def\@currenvir{hscode}\\}\begingroup}
   \orighscode\def\hscode{\endgroup\def\@currenvir{hscode}}}%
  {\origendhscode
   \global\let\hscode=\orighscode
   \global\let\endhscode=\origendhscode}%

\makeatother
\EndFmtInput
\ReadOnlyOnce{forall.fmt}%
\makeatletter

\let\HaskellResetHook\empty
\newcommand*{\AtHaskellReset}[1]{%
  \g@addto@macro\HaskellResetHook{#1}}
\newcommand*{\HaskellReset}{\HaskellResetHook}

\newcommand\hsforall{\global\let\hsdot=\hsperiodonce}
\newcommand*\hsperiodonce[2]{#2\global\let\hsdot=\hscompose}
\newcommand*\hscompose[2]{#1}

\AtHaskellReset{\global\let\hsdot=\hscompose}

\HaskellReset

\makeatother
\EndFmtInput

\renewcommand{\commentbegin}{\enskip\enskip\{\enskip}
\renewcommand{\commentend}{\}}

\newcommand{\Set}{\texttt{Set}}

\newcommand{\CatC}{\mathbb{C}}
\newcommand{\CatD}{\mathbb{D}}

\newcommand{\CatEndo}{\CatC^\CatC}
\newcommand{\CatEndof}{\texttt{\itshape Endo}_f(\CatC)}

\newcommand{\CatEC}{\CatEndof\times\CatC}
\newcommand{\abracket}[1]{\langle #1 \rangle}
\newcommand{\tuple}[1]{\langle #1 \rangle}
\newcommand{\CatRes}[1][E]{\texttt{\itshape Res}(#1)}

\newcommand{\Id}{\texttt{\itshape Id}}

\newcommand{\previous}{\triangleleft}
\newcommand{\later}{\triangleright}
\newcommand{\Nat}{\mathbb{N}}

\newcommand{\ladjunct}[1]{\lceil #1 \rceil}

\newcommand{\radjunct}[1]{\lfloor #1 \rfloor}

\newcommand{\identity}{\mathit{id}}

\newcommand{\counit}{\epsilon}

\newcommand{\Ran}[1]{\texttt{Ran}_{#1}}

\newcommand{\Lan}[1]{\texttt{Lan}_{#1}}
\newcommand{\Free}{\texttt{\itshape Free}}
\newcommand{\Ul}{\texttt{\itshape U}}
\newcommand{\CatCN}{\CatC^{\vert\Nat\vert}}

\newcommand{\CatN}{\vert\Nat\vert}

\newcommand{\Alg}{\texttt{\itshape-Alg}}

\newcommand{\Ix}{\texttt{\itshape Ix}} 
\newcommand{\Fn}{\texttt{\itshape Fn}}
\newcommand{\FFn}{\widehat{\texttt{\itshape Free}}_\Fn}
\newcommand{\KL}{K^{{\scriptscriptstyle{\texttt{EM}}}}_{\scriptscriptstyle{\texttt{Fn}}}}
\newcommand{\KK}{K^{{\scriptscriptstyle{\texttt{Fn}}}}_{\scriptscriptstyle{\texttt{Ix}}}}
\newcommand{\Kb}[1][Ix]{K^{\scriptscriptstyle{\texttt{#1}}}_{\scriptscriptstyle{\texttt{EM}}}}

\newcommand{\lfix}[2]{\mu #1.\;#2}
\newcommand{\iso}{\cong}
\newcommand{\iniso}{\ensuremath{\mathit{in}}}
\newcommand{\inOp}{\ensuremath{\mathit{in}^{\circ}}}
\newcommand{\vcomp}{\mathop{\cdot}}
\newcommand{\hcomp}{\mathop{\circ}}
\newcommand{\upC}{\mathop{\uparrow}}
\newcommand{\downC}{\mathop{\downarrow}}
\newcommand{\interp}[2]{\textit{handle}_{#1}\;#2}
\newcommand{\interpName}{\textit{handle}}

\newcommand{\op}{\mathit{op}}

\newcommand{\cata}[1]{\llparenthesis #1 \rrparenthesis}

\newcommand{\varr}{\mathit{var}}

\let\llncsproof\proof
\renewcommand{\proof}[1][]{%
  \ifx!#1!\else\renewcommand{\proofname}{#1}\fi
  \llncsproof
}

\spnewtheorem*{corollary*}{Corollary}{\normalfont\bfseries}{\itshape}
\spnewtheorem*{theorem*}{Theorem}{\normalfont\bfseries}{\itshape}
\spnewtheorem*{lemma*}{Lemma}{\normalfont\bfseries}{\itshape}

\newcommand{\GADTs}{\textsc{gadt}s}
\newcommand{\aset}[1]{\{\,#1\,\}}

\newcommand{\refapp}[1]{Appendix~\ref{#1}}

\newenvironment{codepage}[1][.9\linewidth]
  { 
    \begin{minipage}{#1}
    \vspace{-\abovedisplayskip} }
  { \vspace{-\belowdisplayskip}\vspace{-0.7\baselineskip}
    \end{minipage} }

\newenvironment{nscenter}
 {\parskip=0pt\par\centering\vspace{-0.2\abovedisplayskip}}
 {\par\noindent\ignorespacesafterend\vspace{-0.2\belowdisplayskip}}

\begin{document}

\title{Structured Handling of Scoped Effects: Extended Version}

\authorrunning{Yang et al.}

\author{
 Zhixuan Yang \inst{1} {\href{mailto:s.yang20@imperial.ac.uk}{\Envelope}}\ \orcidlink{0000-0001-5573-3357} \and
 Marco Paviotti \inst{1}\orcidlink{0000-0002-1513-0807} \and
 Nicolas Wu \inst{1}\orcidlink{0000-0002-4161-985X} \and
 Birthe {van den Berg}\inst{2}\orcidlink{0000-0002-0088-9546} \and
 Tom Schrijvers \inst{2}\orcidlink{0000-0001-8771-5559}
}
\institute{
Imperial College London, London, United Kingdom \\ \email{\{s.yang20,m.paviotti,n.wu\}@imperial.ac.uk} \and
KU Leuven, Leuven, Belgium \\\email{\{birthe.vandenberg,tom.schrijvers\}@kuleuven.be} 
}

\maketitle

\begin{abstract}

Algebraic effects offer a versatile framework that covers a wide
variety of effects.  However, the family of operations that
delimit scopes are not algebraic and are usually modelled as handlers, 
thus preventing them from being used freely in conjunction with algebraic
operations. Although proposals for scoped operations exist,
they are either ad-hoc and unprincipled, or too inconvenient for practical
programming. This paper provides the best of both worlds: a
theoretically-founded model of scoped effects that is convenient for
implementation and reasoning. 
Our new model is based on an adjunction between a locally finitely presentable
category and a category of \emph{functorial algebras}.
Using comparison functors between adjunctions, we show that our new model, an
existing indexed model, and a third approach that simulates scoped operations in
terms of algebraic ones have equal expressivity for handling scoped operations. 
We consider our new model to be the sweet spot between ease of implementation
and structuredness.
Additionally, our approach automatically induces fusion laws of handlers of
scoped effects, which are useful for reasoning and optimisation.

\keywords{Computational effects \and Category theory \and Haskell \and
Algebraic theories \and Scoped effects \and Handlers \and Abstract syntax}

\end{abstract}

\section{Introduction}

For a long time, monads~\cite{Moggi95,Spivey1990,Wadler95} have been the go-to
approach for purely functional modelling of and programming with side effects.
However, in recent years an alternative approach, \emph{algebraic effects}
\cite{fossacs/PlotkinP02}, is gaining more traction.  A big breakthrough has been the
introduction of \emph{handlers} \cite{PlotPret13Hand}, which has made
algebraic effects suitable for programming and has led to numerous dedicated
languages and libraries implementing algebraic effects and handlers.
In comparison to monads, algebraic effects provide a more modular approach to
computations with effects, in which the syntax and semantics of effects are
separated---computations invoking algebraic operations can be defined
syntactically, and the semantics of operations are given by handlers separately
in possibly many ways.

A disadvantage of algebraic effects is that they are less expressive than
monads; not all effects can be easily expressed or composed within their
confines.
For instance, operations like \ensuremath{\Varid{catch}} for exception handling, \ensuremath{\Varid{spawn}} for
parallel composition of processes, or \ensuremath{\Varid{once}} for restricting nondeterminism are
not conventional algebraic operations; instead they delimit a computation
within their scope.
Such operations are usually modelled as handlers, but the problem is that they
cannot be freely used amongst other algebraic operations:
when a handler implementing a scoped operation is applied to a
computation, the computation is transformed from a syntactic tree of algebraic
operations into some semantic model implementing the scoped operation. Consequently, all 
subsequent operations on the computation can only be given in the particular 
semantic model rather than as mere syntactic operations,
thus nullifying the crucial advantage of modularity when separating syntax
and semantics of effects.

To remedy the situation, Wu et al.\ \cite{WuSH14} proposed a practical, but ad-hoc,
generalization of algebraic effects in Haskell that encompasses scoped effects,
that has been adopted by several algebraic effects libraries~\cite{fused-effects,polysemy,eff}.
More recently, Pir\'{o}g et al.\ \cite{PirogSWJ18} sought to put this ad-hoc approach for scoped
effects on the same formal footing as algebraic effects.
Their solution resulted in a construction based on a level-indexed category,
called \emph{indexed algebras}, as the way to give semantics to scoped effects.
However, this formalisation introduces a disparity between syntax and semantics
that makes indexed algebras not as {structured} as the programs they interpret,
where they use an ad-hoc hybrid fold that requires indexing for the handlers,
but not for the program syntax.
Moreover, indexed algebras are not ideal for widespread implementation as they
require dependent typing, in at least a limited form like
\GADTs~\cite{JohannG08}.

This paper presents a more structured way of handling scoped effects, which we
call \emph{functorial algebras}.
They are principled and formally grounded on category theory, and at
the same time more structured than the indexed algebras of
Pir\'{o}g et al.\ \cite{PirogSWJ18}, in the sense that the structure of functorial algebras
directly follows the abstract syntax of programs with scoped effects.
Additionally, our approach can be practically implemented without the need for
dependent types or \GADTs, making it available for a wider range of programming
languages.  In particular, we make the following contributions:

\begin{itemize}[topsep=0pt]
\item We highlight the loss of modularity when modelling scoped operations as
handlers and sketch how the problem is solved using functorial algebras in
Haskell, along with a number of programming examples
(\autoref{sec:working});

\item We develop a category-theoretic foundation of {functorial algebras}
as a notion of handlers of scoped effects.  Specifically, we show that there is
an adjunction between functorial algebras and a base category, inducing the
monad modelling the syntax of scoped effects (\autoref{sec:foundation});

\item We show that the expressivity of functorial algebras, Pir\'{o}g et al.\ \cite{PirogSWJ18}'s
indexed algebras, and simulating scoped effects with algebraic operations and
recursion are equal, by constructing interpretation-preserving functors
between the three categories of algebras (\autoref{sec:comparing});

\item We present the fusion law of functorial algebras, which is useful for
reasoning and optimisation. 
The fusion law directly follows from the naturality of the
adjunction underlying functorial algebras (\autoref{sec:hybrid-adjoint}).
\end{itemize}

\noindent
Finally, we discuss related work (\autoref{sec:related}) and conclude (\autoref{sec:conclusion}).

\section{Scoped Effects for the Working Programmer}
\label{sec:working}
We start with a recap of \emph{handlers of algebraic effects} 
(\autoref{sec:alg}), and then we highlight the loss of modularity when
modelling non-algebraic effectful operations as handlers
(\autoref{sec:problem}).
We then show how the problem is solved by modelling them as \emph{scoped
operations} and handling them with \emph{functorial algebras} in Haskell
(\autoref{sec:overview}), whose categorical foundation will be developed later.

\subsection{Handlers of Algebraic Effects} \label{sec:alg}

For the purpose of demonstration, in this section we base our discussion on a
simplistic implementation of effect handlers in Haskell using \emph{free
monads}, although the problem with effect handlers highlighted in this section
applies to other more practical implementations of effect handlers, either as
libraries (e.g.\ \cite{Oleg15Freer,Kammar13}) or standalone languages (e.g.\
\cite{BP14Eff,Koka17,Lin17DoBe}).

Following Plotkin and Pretnar \cite{PlotPret13Hand}, computational effects, such as
exceptions, mutable state, and nondeterminism, are described by
\emph{signatures} of primitive effectful operations.
Signatures can be abstractly represented by Haskell functors:
\[\ensuremath{\mathbf{class}\;\Conid{Functor}\;\Varid{f}\;\mathbf{where}\;\Varid{fmap}\mathbin{::}(\Varid{a}\to \Varid{b})\to \Varid{f}\;\Varid{a}\to \Varid{f}\;\Varid{b}}\]
The following functor \ensuremath{\Conid{ES}} (with the evident \ensuremath{\Conid{Functor}} instance) 
is the signature of three operations: throwing an exception, writing and reading
an \ensuremath{\Conid{Int}}-state:
\begin{equation}\label{eq:sig:es}
\ensuremath{\mathbf{data}\;\Varid{ES}\;\Varid{x}\mathrel{=}\Conid{Throw}\mid \Conid{Put}\;\Conid{Int}\;\Varid{x}\mid \Conid{Get}\;(\Conid{Int}\to \Varid{x})}
\end{equation}
Typically, a constructor of a signature functor \ensuremath{\Sigma} has a type
isomorphic to \ensuremath{\Conid{P}\to (\Conid{R}\to \Varid{x})\to \Sigma\;\Varid{x}} for some types \ensuremath{\Conid{P}} and \ensuremath{\Conid{R}}.
As in (\ref{eq:sig:es}), the types of the three constructors are isomorphic to 
\ensuremath{\Conid{Throw}\mathbin{::}()\to (\Conid{Void}\to \Varid{x})\to \Varid{ES}\;\Varid{x}}, \ensuremath{\Conid{Put}\mathbin{::}\Conid{Int}\to (()\to \Varid{x})\to \Varid{ES}\;\Varid{x}} 
and \ensuremath{\Conid{Get}\mathbin{::}()\to (\Conid{Int}\to \Varid{x})\to \Varid{ES}\;\Varid{x}} respectively where \ensuremath{\Conid{Void}} is the empty type.
Each constructor of a signature functor \ensuremath{\Sigma} is thought of as an \emph{operation}
that takes a parameter of type \ensuremath{\Conid{P}} and produces a result of type \ensuremath{\Conid{R}}, or
equivalently, has \ensuremath{\Conid{R}}-many possible ways to continue the computation after the
operation.
Given any (signature) functor \ensuremath{\Sigma}, computations invoking operations from
\ensuremath{\Sigma} are modelled by the following datatype, called the \emph{free monad}
of \ensuremath{\Sigma},
\begin{nscenter}\begin{hscode}\SaveRestoreHook
\column{B}{@{}>{\hspre}l<{\hspost}@{}}%
\column{E}{@{}>{\hspre}l<{\hspost}@{}}%
\>[B]{}\mathbf{data}\;\Conid{Free}\;\Sigma\;\Varid{a}\mathrel{=}\Conid{Return}\;\Varid{a}\mid \Conid{Call}\;(\Sigma\;(\Conid{Free}\;\Sigma\;\Varid{a})){}\<[E]%
\ColumnHook
 \end{hscode}\resethooks
\end{nscenter}
whose first case represents a computation that just \emph{return}s a value,
and the second case represents a computation \emph{call}ing an operation from 
\ensuremath{\Sigma} with more \ensuremath{\Conid{Free}\;\Sigma\;\Varid{a}} subterms as arguments, which are
understood as the continuation of the computation after this call, depending on
the outcome of this operation.

The inductive datatype \ensuremath{\Conid{Free}\;\Sigma\;\Varid{a}} comes with a \emph{recursion principle}:
\begin{nscenter}\begin{hscode}\SaveRestoreHook
\column{B}{@{}>{\hspre}l<{\hspost}@{}}%
\column{26}{@{}>{\hspre}l<{\hspost}@{}}%
\column{E}{@{}>{\hspre}l<{\hspost}@{}}%
\>[B]{}\Varid{handle}\mathbin{::}(\Sigma\;\Varid{b}\to \Varid{b})\to (\Varid{a}\to \Varid{b})\to \Conid{Free}\;\Sigma\;\Varid{a}\to \Varid{b}{}\<[E]%
\\
\>[B]{}\Varid{handle}\;\Varid{alg}\;\Varid{g}\;(\Conid{Return}\;\Varid{x}){}\<[26]%
\>[26]{}\mathrel{=}\Varid{g}\;\Varid{x}{}\<[E]%
\\
\>[B]{}\Varid{handle}\;\Varid{alg}\;\Varid{g}\;(\Conid{Call}\;\Varid{op}){}\<[26]%
\>[26]{}\mathrel{=}\Varid{alg}\;(\Varid{fmap}\;(\Varid{handle}\;\Varid{alg}\;\Varid{g})\;\Varid{op}){}\<[E]%
\ColumnHook
 \end{hscode}\resethooks
\end{nscenter}
which folds a tree of operations \ensuremath{\Conid{Free}\;\Sigma\;\Varid{a}} into a type \ensuremath{\Varid{b}}, providing
a way \ensuremath{\Sigma\;\Varid{b}\to \Varid{b}}, usually called a \emph{\ensuremath{\Sigma}-algebra}, to perform
operations from \ensuremath{\Sigma} on \ensuremath{\Varid{b}} and a way \ensuremath{\Varid{a}\to \Varid{b}} to transform the returned
type \ensuremath{\Varid{a}} of computations to \ensuremath{\Varid{b}}.
The function \ensuremath{\Varid{handle}} can be used to give \ensuremath{\Conid{Free}\;\Sigma} a monad instance:
\begin{equation*}
\begin{codepage}[0.30\linewidth]
\begin{nscenter}\begin{hscode}\SaveRestoreHook
\column{B}{@{}>{\hspre}l<{\hspost}@{}}%
\column{E}{@{}>{\hspre}l<{\hspost}@{}}%
\>[B]{}\Varid{return}\mathbin{::}\Varid{a}\to \Conid{Free}\;\Sigma\;\Varid{a}{}\<[E]%
\\
\>[B]{}\Varid{return}\mathrel{=}\Conid{Return}{}\<[E]%
\ColumnHook
 \end{hscode}\resethooks
\end{nscenter}
\end{codepage}
\ 
\begin{codepage}[0.70\linewidth]
\begin{nscenter}\begin{hscode}\SaveRestoreHook
\column{B}{@{}>{\hspre}l<{\hspost}@{}}%
\column{E}{@{}>{\hspre}l<{\hspost}@{}}%
\>[B]{}(\bind )\mathbin{::}\Conid{Free}\;\Sigma\;\Varid{a}\to (\Varid{a}\to \Conid{Free}\;\Sigma\;\Varid{b})\to \Conid{Free}\;\Sigma\;\Varid{b}{}\<[E]%
\\
\>[B]{}\Varid{m}\bind \Varid{k}\mathrel{=}\Varid{handle}\;\Conid{Call}\;\Varid{k}\;\Varid{m}{}\<[E]%
\ColumnHook
 \end{hscode}\resethooks
\end{nscenter}
\end{codepage}
\label{eq:freeMonadInst}
\end{equation*}
The monadic instance allows the programmer to build effectful computations using
the \ensuremath{\mathbf{do}}-notation in a clean way.
For example, the following program updates the state \ensuremath{\Varid{s}} to \ensuremath{\Varid{n}\mathbin{/}\Varid{s}} for some \ensuremath{\Varid{n}\mathbin{::}\Conid{Int}}, and throws an exception when \ensuremath{\Varid{s}} is \ensuremath{\mathrm{0}}:
\begin{equation*}
\setlength\mathindent{0pt}
\begin{codepage}[\linewidth]
\begin{nscenter}\begin{hscode}\SaveRestoreHook
\column{B}{@{}>{\hspre}l<{\hspost}@{}}%
\column{20}{@{}>{\hspre}l<{\hspost}@{}}%
\column{41}{@{}>{\hspre}l<{\hspost}@{}}%
\column{E}{@{}>{\hspre}l<{\hspost}@{}}%
\>[B]{}\Varid{safeDiv}\mathbin{::}\Conid{Int}\to \Conid{Free}\;\Varid{ES}\;\Conid{Int}{}\<[E]%
\\
\>[B]{}\Varid{safeDiv}\;\Varid{n}\mathrel{=}\mathbf{do}\;{}\<[20]%
\>[20]{}\Varid{s}\leftarrow \Varid{get};\mathbf{if}\;\Varid{s}\equiv \mathrm{0}\;{}\<[41]%
\>[41]{}\mathbf{then}\;\Conid{Call}\;\Conid{Throw}{}\<[E]%
\\
\>[41]{}\mathbf{else}\;\mathbf{do}\;\{\mskip1.5mu \Varid{put}\;(\Varid{n}\mathbin{/}\Varid{s});\Varid{return}\;(\Varid{n}\mathbin{/}\Varid{s})\mskip1.5mu\}{}\<[E]%
\ColumnHook
 \end{hscode}\resethooks
\end{nscenter}
\end{codepage}
\end{equation*}
where the auxiliary wrapper functions (the so-called \emph{smart constructors}
in the Haskell community) that invoke \ensuremath{\Conid{Call}} appropriately are
\[
\ensuremath{\Varid{get}} = \ensuremath{\Conid{Call}\;(\Conid{Get}\;\Conid{Return})} \hspace{1cm} \ensuremath{\Varid{put}\;\Varid{n}} = \ensuremath{\Conid{Call}\;(\Conid{Put}\;\Varid{n}\;(\Conid{Return}\;()))}
\]

The free monad merely models effectful computations \emph{syntactically}
without specifying how these operations are actually implemented.
Indeed, the program \ensuremath{\Varid{safeDiv}} above is defined without saying how mutable
state and exceptions are implemented at all.
To actually give useful semantics to programs built with free monads, the
programmer uses the \ensuremath{\Varid{handle}} function above to interpret programs with
\ensuremath{\Sigma}-algebras, which are called \emph{handlers} in this context.

For example, given a program \ensuremath{\Varid{r}\mathbin{::}\Conid{Free}\;\Varid{ES}\;\Varid{a}} for some \ensuremath{\Varid{a}}, 
a handler \ensuremath{\Varid{catchHdl}\;\Varid{r}\mathbin{::}\Conid{ES}\;(\Conid{Free}\;\Varid{ES})\to \Conid{Free}\;\Varid{ES}} that gives
the usual semantics to \ensuremath{\Varid{throw}} is
\begin{equation}\label{eq:catchHdl}
\begin{codepage}
\begin{nscenter}\begin{hscode}\SaveRestoreHook
\column{B}{@{}>{\hspre}l<{\hspost}@{}}%
\column{58}{@{}>{\hspre}l<{\hspost}@{}}%
\column{E}{@{}>{\hspre}l<{\hspost}@{}}%
\>[B]{}\Varid{catchHdl}\mathbin{::}\Conid{Free}\;\Varid{ES}\;\Varid{a}\to \Varid{ES}\;(\Conid{Free}\;\Varid{ES}\;\Varid{a})\to \Conid{Free}\;\Varid{ES}\;\Varid{a}{}\<[E]%
\\
\>[B]{}\Varid{catchHdl}\;\Varid{r}\;\Conid{Throw}\mathrel{=}\Varid{r}; \hspace{2em}\Varid{catchHdl}\;\Varid{r}\;\Varid{op}{}\<[58]%
\>[58]{}\mathrel{=}\Conid{Call}\;\Varid{op}{}\<[E]%
\ColumnHook
 \end{hscode}\resethooks
\end{nscenter}
\end{codepage}
\end{equation}
which evaluates \ensuremath{\Varid{r}} for \ensuremath{\Varid{r}}ecovery in case of throwing an exception, and leaves
other operations untouched in the free monad.
An important advantage of the approach of effect handlers is that different
semantics of a computational effect can be given by different handlers.
For example, suppose that in some scenario one would like to interpret exceptions
as unrecoverable errors and stop the execution of the program when an exception
is raised.
Then the following handler can be defined for this behaviour:
\begin{equation}\label{eq:catchHdl2}
\setlength\mathindent{0pt}
\begin{codepage}
\begin{nscenter}\begin{hscode}\SaveRestoreHook
\column{B}{@{}>{\hspre}l<{\hspost}@{}}%
\column{12}{@{}>{\hspre}l<{\hspost}@{}}%
\column{20}{@{}>{\hspre}l<{\hspost}@{}}%
\column{74}{@{}>{\hspre}l<{\hspost}@{}}%
\column{E}{@{}>{\hspre}l<{\hspost}@{}}%
\>[B]{}\Varid{catchHdl'}{}\<[12]%
\>[12]{}\mathbin{::}\Conid{Free}\;\Varid{ES}\;\Varid{a}\to \Varid{ES}\;(\Conid{Free}\;\Varid{ES}\;(\Conid{Maybe}\;\Varid{a}))\to \Conid{Free}\;\Varid{ES}\;(\Conid{Maybe}\;\Varid{a}){}\<[E]%
\\
\>[B]{}\Varid{catchHdl'}\;\Varid{r}\;\Conid{Throw}{}\<[20]%
\>[20]{}\mathrel{=}\Varid{return}\;\Conid{Nothing}; \hspace{1em}\Varid{catchHdl'}\;\Varid{r}\;\Varid{op}{}\<[74]%
\>[74]{}\mathrel{=}\Conid{Call}\;\Varid{op}{}\<[E]%
\ColumnHook
 \end{hscode}\resethooks
\end{nscenter}
\end{codepage}
\end{equation}
As expected, applying these two handlers to the program \ensuremath{\Varid{safeDiv}\;\mathrm{5}}
produces different results (of types \ensuremath{\Conid{Free}\;\Varid{ES}\;\Conid{Int}} and \ensuremath{\Conid{Free}\;\Varid{ES}\;(\Conid{Maybe}\;\Conid{Int})} respectively):
\begin{equation*}
\setlength\mathindent{0pt}
\begin{codepage}[\linewidth]
\begin{nscenter}\begin{hscode}\SaveRestoreHook
\column{B}{@{}>{\hspre}c<{\hspost}@{}}%
\column{BE}{@{}l@{}}%
\column{4}{@{}>{\hspre}l<{\hspost}@{}}%
\column{9}{@{}>{\hspre}l<{\hspost}@{}}%
\column{31}{@{}>{\hspre}l<{\hspost}@{}}%
\column{52}{@{}>{\hspre}l<{\hspost}@{}}%
\column{E}{@{}>{\hspre}l<{\hspost}@{}}%
\>[4]{}\Varid{handle}\;(\Varid{catchHdl}\;(\Varid{return}\;\mathrm{42}))\;\Varid{return}\;(\Varid{safeDiv}\;\mathrm{5}){}\<[E]%
\\
\>[B]{}\mathrel{=}{}\<[BE]%
\>[4]{}\mathbf{do}\;{}\<[9]%
\>[9]{}\Varid{s}\leftarrow \Varid{get};\mathbf{if}\;\Varid{s}\equiv \mathrm{0}\;\mathbf{then}\;\Varid{return}\;\mathrm{42}\;\mathbf{else}\;\mathbf{do}\;\{\mskip1.5mu \Varid{put}\;(\Varid{n}\mathbin{/}\Varid{s});\Varid{return}\;(\Varid{n}\mathbin{/}\Varid{s})\mskip1.5mu\}{}\<[E]%
\\[\blanklineskip]%
\>[4]{}\Varid{handle}\;(\Varid{catchHdl'}\;(\Varid{return}\;\mathrm{42}))\;(\Varid{return}\hsdot{\cdot}{.\ }\Conid{Just})\;{}\<[52]%
\>[52]{}(\Varid{safeDiv}\;\mathrm{5}){}\<[E]%
\\
\>[B]{}\mathrel{=}{}\<[BE]%
\>[4]{}\mathbf{do}\;{}\<[9]%
\>[9]{}\Varid{s}\leftarrow \Varid{get};\mathbf{if}\;\Varid{s}\equiv \mathrm{0}\;{}\<[31]%
\>[31]{}\mathbf{then}\;\Varid{return}\;\Conid{Nothing}{}\<[E]%
\\
\>[31]{}\mathbf{else}\;\mathbf{do}\;\{\mskip1.5mu \Varid{put}\;(\Varid{n}\mathbin{/}\Varid{s});\Varid{return}\;(\Conid{Just}\;(\Varid{n}\mathbin{/}\Varid{s}))\mskip1.5mu\}{}\<[E]%
\ColumnHook
 \end{hscode}\resethooks
\end{nscenter}
\end{codepage}
\end{equation*}
Note that exception \emph{throwing} and \emph{catching} are modelled
differently in the approach of algebraic effects and handlers, one as an
operation in the signature \ensuremath{\Varid{ES}} and one as a handler, although it is
natural to expect both of them to be operations of \emph{the effect of exceptions}.
This asymmetry results from the fact that exception catching is \emph{not
algebraic}:
if \ensuremath{\Varid{catch}} was modelled as a binary operation in the signature, then the
monadic bind \ensuremath{\bind } of the free monad earlier, which intuitively means sequential
composition of programs, would imply that \ensuremath{(\Varid{catch}\;\Varid{r}\;\Varid{p})\bind \Varid{k}\mathrel{=}\Varid{catch}\;(\Varid{r}\bind \Varid{k})\;(\Varid{p}\bind \Varid{k})}, which is semantically undesirable.
Thus the perspective of Plotkin and Pretnar \cite{PlotPret13Hand} is that non-algebraic operations
like \ensuremath{\Varid{catch}} should be deemed different from algebraic operations, and they
can be modelled as handlers (of algebraic operations).

\subsection{Scoped Operations as Handlers Are Not Modular}\label{sec:problem}

However, this treatment of non-algebraic operations leads to a somewhat
subtle complication:
as observed by Wu et al.\ \cite{WuSH14}, when non-algebraic operations (such as \ensuremath{\Varid{catch}})
are modelled with handlers, these handlers play a dual role of (i)~modelling
the syntax of the operation (the scope for which exceptions are caught
by \ensuremath{\Varid{catch}}) and (ii) giving semantics to it (when an exception is caught, run
the recovery program).
To see the problem more concretely, ideally one would like to have a
syntactic operation \ensuremath{\Varid{catch}} of the following type that acts on computations
without giving specific semantics a priori,
\[
\ensuremath{\Varid{catch}\mathbin{::}\Conid{Free}\;\Varid{ES}\;\Varid{a}\to \Conid{Free}\;\Varid{ES}\;\Varid{a}\to \Conid{Free}\;\Varid{ES}\;\Varid{a}}
\]
allowing to write programs like
\begin{equation}\label{eq:prog}
\ensuremath{\Varid{prog}\mathrel{=}\mathbf{do}\;\{\mskip1.5mu \Varid{x}\leftarrow \Varid{catch}\;(\Varid{safeDiv}\;\mathrm{5})\;(\Varid{return}\;\mathrm{42});\Varid{put}\;(\Varid{x}\mathbin{+}\mathrm{1})\mskip1.5mu\}}
\end{equation}
and the semantics of (both algebraic and non-algebraic) operations in
\ensuremath{\Varid{prog}} can be given separately by handlers.
Unfortunately, when \ensuremath{\Varid{catch}} is modelled as handlers \ensuremath{\Varid{catchHdl}} or
\ensuremath{\Varid{catchHdl'}} as in the last subsection, the program \ensuremath{\Varid{prog}} must be written
differently depending on which handler is used:
\begin{equation*}
\setlength\mathindent{0pt}
\begin{codepage}[\linewidth]
\begin{nscenter}\begin{hscode}\SaveRestoreHook
\column{B}{@{}>{\hspre}l<{\hspost}@{}}%
\column{8}{@{}>{\hspre}l<{\hspost}@{}}%
\column{12}{@{}>{\hspre}l<{\hspost}@{}}%
\column{25}{@{}>{\hspre}l<{\hspost}@{}}%
\column{27}{@{}>{\hspre}l<{\hspost}@{}}%
\column{67}{@{}>{\hspre}l<{\hspost}@{}}%
\column{E}{@{}>{\hspre}l<{\hspost}@{}}%
\>[8]{}\mathbf{do}\;{}\<[12]%
\>[12]{}\Varid{x}\leftarrow \Varid{handle}\;{}\<[25]%
\>[25]{}(\Varid{catchHdl}\;(\Varid{return}\;\mathrm{42}))\;\Varid{return}\;(\Varid{safeDiv}\;\mathrm{5});\Varid{put}\;(\Varid{x}\mathbin{+}\mathrm{1}){}\<[E]%
\\[\blanklineskip]%
\>[B]{}\text{vs.}\quad\;{}\<[8]%
\>[8]{}\mathbf{do}\;{}\<[12]%
\>[12]{}\Varid{xMb}\leftarrow \Varid{handle}\;(\Varid{catchHdl'}\;(\Varid{return}\;\mathrm{42}))\;(\Varid{return}\hsdot{\cdot}{.\ }\Conid{Just})\;{}\<[67]%
\>[67]{}(\Varid{safeDiv}\;\mathrm{5}){}\<[E]%
\\
\>[12]{}\mathbf{case}\;\Varid{xMb}\;\mathbf{of}\;\{\mskip1.5mu {}\<[27]%
\>[27]{}\Conid{Nothing}\to \Varid{return}\;\Conid{Nothing}{}\<[E]%
\\
\>[27]{}(\Conid{Just}\;\Varid{x})\to \mathbf{do}\;\Varid{r}\leftarrow \Varid{put}\;(\Varid{x}\mathbin{+}\mathrm{1});\Varid{return}\;(\Conid{Just}\;\Varid{r})\mskip1.5mu\}{}\<[E]%
\ColumnHook
 \end{hscode}\resethooks
\end{nscenter}
\end{codepage}
\end{equation*}
The issue is that these handlers interpret the operation \ensuremath{\Varid{catch}} in different
semantic models,  \ensuremath{\Conid{Free}\;\Varid{ES}\;\Varid{a}} and \ensuremath{\Conid{Free}\;\Varid{ES}\;(\Conid{Maybe}\;\Varid{a})},
and this affects both the value \ensuremath{\Varid{x}} that is returned, and the way the
subsequent \ensuremath{\Varid{put}} is expressed.
Therefore, non-algebraic operation \ensuremath{\Varid{catch}} modelled as handlers is not as
modular as algebraic operations, weakening the advantage of programming with
algebraic effects.

\subsection{Scoped Effects and Functorial Algebras}\label{sec:overview}

Now we present an overview of a solution to the problem highlighted above by
modelling exception catching as \emph{scoped effects} \cite{PirogSWJ18} and
handle them using \emph{functorial algebras}, which will be more formally
developed in later sections.

\paragraph{Syntax of Scoped Operations}
To achieve modularity for (non-algebraic) operations delimiting scopes, such as
\ensuremath{\Varid{catch}}, which are called \emph{scoped operations}, Pir\'{o}g et al.\ \cite{PirogSWJ18}
generalise the free monad \ensuremath{\Conid{Free}\;\Sigma} to a monad \ensuremath{\Conid{Prog}\;\Sigma\;\Gamma}
accommodating both algebraic and scoped operations.
The monad is parameterised by two functors \ensuremath{\Sigma} and \ensuremath{\Gamma}, called the
\emph{algebraic signature} and the \emph{scoped signature} respectively.
The intention is that a constructor \ensuremath{\Conid{Op}\mathbin{::}(\Conid{R}\to \Varid{x})\to \Sigma\;\Varid{x}} of the algebraic
signature represents an algebraic operation \ensuremath{\Conid{Op}} producing an \ensuremath{\Conid{R}}-value
as usual, whereas a constructor \ensuremath{\Conid{Sc}\mathbin{::}(\Conid{N}\to \Varid{x})\to \Gamma\;\Varid{x}} of the scoped signature
represents a scoped operation \ensuremath{\Conid{Sc}} creating \ensuremath{\Conid{N}}-many scopes enclosing programs.

\begin{example}\label{ex:exc:sig}
As in the previous subsection,
the effect of \emph{exceptions} has an algebraic operation for \emph{throwing}
exceptions, which produces no values, and a scoped operation for
\emph{catching} exceptions, which creates two scopes, one enclosing the program
for which exceptions are caught, and the other enclosing the recovery
computation.
Thus the algebraic and scoped signatures are respectively
\begin{equation}
\ensuremath{\mathbf{data}\;\Conid{Throw}\;\Varid{x}\mathrel{=}\Conid{Throw}}  \hspace{2cm} \ensuremath{\mathbf{data}\;\Conid{Catch}\;\Varid{x}\mathrel{=}\Conid{Catch}\;\Varid{x}\;\Varid{x}}
\end{equation}
\end{example}

\begin{example}\label{ex:ndet:sig}
An effect of \emph{explicit nondeterminism} has two algebraic operations for
nondeterministic choice and a scoped operation \ensuremath{\Conid{Once}}:
\begin{equation}
\ensuremath{\mathbf{data}\;\Conid{Choice}\;\Varid{x}\mathrel{=}\Conid{Fail}\mid \Conid{Or}\;\Varid{x}\;\Varid{x}} \hspace{2cm} \ensuremath{\mathbf{data}\;\Conid{Once}\;\Varid{x}\mathrel{=}\Conid{Once}\;\Varid{x}}
\end{equation}
The intention is that this effect implements logic programming \cite{Hin98Pro}---solutions to 
a problem are exhaustively searched:
operation \ensuremath{\Conid{Or}\;\Varid{p}\;\Varid{q}} splits a search branch into two;
\ensuremath{\Conid{Fail}} marks a failed branch;
and the scoped operation \ensuremath{\Conid{Once}\;\Varid{p}} keeps only the first solution found by \ensuremath{\Varid{p}}, making
it \emph{semi-deterministic}, which is useful for speeding up the search with
heuristics from the programmer.
\end{example}

Similar to the free monad, the \ensuremath{\Conid{Prog}} monad models the syntax of computations
invoking operations from \ensuremath{\Sigma} and \ensuremath{\Gamma}:
\begin{equation}\label{eq:prog:monad}
\begin{codepage}
\begin{nscenter}\begin{hscode}\SaveRestoreHook
\column{B}{@{}>{\hspre}l<{\hspost}@{}}%
\column{28}{@{}>{\hspre}c<{\hspost}@{}}%
\column{28E}{@{}l@{}}%
\column{31}{@{}>{\hspre}l<{\hspost}@{}}%
\column{E}{@{}>{\hspre}l<{\hspost}@{}}%
\>[B]{}\mathbf{data}\;\Conid{Prog}\;\Sigma\;\Gamma\;\Varid{a}{}\<[28]%
\>[28]{}\mathrel{=}{}\<[28E]%
\>[31]{}\Conid{Return}\;\Varid{a}\mid \Conid{Call}\;(\Sigma\;(\Conid{Prog}\;\Sigma\;\Gamma\;\Varid{a})){}\<[E]%
\\
\>[28]{}\mid {}\<[28E]%
\>[31]{}\Conid{Enter}\;(\Gamma\;(\Conid{Prog}\;\Sigma\;\Gamma\;(\Conid{Prog}\;\Sigma\;\Gamma\;\Varid{a}))){}\<[E]%
\ColumnHook
 \end{hscode}\resethooks
\end{nscenter}
\end{codepage}
\end{equation}
Thus an element of \ensuremath{\Conid{Prog}\;\Sigma\;\Gamma\;\Varid{a}} can either (i) \emph{return} an \ensuremath{\Varid{a}}-value without
causing effects, or (ii) \emph{call} an algebraic operation in \ensuremath{\Sigma} with more
subterms of \ensuremath{\Conid{Prog}\;\Sigma\;\Gamma\;\Varid{a}} as the continuation after the operation,
or (iii) \emph{enter} the scope of a scoped operation.
The third case deserves more explanation:
the first \ensuremath{\Conid{Prog}} in \ensuremath{(\Gamma\;(\Conid{Prog}\;\Sigma\;\Gamma\;(\Conid{Prog}\;\Sigma\;\Gamma\;\Varid{a})))}
represents the programs enclosed by the scoped operation, and the second \ensuremath{\Conid{Prog}}
represents the continuation of the program after the scoped operation, and thus
the boundary between programs inside and outside the scope is kept in the
syntax tree, which is necessary because collapsing the boundary might change
the meaning of a program.
The distinction between algebraic and scoped operations can be seen more clearly
from the monadic bind of \ensuremath{\Conid{Prog}} (the monadic return of \ensuremath{\Conid{Prog}} is just
\ensuremath{\Conid{Return}}):
\begin{equation*}\label{eq:prog:monad:instance}
\setlength\mathindent{0pt}
\begin{codepage}
\begin{nscenter}\begin{hscode}\SaveRestoreHook
\column{B}{@{}>{\hspre}l<{\hspost}@{}}%
\column{13}{@{}>{\hspre}l<{\hspost}@{}}%
\column{E}{@{}>{\hspre}l<{\hspost}@{}}%
\>[B]{}(\bind )\mathbin{::}\Conid{Prog}\;\Sigma\;\Gamma\;\Varid{a}\to (\Varid{a}\to \Conid{Prog}\;\Sigma\;\Gamma\;\Varid{b})\to \Conid{Prog}\;\Sigma\;\Gamma\;\Varid{b}{}\<[E]%
\\
\>[B]{}(\Conid{Return}\;\Varid{a}){}\<[13]%
\>[13]{}\bind \Varid{k}\mathrel{=}\Varid{k}\;\Varid{a}{}\<[E]%
\\
\>[B]{}(\Conid{Call}\;\Varid{op}){}\<[13]%
\>[13]{}\bind \Varid{k}\mathrel{=}\Conid{Call}\;(\Varid{fmap}\;(\bind \Varid{k})\;\Varid{op}){}\<[E]%
\\
\>[B]{}(\Conid{Enter}\;\Varid{sc}){}\<[13]%
\>[13]{}\bind \Varid{k}\mathrel{=}\Conid{Enter}\;(\Varid{fmap}\;(\Varid{fmap}\;(\bind \Varid{k}))\;\Varid{sc}){}\<[E]%
\ColumnHook
 \end{hscode}\resethooks
\end{nscenter}
\end{codepage}
\end{equation*}
For algebraic operations, extending the continuation \ensuremath{(\bind \Varid{k})} directly acts on
the argument to the algebraic operation, whereas for scoped operation, \ensuremath{(\bind \Varid{k})}
acts on the second layer of \ensuremath{\Conid{Prog}}.
Thus for an algebraic operation \ensuremath{\Varid{o}}, \ensuremath{(\Varid{o}\;\Varid{p})\bind \Varid{k}} and \ensuremath{\Varid{o}\;(\Varid{p}\bind \Varid{k})} have the
same representation, whereas for a scoped operation \ensuremath{\Varid{s}}, \ensuremath{(\Varid{s}\;\Varid{p})\bind \Varid{k}} and \ensuremath{\Varid{s}\;(\Varid{p}\bind \Varid{k})} have different representations, which is precisely the distinction
between algebraic and scoped operations.

The constructors \ensuremath{\Conid{Call}} and \ensuremath{\Conid{Enter}} are clumsy to work with, and for writing
programs more naturally, we define \emph{smart constructors} for operations.
Generally, for algebraic operations \ensuremath{\Conid{Op}\mathbin{::}\Conid{F}\;\Varid{x}\to \Sigma\;\Varid{x}} and scoped operations
\ensuremath{\Conid{Sc}\mathbin{::}\Conid{G}\;\Varid{x}\to \Gamma\;\Varid{x}}, the smart constructors are
\begin{nscenter}\begin{hscode}\SaveRestoreHook
\column{B}{@{}>{\hspre}l<{\hspost}@{}}%
\column{58}{@{}>{\hspre}l<{\hspost}@{}}%
\column{83}{@{}>{\hspre}l<{\hspost}@{}}%
\column{E}{@{}>{\hspre}l<{\hspost}@{}}%
\>[B]{}\Varid{op}\mathbin{::}\Conid{F}\;(\Conid{Prog}\;\Sigma\;\Gamma\;\Varid{a})\to \Conid{Prog}\;\Sigma\;\Gamma\;\Varid{a}\;{}\<[58]%
\>[58]{}\hspace{1em}\;{}\<[83]%
\>[83]{}\Varid{sc}\mathbin{::}\Conid{G}\;(\Conid{Prog}\;\Sigma\;\Gamma\;\Varid{a})\to \Conid{Prog}\;\Sigma\;\Gamma\;\Varid{a}{}\<[E]%
\\
\>[B]{}\Varid{op}\mathrel{=}\Conid{Call}\hsdot{\cdot}{.\ }\Conid{Op}\;{}\<[83]%
\>[83]{}\Varid{sc}\mathrel{=}\Conid{Enter}\hsdot{\cdot}{.\ }\Varid{fmap}\;(\Varid{fmap}\;\Varid{return})\hsdot{\cdot}{.\ }\Conid{Sc}{}\<[E]%
\ColumnHook
 \end{hscode}\resethooks
\end{nscenter}
For example, the smart constructor for \ensuremath{\Conid{Catch}} (\autoref{ex:exc:sig}) is
\begin{nscenter}\begin{hscode}\SaveRestoreHook
\column{B}{@{}>{\hspre}l<{\hspost}@{}}%
\column{8}{@{}>{\hspre}l<{\hspost}@{}}%
\column{55}{@{}>{\hspre}l<{\hspost}@{}}%
\column{E}{@{}>{\hspre}l<{\hspost}@{}}%
\>[B]{}\Varid{catch}{}\<[8]%
\>[8]{}\mathbin{::}\Conid{Prog}\;\Sigma\;\Conid{Catch}\;\Varid{a}\to \Conid{Prog}\;\Sigma\;\Conid{Catch}\;\Varid{a}{}\<[55]%
\>[55]{}\to \Conid{Prog}\;\Sigma\;\Conid{Catch}\;\Varid{a}{}\<[E]%
\\
\>[B]{}\Varid{catch}\;\Varid{h}\;\Varid{r}\mathrel{=}\Conid{Enter}\;(\Conid{Catch}\;(\Varid{fmap}\;\Varid{return}\;\Varid{h})\;(\Varid{fmap}\;\Varid{return}\;\Varid{r})){}\<[E]%
\ColumnHook
 \end{hscode}\resethooks
\end{nscenter}
With all machinery in place, 
now we can define the program (\ref{eq:prog}) using \ensuremath{\Conid{Prog}} that we could not
write with \ensuremath{\Conid{Free}}:
\[\ensuremath{\Varid{prog}\mathrel{=}\mathbf{do}\;\{\mskip1.5mu \Varid{x}\leftarrow \Varid{catch}\;(\Varid{safeDiv}\;\mathrm{5})\;(\Varid{return}\;\mathrm{42});\Varid{put}\;(\Varid{x}\mathbin{+}\mathrm{1})\mskip1.5mu\}}\]

\paragraph{Handlers of Scoped Operations}
Similar to \ensuremath{\Conid{Free}}, the \ensuremath{\Conid{Prog}} monad merely models the syntax of effectful
computations, and more useful semantics need to be given by handlers.  Although
Pir\'{o}g et al.\ \cite{PirogSWJ18} developed a notion of \emph{indexed algebras} for this
purpose, indexed algebras turn out to be more complicated than necessary (we will discuss
them in \autoref{sec:comparing}), and the contribution of this paper is
a simpler kind of handlers for scoped operations, which we call \emph{functorial
algebras}.

\begin{figure}[t]
\small
\begin{minipage}{0.4\linewidth}
\begin{nscenter}\begin{hscode}\SaveRestoreHook
\column{B}{@{}>{\hspre}l<{\hspost}@{}}%
\column{3}{@{}>{\hspre}l<{\hspost}@{}}%
\column{13}{@{}>{\hspre}l<{\hspost}@{}}%
\column{E}{@{}>{\hspre}l<{\hspost}@{}}%
\>[B]{}\mathbf{data}\;\Conid{EndoAlg}\;\Varid{\Sigma}\;\Varid{\Gamma}\;\Varid{f}\mathrel{=}\Conid{EndoAlg}\;\{\mskip1.5mu {}\<[E]%
\\
\>[B]{}\hsindent{3}{}\<[3]%
\>[3]{}\Varid{returnE}{}\<[13]%
\>[13]{}\mathbin{::}\forall \Varid{x}\hsforall \hsdot{\cdot}{.\ }\Varid{x}\to \Varid{f}\;\Varid{x},{}\<[E]%
\\
\>[B]{}\hsindent{3}{}\<[3]%
\>[3]{}\Varid{callE}{}\<[13]%
\>[13]{}\mathbin{::}\forall \Varid{x}\hsforall \hsdot{\cdot}{.\ }\Varid{\Sigma}\;(\Varid{f}\;\Varid{x})\to \Varid{f}\;\Varid{x},{}\<[E]%
\\
\>[B]{}\hsindent{3}{}\<[3]%
\>[3]{}\Varid{enterE}{}\<[13]%
\>[13]{}\mathbin{::}\forall \Varid{x}\hsforall \hsdot{\cdot}{.\ }\Varid{\Gamma}\;(\Varid{f}\;(\Varid{f}\;\Varid{x}))\to \Varid{f}\;\Varid{x}\mskip1.5mu\}{}\<[E]%
\ColumnHook
 \end{hscode}\resethooks
\end{nscenter}
\end{minipage}%
\hspace{1.5cm}
\begin{minipage}{0.4\linewidth}
\begin{nscenter}\begin{hscode}\SaveRestoreHook
\column{B}{@{}>{\hspre}l<{\hspost}@{}}%
\column{3}{@{}>{\hspre}l<{\hspost}@{}}%
\column{12}{@{}>{\hspre}c<{\hspost}@{}}%
\column{12E}{@{}l@{}}%
\column{15}{@{}>{\hspre}l<{\hspost}@{}}%
\column{23}{@{}>{\hspre}l<{\hspost}@{}}%
\column{31}{@{}>{\hspre}c<{\hspost}@{}}%
\column{31E}{@{}l@{}}%
\column{E}{@{}>{\hspre}l<{\hspost}@{}}%
\>[B]{}\mathbf{data}\;\Conid{BaseAlg}\;\Varid{\Sigma}\;\Varid{\Gamma}\;\Varid{f}\;\Varid{a}{}\<[31]%
\>[31]{}\mathrel{=}{}\<[31E]%
\\
\>[B]{}\hsindent{3}{}\<[3]%
\>[3]{}\Conid{BaseAlg}\;{}\<[12]%
\>[12]{}\{\mskip1.5mu {}\<[12E]%
\>[15]{}\Varid{callB}{}\<[23]%
\>[23]{}\mathbin{::}\Varid{\Sigma}\;\Varid{a}\to \Varid{a}{}\<[E]%
\\
\>[12]{},{}\<[12E]%
\>[15]{}\Varid{enterB}{}\<[23]%
\>[23]{}\mathbin{::}\Varid{\Gamma}\;(\Varid{f}\;\Varid{a})\to \Varid{a}\mskip1.5mu\}{}\<[E]%
\ColumnHook
 \end{hscode}\resethooks
\end{nscenter}
\vspace{5pt}
\end{minipage}
\begin{nscenter}\begin{hscode}\SaveRestoreHook
\column{B}{@{}>{\hspre}l<{\hspost}@{}}%
\column{3}{@{}>{\hspre}l<{\hspost}@{}}%
\column{7}{@{}>{\hspre}l<{\hspost}@{}}%
\column{26}{@{}>{\hspre}l<{\hspost}@{}}%
\column{35}{@{}>{\hspre}l<{\hspost}@{}}%
\column{38}{@{}>{\hspre}l<{\hspost}@{}}%
\column{E}{@{}>{\hspre}l<{\hspost}@{}}%
\>[B]{}\Varid{hcata}\mathbin{::}(\Conid{Functor}\;\Varid{\Sigma},\Conid{Functor}\;\Varid{\Gamma})\Rightarrow (\Conid{EndoAlg}\;\Varid{\Sigma}\;\Varid{\Gamma}\;\Varid{f})\to \Conid{Prog}\;\Varid{\Sigma}\;\Varid{\Gamma}\;\Varid{a}\to \Varid{f}\;\Varid{a}{}\<[E]%
\\
\>[B]{}\Varid{hcata}\;\Varid{alg}\;(\Conid{Return}\;\Varid{x}){}\<[26]%
\>[26]{}\mathrel{=}\Varid{returnE}\;\Varid{alg}\;\Varid{x}{}\<[E]%
\\
\>[B]{}\Varid{hcata}\;\Varid{alg}\;(\Conid{Call}\;\Varid{op}){}\<[26]%
\>[26]{}\mathrel{=}(\Varid{callE}\;\Varid{alg}\hsdot{\cdot}{.\ }\Varid{fmap}\;(\Varid{hcata}\;\Varid{alg}))\;\Varid{op}{}\<[E]%
\\
\>[B]{}\Varid{hcata}\;\Varid{alg}\;(\Conid{Enter}\;\Varid{scope}){}\<[26]%
\>[26]{}\mathrel{=}(\Varid{enterE}\;\Varid{alg}\hsdot{\cdot}{.\ }\Varid{fmap}\;(\Varid{hcata}\;\Varid{alg}\hsdot{\cdot}{.\ }\Varid{fmap}\;(\Varid{hcata}\;\Varid{alg})))\;\Varid{scope}{}\<[E]%
\\[\blanklineskip]%
\>[B]{}\Varid{handle}{}\<[7]%
\>[7]{}\mathbin{::}(\Conid{Functor}\;\Varid{\Sigma},\Conid{Functor}\;\Varid{\Gamma}){}\<[E]%
\\
\>[7]{}\Rightarrow (\Conid{EndoAlg}\;\Varid{\Sigma}\;\Varid{\Gamma}\;\Varid{x})\to {}\<[38]%
\>[38]{}(\Conid{BaseAlg}\;\Varid{\Sigma}\;\Varid{\Gamma}\;\Varid{x}\;\Varid{b})\to (\Varid{a}\to \Varid{b})\to \Conid{Prog}\;\Varid{\Sigma}\;\Varid{\Gamma}\;\Varid{a}\to \Varid{b}{}\<[E]%
\\
\>[B]{}\Varid{handle}\;\Varid{ealg}\;\Varid{balg}\;\Varid{gen}\;(\Conid{Return}\;\Varid{x}){}\<[35]%
\>[35]{}\mathrel{=}\Varid{gen}\;\Varid{x}{}\<[E]%
\\
\>[B]{}\Varid{handle}\;\Varid{ealg}\;\Varid{balg}\;\Varid{gen}\;(\Conid{Call}\;\Varid{op}){}\<[35]%
\>[35]{}\mathrel{=}(\Varid{callB}\;\Varid{balg}\hsdot{\cdot}{.\ }\Varid{fmap}\;(\Varid{handle}\;\Varid{ealg}\;\Varid{balg}\;\Varid{gen}))\;\Varid{op}{}\<[E]%
\\
\>[B]{}\Varid{handle}\;\Varid{ealg}\;\Varid{balg}\;\Varid{gen}\;(\Conid{Enter}\;\Varid{sc}){}\<[E]%
\\
\>[B]{}\hsindent{3}{}\<[3]%
\>[3]{}\mathrel{=}(\Varid{enterB}\;\Varid{balg}\hsdot{\cdot}{.\ }\Varid{fmap}\;(\Varid{hcata}\;\Varid{ealg}\hsdot{\cdot}{.\ }\Varid{fmap}\;(\Varid{handle}\;\Varid{ealg}\;\Varid{balg}\;\Varid{gen})))\;\Varid{sc}{}\<[E]%
\ColumnHook
 \end{hscode}\resethooks
\end{nscenter}
\vspace{-\belowdisplayskip}\vspace{-0.7\baselineskip}
\caption{A Haskell implementation of handling with functorial algebras}\label{fig:functorial_impl}
\end{figure}

Given signatures \ensuremath{\Sigma} and \ensuremath{\Gamma}, a functorial algebra for them is a
quadruple $\tuple{\ensuremath{\Varid{f}}, \ensuremath{\Varid{b}}, \ensuremath{\Varid{ealg}}, \ensuremath{\Varid{balg}}}$ for some functor \ensuremath{\Varid{f}} called the
\emph{endofunctor carrier}, type \ensuremath{\Varid{b}} called the \emph{base carrier}.
The other two components \ensuremath{\Varid{ealg}\mathbin{::}\Conid{EndoAlg}\;\Sigma\;\Gamma\;\Varid{f}} and \ensuremath{\Varid{balg}\mathbin{::}\Conid{BaseAlg}\;\Sigma\;\Gamma\;\Varid{f}\;\Varid{b}} are called the \emph{endofunctor algebra} and the \emph{base
algebra}.
Their types are fully shown in \autoref{fig:functorial_impl}.
The intuition is that functor \ensuremath{\Varid{f}} and \ensuremath{\Varid{ealg}} interpret the part of a
program enclosed by scoped operations,
and the type \ensuremath{\Varid{b}} and \ensuremath{\Varid{balg}} interpret the part of a program not enclosed
by any scopes.

\begin{example}
The standard semantics of exception catching (cf.\ handler (\ref{eq:catchHdl}))
can be implemented by a functorial algebra with the conventional \ensuremath{\Conid{Maybe}}
functor as the endofunctor carrier with the following \ensuremath{\Conid{EndoAlg}}:
\begin{equation*}
\setlength\mathindent{0pt}
\begin{codepage}[\linewidth]
\begin{nscenter}\begin{hscode}\SaveRestoreHook
\column{B}{@{}>{\hspre}l<{\hspost}@{}}%
\column{4}{@{}>{\hspre}l<{\hspost}@{}}%
\column{42}{@{}>{\hspre}l<{\hspost}@{}}%
\column{52}{@{}>{\hspre}l<{\hspost}@{}}%
\column{61}{@{}>{\hspre}l<{\hspost}@{}}%
\column{75}{@{}>{\hspre}l<{\hspost}@{}}%
\column{83}{@{}>{\hspre}l<{\hspost}@{}}%
\column{101}{@{}>{\hspre}l<{\hspost}@{}}%
\column{E}{@{}>{\hspre}l<{\hspost}@{}}%
\>[B]{}\Varid{excE}\mathbin{::}\Conid{EndoAlg}\;\Conid{Throw}\;\Conid{Catch}\;\Conid{Maybe}{}\<[E]%
\\
\>[B]{}\Varid{excE}\mathrel{=}\Conid{EndoAlg}\;\{\mskip1.5mu \mathinner{\ldotp\ldotp}\mskip1.5mu\}\;\mathbf{where}\;{}\<[42]%
\>[42]{}\hspace{1em}\;{}\<[52]%
\>[52]{}\hspace{1em}\;{}\<[61]%
\>[61]{}\hspace{1em}\;{}\<[75]%
\>[75]{}\Varid{enterE}{}\<[83]%
\>[83]{}\mathbin{::}\Conid{Catch}\;(\Conid{Maybe}\;(\Conid{Maybe}\;\Varid{a})){}\<[E]%
\\
\>[B]{}\hsindent{4}{}\<[4]%
\>[4]{}\Varid{returnE}\mathrel{=}\Conid{Just}{}\<[83]%
\>[83]{}\to \Conid{Maybe}\;\Varid{a}{}\<[E]%
\\[\blanklineskip]%
\>[B]{}\hsindent{4}{}\<[4]%
\>[4]{}\Varid{callE}\;\Conid{Throw}\mathrel{=}\Conid{Nothing}\;{}\<[75]%
\>[75]{}\Varid{enterE}\;(\Conid{Catch}\;\Conid{Nothing}\;\Varid{r}){}\<[101]%
\>[101]{}\mathrel{=}\Varid{join}\;\Varid{r}{}\<[E]%
\\
\>[75]{}\Varid{enterE}\;(\Conid{Catch}\;(\Conid{Just}\;\Varid{k})\;\anonymous )\mathrel{=}\Varid{k}{}\<[E]%
\ColumnHook
 \end{hscode}\resethooks
\end{nscenter}
\end{codepage}
\end{equation*}
For the base carrier that interprets operations not enclosed by any \ensuremath{\Varid{catch}}, a
straightforward choice is just taking \ensuremath{\Conid{Maybe}\;\Varid{a}} as the base carrier for a type
\ensuremath{\Varid{a}}, and setting \ensuremath{\Varid{callB}\mathrel{=}\Varid{callE}} and \ensuremath{\Varid{enterB}\mathrel{=}\Varid{enterE}}, which means that
operations inside and outside scopes are interpreted in the same way.

In general, we can define a specialised version of \ensuremath{\Varid{handle}}
(\autoref{fig:functorial_impl}) that only takes an endofunctor algebra as
input for interpreting operations inside and outside scopes in the
same way:
\begin{equation*}
\begin{codepage}[\linewidth]
\setlength\mathindent{0pt}
\begin{nscenter}\begin{hscode}\SaveRestoreHook
\column{B}{@{}>{\hspre}l<{\hspost}@{}}%
\column{E}{@{}>{\hspre}l<{\hspost}@{}}%
\>[B]{}\Varid{handleE}\mathbin{::}(\Conid{EndoAlg}\;\Sigma\;\Gamma\;\Varid{f})\to \Conid{Prog}\;\Sigma\;\Gamma\;\Varid{a}\to \Varid{f}\;\Varid{a}{}\<[E]%
\\
\>[B]{}\Varid{handleE}\;\Varid{ealg}\mathord{@}(\Conid{EndoAlg}\;\{\mskip1.5mu \mathinner{\ldotp\ldotp}\mskip1.5mu\})\mathrel{=}\Varid{handle}\;\Varid{ealg}\;(\Conid{BaseAlg}\;\Varid{callE}\;\Varid{enterE})\;\Varid{returnE}{}\<[E]%
\ColumnHook
 \end{hscode}\resethooks
\end{nscenter}
\end{codepage}
\end{equation*}
Applying \ensuremath{\Varid{handleE}\;\Varid{excE}} to the following program produces \ensuremath{\Conid{Just}\;\mathrm{43}} as expected.
\begin{equation}\label{eq:excE:test}
\ensuremath{\mathbf{do}\;\{\mskip1.5mu \Varid{x}\leftarrow \Varid{catch}\;\Varid{throw}\;(\Varid{return}\;\mathrm{42});\Varid{return}\;(\Varid{x}\mathbin{+}\mathrm{1})\mskip1.5mu\}}
\end{equation}
For the non-standard semantics (cf.\ (\ref{eq:catchHdl2})) that disables
exception recovery, one can define another endofunctor algebra \ensuremath{\Varid{excE'}} by
replacing \ensuremath{\Varid{enterE}} in \ensuremath{\Varid{excE}} with 
\begin{nscenter}\begin{hscode}\SaveRestoreHook
\column{B}{@{}>{\hspre}l<{\hspost}@{}}%
\column{10}{@{}>{\hspre}l<{\hspost}@{}}%
\column{28}{@{}>{\hspre}l<{\hspost}@{}}%
\column{E}{@{}>{\hspre}l<{\hspost}@{}}%
\>[B]{}\Varid{enterE'}{}\<[10]%
\>[10]{}\mathbin{::}\Conid{Catch}\;(\Conid{Maybe}\;(\Conid{Maybe}\;\Varid{a}))\to \Conid{Maybe}\;\Varid{a}{}\<[E]%
\\
\>[B]{}\Varid{enterE'}\;(\Conid{Catch}\;\Conid{Nothing}\;\anonymous ){}\<[28]%
\>[28]{}\mathrel{=}\Conid{Nothing};\hspace{1em}\;\Varid{enterE'}\;(\Conid{Catch}\;(\Conid{Just}\;\Varid{k})\;\anonymous )\mathrel{=}\Varid{k}{}\<[E]%
\ColumnHook
 \end{hscode}\resethooks
\end{nscenter}
With \ensuremath{\Varid{excE'}}, handling the program in (\ref{eq:excE:test})
produces \ensuremath{\Conid{Nothing}} as expected.
\end{example}

Now we provide some intuition for how functorial algebras work.
First note that the three fields of \ensuremath{\Conid{EndoAlg}} in \autoref{fig:functorial_impl} precisely correspond
to the three cases of \ensuremath{\Conid{Prog}} (\ref{eq:prog:monad}).
Thus by replacing the constructors of \ensuremath{\Conid{Prog}} with the correspond fields of \ensuremath{\Conid{EndoAlg}},
we have a polymorphic function
\ensuremath{\Varid{hcata}\;\Varid{ealg}\mathbin{::}\forall \Varid{x}\hsforall \hsdot{\cdot}{.\ }\Conid{Prog}\;\Sigma\;\Gamma\;\Varid{x}\to \Varid{f}\;\Varid{x}} (\autoref{fig:functorial_impl})
turning a program into a value in \ensuremath{\Varid{f}}.

The function \ensuremath{\Varid{handle}} (\autoref{fig:functorial_impl}) takes a functorial algebra,
a function \ensuremath{\Varid{gen}\mathbin{::}\Varid{a}\to \Varid{b}} and a program \ensuremath{\Varid{p}} as arguments, and it handles all the 
effectful operations in \ensuremath{\Varid{p}} by using \ensuremath{\Varid{hcata}\;\Varid{ealg}} for interpreting the part of \ensuremath{\Varid{p}}
inside scoped operations and \ensuremath{\Varid{balg}} for interpreting the outermost layer of \ensuremath{\Varid{p}}
outside any scoped operations.
The function \ensuremath{\Varid{gen}} corresponds to the `value case' of handlers of algebraic
effects, which transforms the \ensuremath{\Varid{a}}-value returned by a program into the type
\ensuremath{\Varid{b}} for interpretation.

We close this section with some more examples of handling scoped effects
with functorial algebras.
The supplementary material of this paper also contains an OCaml implementation
of functorial algebras and the following examples.

\begin{example}\label{ex:ndet:hdl}
The standard way to handle explicit nondeterminism  with the semi-deterministic
operator \ensuremath{\Varid{once}} (\autoref{ex:ndet:sig}) is using a functorial algebra with the
list functor as the endofunctor carrier together with the following algebra:
\begin{equation*}
\begin{codepage}[\linewidth]
\begin{nscenter}\begin{hscode}\SaveRestoreHook
\column{B}{@{}>{\hspre}l<{\hspost}@{}}%
\column{5}{@{}>{\hspre}l<{\hspost}@{}}%
\column{21}{@{}>{\hspre}l<{\hspost}@{}}%
\column{34}{@{}>{\hspre}l<{\hspost}@{}}%
\column{61}{@{}>{\hspre}l<{\hspost}@{}}%
\column{69}{@{}>{\hspre}l<{\hspost}@{}}%
\column{104}{@{}>{\hspre}l<{\hspost}@{}}%
\column{E}{@{}>{\hspre}l<{\hspost}@{}}%
\>[B]{}\Varid{ndetE}\mathbin{::}\Conid{EndoAlg}\;\Conid{Choice}\;\Conid{Once}\;[\mskip1.5mu \mskip1.5mu]\;{}\<[34]%
\>[34]{}\hspace{1em}\;\hspace{1em}\;{}\<[61]%
\>[61]{}\Varid{enterE}\mathbin{::}\Conid{Once}\;[\mskip1.5mu [\mskip1.5mu \Varid{a}\mskip1.5mu]\mskip1.5mu]\to [\mskip1.5mu \Varid{a}\mskip1.5mu]{}\<[E]%
\\
\>[B]{}\Varid{ndetE}\mathrel{=}\Conid{EndoAlg}\;\{\mskip1.5mu \mathinner{\ldotp\ldotp}\mskip1.5mu\}\;\mathbf{where}\;{}\<[61]%
\>[61]{}\Varid{enterE}\;{}\<[69]%
\>[69]{}(\Conid{Once}\;\Varid{x})\mathrel{=}{}\<[E]%
\\
\>[B]{}\hsindent{5}{}\<[5]%
\>[5]{}\Varid{callE}\mathbin{::}\Conid{Choice}\;[\mskip1.5mu \Varid{a}\mskip1.5mu]\to [\mskip1.5mu \Varid{a}\mskip1.5mu]{}\<[61]%
\>[61]{}\hspace{2ex}\;\mathbf{if}\;\Varid{x}\equiv [\mskip1.5mu \mskip1.5mu]\;\mathbf{then}\;[\mskip1.5mu \mskip1.5mu]\;\mathbf{else}\;{}\<[104]%
\>[104]{}\Varid{head}\;\Varid{x}{}\<[E]%
\\
\>[B]{}\hsindent{5}{}\<[5]%
\>[5]{}\Varid{callE}\;\Conid{Fail}{}\<[21]%
\>[21]{}\mathrel{=}[\mskip1.5mu \mskip1.5mu]\;{}\<[61]%
\>[61]{}\Varid{returnE}\mathbin{::}\Varid{a}\to [\mskip1.5mu \Varid{a}\mskip1.5mu]{}\<[E]%
\\
\>[B]{}\hsindent{5}{}\<[5]%
\>[5]{}\Varid{callE}\;(\Conid{Or}\;\Varid{x}\;\Varid{y}){}\<[21]%
\>[21]{}\mathrel{=}\Varid{x}+\!\!\!+\Varid{y}\;{}\<[61]%
\>[61]{}\Varid{returnE}\;\Varid{x}\mathrel{=}[\mskip1.5mu \Varid{x}\mskip1.5mu]{}\<[E]%
\ColumnHook
 \end{hscode}\resethooks
\end{nscenter}
\end{codepage}
\end{equation*}
Then applying \ensuremath{\Varid{handleE}\;\Varid{ndetE}} to the following program produces \ensuremath{[\mskip1.5mu \mathrm{1},\mathrm{2}\mskip1.5mu]} as
expected.
In comparison, if \ensuremath{\Varid{once}} were algebraic, the result would be \ensuremath{[\mskip1.5mu \mathrm{1}\mskip1.5mu]}.
\[
\ensuremath{\mathbf{do}\;\{\mskip1.5mu \Varid{n}\leftarrow \Varid{once}\;(\Varid{or}\;(\Varid{return}\;\mathrm{1})\;(\Varid{return}\;\mathrm{3}));\Varid{or}\;(\Varid{return}\;\Varid{n})\;(\Varid{return}\;(\Varid{n}\mathbin{+}\mathrm{1}))\mskip1.5mu\}}
\] 
\end{example}

\begin{example}
In the last example we used the list functor to interpret explicit
nondeterminism, resulting in the \emph{depth-first search} (DFS) strategy for
searching.
Noted by Spivey \cite{Spivey2009}, other search strategies can be implemented by
other choices of functors.
For example, \emph{depth-bounded search} (DBS) can be implemented with 
the functor \ensuremath{\Conid{Int}\to [\mskip1.5mu \Varid{a}\mskip1.5mu]}, and \emph{breadth-first search} (BFS) can be implemented 
with the functor \ensuremath{[\mskip1.5mu [\mskip1.5mu \Varid{a}\mskip1.5mu]\mskip1.5mu]} (or Kidney and Wu \cite{Kidney21}'s more efficient \ensuremath{\Conid{LevelT}}
functor).

A powerful application of scoped effects is modelling search strategies:
\[
\ensuremath{\mathbf{data}\;\Conid{Strategy}\;\Varid{x}\mathrel{=}\Conid{DFS}\;\Varid{x}\mid \Conid{BFS}\;\Varid{x}\mid \Conid{DBS}\;\Conid{Int}\;\Varid{x}}
\]
so that the programmer can freely specify the search strategy of nondeterministic
choices in a scope.
The algebraic signature \ensuremath{\Conid{Choice}} and scoped signature \ensuremath{\Conid{Strategy}} can be
handled by a functorial algebra carried by the endofunctor \ensuremath{([\mskip1.5mu \Varid{a}\mskip1.5mu],[\mskip1.5mu [\mskip1.5mu \Varid{a}\mskip1.5mu]\mskip1.5mu],\Conid{Int}\to [\mskip1.5mu \Varid{a}\mskip1.5mu])} and a base type \ensuremath{[\mskip1.5mu \Varid{a}\mskip1.5mu]} (assuming that depth-first search is the default strategy).
The complete code is in the supplementary material. 
\end{example}

\begin{example}
A scoped operation for the effect of mutable state is the operation \ensuremath{\Varid{local}\;\Varid{s}\;\Varid{p}}
that executes the program \ensuremath{\Varid{p}} with a state \ensuremath{\Varid{s}} and restores to the original
state after \ensuremath{\Varid{p}} finishes.
Thus \ensuremath{(\Varid{local}\;\Varid{s}\;\Varid{p};\Varid{k})} is different from \ensuremath{\Varid{local}\;\Varid{s}\;(\Varid{p};\Varid{k})}, and \ensuremath{\Varid{local}}
should be modelled as a scoped operations of signature \ensuremath{\mathbf{data}\;\Conid{Local}\;\Varid{s}\;\Varid{a}\mathrel{=}\Conid{Local}\;\Varid{s}\;\Varid{a}}.
Together with the usual algebraic operations \ensuremath{\Varid{get}} and \ensuremath{\Varid{put}} of state,
\ensuremath{\Conid{Local}} can be interpreted with a functorial algebra carried
by the state monad \ensuremath{\mathbf{type}\;\Conid{State}\;\Varid{s}\;\Varid{a}\mathrel{=}\Varid{s}\to (\Varid{s},\Varid{a})}.
The essential part of the functorial algebra is the following \ensuremath{\Varid{enterE}}
for \ensuremath{\Conid{Local}} (complete code in the supplementary material):
\[
\begin{codepage}
\begin{nscenter}\begin{hscode}\SaveRestoreHook
\column{B}{@{}>{\hspre}l<{\hspost}@{}}%
\column{E}{@{}>{\hspre}l<{\hspost}@{}}%
\>[B]{}\Varid{enterE}\mathbin{::}\Conid{Local}\;(\Conid{State}\;\Varid{s}\;(\Conid{State}\;\Varid{s}\;\Varid{a}))\to \Conid{State}\;\Varid{s}\;\Varid{a}{}\<[E]%
\\
\>[B]{}\Varid{enterE}\;(\Conid{Local}\;\Varid{s'}\;\Varid{f})\;\Varid{s}\mathrel{=}\mathbf{let}\;(\anonymous ,\Varid{k})\mathrel{=}\Varid{f}\;\Varid{s}\;\mathbf{in}\;\Varid{k}\;\Varid{s}{}\<[E]%
\ColumnHook
 \end{hscode}\resethooks
\end{nscenter}
\end{codepage}
\]
\end{example}

\begin{example}
Parallel composition of processes is not an operation in the usual algebraic
presentations of process calculi \cite{Stark08Pi,Staton13} precisely because
it not algebraic: $\ensuremath{(\Varid{p}\mid \Varid{q})\bind \Varid{k}} \neq \ensuremath{(\Varid{p}\bind \Varid{k})\mid (\Varid{q}\bind \Varid{k})}$.
Again, we can model it as a scoped operation, and different scheduling
behaviours of processes can be given as different functorial algebras.
The supplementary material contains complete code of handling parallel composition
using the so-called resumption monad 
\cite{Claessen1999,Pirog2012}:
\[
\ensuremath{\mathbf{data}\;\Conid{Resumption}\;\Varid{m}\;\Varid{a}\mathrel{=}\Conid{More}\;(\Varid{m}\;(\Conid{Resumption}\;\Varid{m}\;\Varid{a}))\mid \Conid{Done}\;\Varid{a}}
\]
\end{example}

\section{Categorical Foundations for Scoped Operations}
\label{sec:foundation}

We now move on to a categorical foundation for scoped effects and functorial
algebras.
First, we recall some standard category theory underlying algebraic effects 
and handlers (\autoref{sec:algebraic:effects}) and also Pir\'{o}g et al.\ \cite{PirogSWJ18}'s
monad $P$ that models the syntax of scoped operations, which is exactly the
\ensuremath{\Conid{Prog}} monad in the Haskell implementation (\autoref{sec:syntax:scoped}).
Then, we define functorial algebras formally (\autoref{sec:fctAlgDef}) and show
that there is an adjunction between the category of functorial algebras and the
base category (\autoref{sec:adjFctAlg}) inducing the monad $P$, which provides
a means to interpret the syntax of scoped operations.

The rest of this paper assumes familiarity with basic category theory, such as
adjunctions, monads, and initial algebras, which are covered by standard texts
\cite{MacLane1978,Barr1990category,Riehl17category}.
The mathematical notation in this paper is summarised in \refapp{app:notation},
which may be consulted if the meaning of some symbols are unclear.

\subsection{Syntax and Semantics of Algebraic Operations}
\label{sec:algebraic:effects}

The relationships between \emph{equational theories}, \emph{Lawvere theories},
\emph{monads}, and \emph{computational effects} are well-studied for decades
from many perspectives
\cite{Robinson02,Kelly93,Power1999enriched,fossacs/PlotkinP02,Hyland07,Moggi95}.
Here we recap a simplified version of equational theories by Kelly and Power \cite{Kelly93}
that we follow to model algebraic and scoped effects on \emph{locally finitely
presentable} (lfp) categories \cite{Adamek_rosicky_1994}.
However, we only consider \emph{unenriched} categories in this paper.

\paragraph{Locally Finitely Presentable Categories}
The use of lfp categories in this paper is limited to some standard results
about the existence of many initial algebras in lfp categories, and thus a
reader not familiar with lfp categories may follow this paper with some simple
intuition:
a category $\CatC$ is lfp if it has all (small) colimits and a set of
\emph{finitely presentable objects} such that every object in $\CatC$ can be
obtained by `glueing' (formally, as \emph{filtered colimits} of) some finitely
presentable objects.
For example, $\Set$ is lfp with finite sets as its finitely
presentable objects, and indeed every set can be obtained by glueing, here
meaning taking the union of, all its finite subsets:
$
X = \bigcup \ \aset{N \subseteq X \mid N \text{ finite}}
$.
Other examples of lfp categories include the category of partially ordered sets,
the category of graphs, the category of small categories, and presheaf
categories (we refer the reader to the excellent exposition \cite{Robinson02}
for concrete examples), thus lfp categories are widespread to cover many
semantic settings of programming languages.

Moreover, an endofunctor $F : \CatC \to \CatC$ is said to be \emph{finitary} if it 
preserves `glueing' (filtered colimits), which implies that its values $F X$ 
are determined by its values at finitely presentable objects:
$
F X \cong F (\texttt{colim}_{i} N_i) \cong \texttt{colim}_{i} F N_i
$
where $N_i$ are the finitely presentable objects that generate $X$ when glued
together.
For example, polynomial functors $\coprod_{n \in \Nat}P n \times (-)^n$ on $\Set$
are finitary where $P n$ is a set for every $n$.

\paragraph{Algebraic Operations on LFP Categories}
Fixing an lfp category $\CatC$, we take finitary endofunctors
$\Sigma : \CatC \to \CatC$ as signatures of operations on $\CatC$.
Like in \autoref{sec:alg}, the intuition is that every natural transformation
$\coprod_{\CatC(R, -)} P \to \Sigma -$ for some object $P : \CatC$ and finitely
presentable object $R : \CatC$ standards for an operation taking a parameter of
type $P$ and $R$-many arguments.
The category $\Sigma\Alg$ of \emph{$\Sigma$-algebras} is defined as usual:
it has pairs $\tuple{X : \CatC, \alpha : \Sigma X \to X}$ as objects and 
morphisms $h : X \to X'$ such that $h \vcomp \alpha = \alpha' \vcomp \Sigma h$
as morphisms $\tuple{X, \alpha} \to \tuple{X', \alpha'}$.
The following classical results (see e.g.\ \cite{Adamek1974,Barr70Co}) give
sufficient conditions for constructing initial and free $\Sigma$-algebras:

\begin{lemma}\label{lem:free:alg:simple}
If category $\CatC$ has finite coproducts and colimits of all $\omega$-chains and
functor $\Sigma : \CatC \to \CatC$ preserves them, then the forgetful
functor $\Ul_\Sigma : \Sigma\Alg \to \CatC$ forgetting the structure maps has a
left adjoint $\Free_{\Sigma} : \CatC \to \Sigma\Alg$ mapping every $X : \CatC$
to a $\Sigma$-algebra $\abracket{\Sigma^* X, \op_X}$ where $\Sigma^* X$ denotes
the initial algebra $\lfix{Y}{X + \Sigma Y}$ and $\op_X : \Sigma \Sigma^* X \to
\Sigma^*X$.
\end{lemma}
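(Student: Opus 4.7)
The plan is to construct the free algebra by reducing it to an Adámek-style initial algebra construction for a modified functor. For each object $X : \CatC$, define the endofunctor $F_X := X + \Sigma(-) : \CatC \to \CatC$. The key observation is that $F_X$ preserves $\omega$-colimits: $\Sigma$ does so by hypothesis, and the coproduct functor $X + (-)$ preserves $\omega$-colimits because finite coproducts commute with filtered colimits in any category with the relevant structure (or, equivalently here, because $\omega$-colimits of $F_X$-chains are computed componentwise with coproducts distributing).

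First I would invoke the classical Adámek construction: starting from an initial object $0$ (obtained as the empty coproduct, which exists by assumption), the $\omega$-chain
\[ 0 \;\to\; F_X 0 \;\to\; F_X^2 0 \;\to\; \cdots \]
has a colimit $\Sigma^* X$, and because $F_X$ preserves this colimit, $\Sigma^* X$ carries a canonical $F_X$-algebra structure $\iota_X : X + \Sigma\Sigma^* X \to \Sigma^* X$ that is initial in $F_X\Alg$. Splitting $\iota_X$ through the universal property of the coproduct yields two components: $\eta_X := \iota_X \vcomp \inj_1 : X \to \Sigma^* X$ and $\op_X := \iota_X \vcomp \inj_2 : \Sigma\Sigma^* X \to \Sigma^* X$. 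I would then define $\Free_\Sigma X := \tuple{\Sigma^* X, \op_X}$ on objects.

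The crux is to establish the adjunction $\Free_\Sigma \dashv \Ul_\Sigma$ by exhibiting a natural bijection between $\CatC$-morphisms $f : X \to A$ and $\Sigma$-algebra morphisms $\hat{f} : \tuple{\Sigma^* X, \op_X} \to \tuple{A, \alpha}$. Given $f$, I would equip $A$ with the $F_X$-algebra structure $[f, \alpha] : X + \Sigma A \to A$; initiality of $\Sigma^* X$ as an $F_X$-algebra then yields a unique $F_X$-algebra morphism $\hat{f} : \Sigma^* X \to A$. Checking that this $\hat{f}$ satisfies $\hat{f} \vcomp \op_X = \alpha \vcomp \Sigma\hat{f}$ (making it a $\Sigma$-algebra morphism) and $\hat{f} \vcomp \eta_X = f$ both follow by composing the defining equation $\hat{f} \vcomp \iota_X = [f,\alpha] \vcomp F_X \hat{f}$ with the two coproduct injections. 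Conversely, any $\Sigma$-algebra morphism $g : \tuple{\Sigma^*X, \op_X} \to \tuple{A,\alpha}$ determines $f := g \vcomp \eta_X$, and the pair $(g \vcomp \eta_X, g)$ satisfies the defining equation for an $F_X$-algebra morphism, so uniqueness of $\hat{f}$ gives $g = \widehat{g \vcomp \eta_X}$. Functoriality of $\Free_\Sigma$ and naturality of the bijection in $X$ and $\tuple{A, \alpha}$ follow from uniqueness.

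The main obstacle, in my view, is less a single hard step than the bookkeeping around the bijection between $F_X$-algebra morphisms out of the initial $F_X$-algebra and $\Sigma$-algebra morphisms out of $\tuple{\Sigma^*X, \op_X}$ equipped with a chosen base point $\eta_X$. Once this bijection is set up cleanly, naturality in $X$ (required to promote $\Free_\Sigma$ to a genuine functor and the unit $\eta$ to a natural transformation) is routine, since all structure is defined through universal properties. The finitariness of $\Sigma$ is not strictly needed for this lemma as stated, so I would not use it here; only preservation of $\omega$-colimits and finite coproducts plays a role.
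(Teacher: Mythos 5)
Your proof is correct and is precisely the classical Ad\'amek/Barr construction of free algebras via the initial algebra of $F_X = X + \Sigma(-)$; the paper does not prove this lemma itself but defers to exactly these sources, so your argument supplies the standard proof the paper implicitly relies on. Your side remark that finitariness of $\Sigma$ is not needed for the lemma as stated is also accurate --- only preservation of finite coproducts and $\omega$-chain colimits is used.
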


\autoref{lem:free:alg:simple} is applicable to our setting since $\CatC$ being lfp
directly implies that it has all colimits, and finitary functors $\Sigma$
preserve colimits of $\omega$-chains because colimits of $\omega$-chains are
filtered. 
Hence we have an adjunction: $\Free_\Sigma \dashv \Ul_\Sigma : \Sigma\Alg \to \CatC$.
We denote the monad from the adjunction by $\Sigma^* = \Ul_\Sigma \Free_\Sigma$
(which is implemented as the \ensuremath{\Conid{Free}\;\Sigma} monad in \autoref{sec:alg}).
The idea is still that syntactic terms built from operations in $\Sigma$ are
modelled by the monad $\Sigma^*$, and semantics of operations are given by
$\Sigma$-algebras.
Given any $\Sigma$-algebra $\tuple{X, \alpha : \Sigma X \to X}$ and morphism 
$g : A \to X$ in $\CatC$, they induce an interpretation morphism $\ensuremath{\Varid{handle}}_{\tuple{X, \alpha}} g
: \Sigma^* A \to X$ s.t.
\begin{equation}\label{eq:handle:alg}
\ensuremath{\Varid{handle}}_{\tuple{X, \alpha}} g = \Ul_\Sigma (\epsilon_{\tuple{X, \alpha}} \vcomp
\Free_\Sigma g) : \Sigma^* A = \Ul_\Sigma \Free_\Sigma A \to X
\end{equation}
where $\epsilon_{\tuple{X, \alpha}} : \Free_\Sigma \Ul_\Sigma \tuple{X, \alpha}
\to \tuple{X, \alpha}$ is the counit of $\Free_\Sigma \dashv \Ul_\Sigma$.

\paragraph{Algebraic Effects and Handlers}
The perspective of Plotkin and Pretnar \cite{PlotPret13Hand} is that computational effects are
characterised by signatures $\Sigma$ of primitive effectful operations, and
they determine monads $\Sigma^*$ that model programs syntactically.
Then $\Sigma$-algebras are \emph{handlers} \cite{PlotPret13Hand} of
operations that can be applied to programs using (\ref{eq:handle:alg}) to give
specific semantics to operations.

The approach of algebraic effects has led to a significant body of research on
programming with effects and handlers, but it imposes an assumption on the
operations to be modelled: the construction of $\Sigma^*$ in
\autoref{lem:free:alg:simple} \cite{Adamek1974,Barr70Co} implies that the
multiplication $\mu$ of the monad $\Sigma^*$ satisfies the \emph{algebraicity}
property:
$\op \vcomp (\Sigma \hcomp \mu) = \mu \vcomp (\op \hcomp \Sigma^*) : \Sigma
\Sigma^* \Sigma^* \to \Sigma^*$ where $\op : \Sigma (\Sigma^*) \to \Sigma^*$.
This intuitively means that every operation in $\Sigma$ must be commutative with
sequential composition of computations.
Many, but not all, effectful operations satisfy this property, and they are
called \emph{algebraic operations}.

\paragraph{Adjoint Approach to Effects}
The crux of algebraic effects and handlers is the adjunction $\Free_\Sigma
\dashv \Ul_\Sigma$.
However, we have not relied on the adjunction being the free/forgetful one at all:
given any monad $P : \CatC \to \CatC$ that models the syntax of effectful
$P$rograms, if $L \dashv R : \CatD \to \CatC$ is an adjunction such that $R L
\cong P$ as monads, then objects $D$ in $\CatD$ provide a means to
interpret programs $P A$---for any $g : A \to R D$ in $\CatC$, we have the
following interpretation morphism
\begin{equation}
\label{rem:adjoint:theoretic}
\ensuremath{\Varid{handle}}_D g = R (\epsilon_D \vcomp L g) : P A \cong R (L A) \to R D
\end{equation}  
The intuition for $g$ is that it transforms the returned value $A$
of a computation into the carrier $R D$, so it corresponds to the `value
case' of effect handlers \cite{BauerP15}.
Pir\'{o}g et al.\ \cite{PirogSWJ18} call this approach the \emph{adjoint-theoretic approach to
syntax and semantics of effects}, and they construct an adjunction between
\emph{indexed algebras} and the base category for modelling scoped operations.
Earlier, Levy \cite{Levy2003248} and Kammar and Plotkin \cite{Kammar12} also adopt a similar
adjunction-based viewpoint in the treatment of call-by-push-value calculi:
\emph{value types} are interpreted in the base category $\CatC$, and
\emph{computation types} are interpreted in the algebra category $\CatD$.

\begin{remark}\label{rem:equations}
A notable missing part of our treatment is the \emph{equations} that specify
operations in a signature.
Following Kelly and Power \cite{Kelly93}, an equation for a signature $\Sigma : \CatC \to
\CatC$ can be formulated as a pair of monad morphisms $\sigma, \tau :
\Gamma^* \to \Sigma^*$ for some finitary functor $\Gamma$, and taking their
coequaliser 
\begin{tikzcd}[column sep=small]
	{\Gamma^*} & {\Sigma^*} & M
	\arrow["\tau", shift left=1, from=1-1, to=1-2]
	\arrow["\sigma"', shift right=1, from=1-1, to=1-2]
	\arrow[two heads, from=1-2, to=1-3]
\end{tikzcd}
in the category of finitary monads constructs a monad $M$ that represents terms
modulo the equation $l = r$.
Although it seems straightforward to extend this formulation of equational
theories work with scoped effects, we do not consider equations in this paper
for the sake of simplicity.
\end{remark}

\begin{remark}
Working with lfp categories precludes operations with infinite arguments, such
as the \ensuremath{\Varid{get}} operation (\ref{eq:sig:es}) of mutable state when the state has
infinite possible values, but this limitation is not inherent and can be
handled by moving to \emph{locally $\kappa$-presentable categories}
\cite{Adamek_rosicky_1994} for some larger cardinal $\kappa$.
\end{remark}

\subsection{Syntax of Scoped Operations}
\label{sec:syntax:scoped}

Not all operations in programming languages can be adequately modelled as
algebraic operations on $\Set$, for example,
$\lambda$-abstraction \cite{FiorePT99}, 
memory cell generation \cite{fossacs/PlotkinP02,Levy03Book},
more generally, effects with dynamically generated instances \cite{Staton13},
explicit substitution \cite{GhaniUH06}, channel restriction in $\pi$-calculus
\cite{Stark08Pi}, and their syntax are usually modelled in some functor
categories.
More recently, Pir\'{o}g et al.\ \cite{PirogSWJ18} extend Ghani and Uustalu \cite{GhaniUH06}'s work to model a
family of non-algebraic operations, which they call \emph{scoped operations}.
In this subsection, we review their development in the setting of lfp categories.
Throughout the rest of the paper, we fix an lfp category $\CatC$, and refer to
it as the \emph{base category}, and it is intended to be the category in which
types of a programming language are interpreted.
Furthermore, we fix two finitary endofunctors $\Sigma, \Gamma : \CatC \to \CatC$
and call them the \emph{algebraic signature} and \emph{scoped signature} respectively.

\paragraph{Syntax Endofunctor $P$}
Now our goal is to construct a monad $P : \CatC \to \CatC$ that models the
syntax of programs with algebraic operations in $\Sigma$ and non-algebraic
scoped operations in $\Gamma$.
First we construct its underlying endofunctor.
When $\CatC$ is $\Set$, the intuition for programs $P A$ is that they are terms
inductively built from the following inference rules:
\begin{gather*}
\mprset {sep=1em}
\inferrule{a \in A}{ \varr(a) \in P A }\quad\quad
\inferrule{o \in \Sigma n \\ k : n \to P A}{o(k) \in P A}\quad\quad
\inferrule{s \in \Gamma n \\ p : n \to P X \\ k : X \to P A} {\{s(p) ; k\} \in P A}
\end{gather*}
where $n$ ranges over finite sets and
$o \in \Sigma n$ represents an algebraic operation of $\vert n \vert $ arguments, and
similarly $s \in \Gamma n$ is a scoped operation that creates $\vert n \vert$ scopes.
The difference between algebraic and scoped operations is manifested by an
additional explicit continuation $k$ in the third rule, as it is \emph{not} the
case that sequentially composing $s(p)$ with $k$ equals $s(p ; k)$
like for algebraic operations,
so the continuation for scoped operations must be explicitly kept in the syntax.
When $\CatC$ is any lfp category, these rules translate to the following recursive
equation for the functor $P : \CatC \to \CatC$:
\begin{equation}\label{eq:e:recursive}
P A \cong A + \Sigma (P A) + \int^{X : \CatC} \coprod_{\CatC(X, P A)}  \Gamma (P X)
\end{equation}
where the existentially quantified $X$ in the third rule is translated to a
\emph{coend} $\int^{X : \CatC}$ in $\CatC$ \cite{MacLane1978}.
Moreover, the coend in (\ref{eq:e:recursive}) is isomorphic to $\Gamma (P (P
A))$ because by the coend formula of Kan extension, 
it exactly computes $\Lan{I}(\Gamma P)(P A)$, i.e.\ the left
Kan-extension of $\Gamma P$ along the identity functor $I : \CatC \to \CatC$,
and by definition $\Lan{I}(\Gamma P) = \Gamma P$.
Thus (\ref{eq:e:recursive}) is equivalent to
\begin{equation}\label{eq:e:recursive2}
P A \cong A + \Sigma (P A) + \Gamma (P (P A))
\end{equation}
which is exactly the \ensuremath{\Conid{Prog}\;\Sigma\;\Gamma} datatype that we saw in the Haskell
implementation (\ref{eq:prog:monad}).
To obtain a solution to (\ref{eq:e:recursive2}), we construct a (higher-order)
endofunctor $G : \CatEndof \to \CatEndof$ to represent the $G$rammar where
$\CatEndof$ is the category of finitary endofunctors on $\CatC$:
\begin{equation}\label{eq:g:equation}
G = \Id + \Sigma \hcomp {-} + \Gamma \hcomp {-} \hcomp {-}
\end{equation}
where $\Id : \CatC \to \CatC$ is the identity functor.
Then \autoref{lem:free:alg:simple} is applicable because $\CatEndof$ has all small colimits
since colimits in functor categories can be computed pointwise and $\CatC$ has all small
colimits.
Furthermore, $G$ preserves all filtered colimits, in particular colimits of
$\omega$-chains, because ${-} \hcomp {=} : \CatEndof \times \CatEndof \to \CatEndof$
is finitary following from direct verification.
Since initial algebras are precisely free algebras generated by the initial
object, by \autoref{lem:free:alg:simple}, there is an initial $G$-algebra
$\tuple{P : \CatEndof, \iniso : G P \to P}$ and $\iniso$ is an isomorphism. 
Thus $P$ obtained in this way is indeed a solution to (\ref{eq:e:recursive2})---%
the endofunctor modelling the syntax of programs with algebraic and scoped
operations.

\paragraph{Monadic Structure of $P$}

Next we equip the finitary endofunctor $P$ with a monad structure.
This can be done in several ways, either by the general result about
\emph{$\Sigma$-monoids} \cite{FiorePT99,Fiore09EqSys} in $\CatEndof$, or by
\cite[Theorem 4.3]{Matthes04Subs}, or by the following relatively
straightforward argument in \cite{PirogSWJ18}:
by the `diagonal rule' of computing initial algebras by Backhouse et al. \cite{Backhouse95},
$P = \mu G$ (\ref{eq:g:equation}) is isomorphic to $P' = \lfix{X}{\Id + \Sigma
\hcomp X + \Gamma \hcomp P \hcomp X}$.
Note that $P'$ is exactly $(\Sigma + \Gamma \hcomp P)^*$ as endofunctors by
\autoref{lem:free:alg:simple}, thus 
\begin{equation}\label{eq:e:from:free}
P \cong (\Sigma + \Gamma \hcomp P)^* : \CatEndof
\end{equation}
Then we equip $P$ with the same monad structure as the ordinary free monad
$(\Sigma + \Gamma \hcomp P)^*$.
The implementation in (\ref{eq:prog:monad}) is exactly this monad structure.

\subsection{Functorial Algebras of Scoped Operations}
\label{sec:fctAlgDef}

To interpret the monad $P$ (\ref{eq:e:recursive2}) modelling the syntax
of scoped operations, it is natural to expect that semantics is given by
$G$-algebras on $\CatEndof$ so that interpretation is then the catamorphisms
from $\mu G$ to $G$-algebras.
And following the adjoint-theoretic approach (\ref{rem:adjoint:theoretic}), 
we would like to have an adjunction 
\begin{tikzcd}[column sep=small]
	G\Alg & \CatC
	\arrow[""{name=0, anchor=center, inner sep=0}, shift right=2, from=1-2, to=1-1]
	\arrow[""{name=1, anchor=center, inner sep=0}, shift right=2, from=1-1, to=1-2]
	\arrow["{\small\dashv}"{anchor=center, rotate=-90}, draw=none, from=0, to=1]
\end{tikzcd}
such that the induced monad is isomorphic to $P$.
However, there seems no natural way to construct such an adjunction unless we
replace $G$-algebras with a slight extension of it, which we referred to
as \emph{functorial algebras}, as the notion for giving semantics to scoped
operations.
In the following, we first define functorial algebras formally
(\autoref{def:fctAlgs}) and then show the adjunction between the category of
functorial algebras and the base category (\autoref{thm:EisoT}), which
allows us to interpret $P$ with functorial algebras.

A functorial algebra is carried by an endofunctor $H : \CatC \to \CatC$
with additionally an object $X$ in $\CatC$.
The endofunctor $H$ also comes with a morphism $\alpha^G : G H \to H$ in
$\CatEndof$, and the object $X$ is
equipped with a morphism $\alpha^I : \Sigma X + \Gamma H X \to X$ in $\CatC$.
The intuition is that given a program of type $P X \cong X + \Sigma (P X) +
\Gamma (P (P X))$, the middle $P$ in $\Gamma P P$ corresponds to the part of
a program enclosed by some scoped operations (i.e.\ the $p$ in $\{s(p)
\ensuremath{\bind } k\}$), and this part of the program is interpreted by $H$ with $\alpha^G$.
After the enclosed part is interpreted, $\alpha^I$ interprets the outermost
layer of the program by $X$ with $\alpha^I$ in the same way as interpreting free
monads of algebraic operations.
More precisely, let $I : \CatEndof \times \CatC \to \CatC$ be a bi-functor such that
\footnote{The first argument $H$ to $I$ is written as
subscript so that we have a more compact notation $I_H^*$ when taking the free
monad of $I_H : \CatEndo$ with the first argument fixed.}
\begin{equation}\label{eq:I:def}
I_H X = \Sigma X + \Gamma (H X)
\hspace{4em}
I_\sigma f = \Sigma f + \Gamma (\sigma \hcomp f)
\end{equation}
for all $H : \CatEndof$ and $X : \CatC$ and all morphisms $\sigma : H \to H'$
and $f : X \to X'$.
Then we define an endofunctor $\Fn : \CatEC \to \CatEC$ such that
\begin{equation}\label{eq:fn:equation}
\Fn\abracket{H, X} = \abracket{G H, I_H X}
\end{equation}

\begin{definition}[Functorial Algebras]\label{def:fctAlgs}
A \emph{functorial algebra} is an object $\abracket{H,X}$ in $\CatEC$ paired 
with a structure map $\Fn\abracket{H,X} \to \abracket{H,X}$, or
equivalently it is a quadruple
\[
\big\langle{H : \CatEndof,\ \ \  X : \CatC},\ \ \ 
{\alpha^G : G H \to H,\ \ \ \alpha^I : \Sigma X + \Gamma (H X) \to X} \big\rangle
\]
where $G H = \Id + \Sigma \hcomp H + \Gamma \hcomp H \hcomp H$.
Morphisms between two functorial algebras $\tuple{H_1, X_1, \alpha^G_1, \alpha^I_1}$
and $\tuple{H_2, X_2, \alpha^G_2, \alpha^I_2}$ are pairs $\tuple{\sigma : H_1 \to
H_2,\,f : X_1 \to X_2}$ making the following diagrams commute:
\[
\begin{tikzcd}[row sep=12pt]
	{G H_1} & {H_1} \\
	{G H_2} & {H_2}
	\arrow["{\alpha^G_1}", from=1-1, to=1-2]
	\arrow["\alpha^G_2"', from=2-1, to=2-2]
	\arrow["{G \sigma}"', from=1-1, to=2-1]
	\arrow["\sigma", from=1-2, to=2-2]
\end{tikzcd}
\hspace{2cm}
\begin{tikzcd}[row sep=12pt]
	{\Sigma X_1 + \Gamma (H_1 X_1) } && {X_1} \\
	{\Sigma X_2 + \Gamma (H_2 X_2) } && {X_2}
	\arrow["{\alpha^I_1}", from=1-1, to=1-3]
	\arrow["f", from=1-3, to=2-3]
	\arrow["{\alpha^I_2}"', from=2-1, to=2-3]
	\arrow["{\Sigma f + \Gamma (\sigma \hcomp f)}"', from=1-1, to=2-1]
\end{tikzcd}
\]
Functorial algebras and their morphisms form a category $\Fn\Alg$.
\end{definition}

\begin{example}\label{ex:once:functorial}
We reformulate our programming example of nondeterministic choice with \ensuremath{\Varid{once}}
shown \autoref{ex:ndet:hdl} in the formal definition.
Let $\CatC = \Set$ in this example and $1 = \{\star\}$ be some singleton set.
We define signature endofunctors 
\[
\Sigma X = 1 + X \times X
\hspace{2cm}
\Gamma X = X
\]
so that $\Sigma$ represents nullary algebraic operation \ensuremath{\Varid{fail}} and binary
algebraic operation \ensuremath{\Varid{or}}, and $\Gamma$ represents the unary scoped operation
\ensuremath{\Varid{once}} that creates one scope.
Let $\ensuremath{\Conid{List}} : \Set \to \Set$ be the endofunctor mapping a set $X$ to the set
of finite lists with elements from $X$.
We define natural transformations $\alpha^\Sigma : \Sigma \circ \ensuremath{\Conid{List}} \to \ensuremath{\Conid{List}}$
and $\alpha^\Gamma : \Gamma \circ \ensuremath{\Conid{List}} \circ \ensuremath{\Conid{List}} \to \ensuremath{\Conid{List}}$ by
\[
\alpha^\Sigma_X(\iota_1\ \star) = \ensuremath{\Varid{nil}}
,\quad
\alpha^\Sigma_X(\iota_2\ \abracket{x, y}) = x \ensuremath{+\!\!\!+} y
,\quad
\alpha^\Gamma_X(\ensuremath{\Varid{nil}}) = \ensuremath{\Varid{nil}}
,\quad
\alpha^\Gamma_X(\ensuremath{\Varid{cons}\;\Varid{x}\;\Varid{xs}}) = x
\]
where \ensuremath{\Varid{nil}} is the empty list; $+\!\!+$ is list concatenation; and $\ensuremath{\Varid{cons}\;\Varid{x}\;\Varid{xs}}$ is the list with an element \ensuremath{\Varid{x}} in front of \ensuremath{\Varid{xs}}.
Then for any set $X$, $\abracket{\ensuremath{\Conid{List}}, \ensuremath{\Conid{List}\;\Conid{X}}}$ carries a functorial
algebra with structure maps
\begin{equation}
\label{eq:once:algebras}
\alpha^G = [\eta^\ensuremath{\Conid{List}}, \alpha^\Sigma, \alpha^\Gamma] : G \ensuremath{\Conid{List}} \to \ensuremath{\Conid{List}}
\qquad
\alpha^I = [\alpha^\Sigma_X, \alpha^\Gamma_X] : I_{\ensuremath{\Conid{List}}}X \to X
\end{equation}
where $\eta^\ensuremath{\Conid{List}} : \Id \to \ensuremath{\Conid{List}}$ wraps any element into a singleton list.
\end{example}

The last example exhibits that one can define a functorial algebra carried by
$\abracket{H, H X}$ from a $G$-algebra on $H : \CatEndof$ by simply
choosing the object component to be $H X$ for an arbitrary $X : \CatC$. 
In other words, there is a faithful functor $G\Alg \to \Fn\Alg$, which results
in functorial algebras that interpret the outermost layer of a program---the
part not enclosed by any scoped operation---in the same way as the inner
layers.
But in general, the object component of functorial algebras offers the
flexibility that the outermost layer can be interpreted differently from the
inner layers, as in the following example.
\begin{example}\label{ex:once:nat:alg}
Continuing \autoref{ex:once:functorial},
if one is only interested in the final number of possible outcomes,
then one can define a functorial algebra $\abracket{\ensuremath{\Conid{List}}, \Nat, \alpha^G,
\alpha^I}$ where $\alpha^G$ is (\ref{eq:once:algebras}) and
$\alpha^I(\iota_1\ (\iota_1 \star)) = 0$,
\[
\alpha^I(\iota_1\ (\iota_2 \abracket{x, y})) = x + y,\ \ 
\alpha^I(\iota_2\ \ensuremath{\Varid{nil}}) = 0,\ \ 
\alpha^I(\iota_2\ (\ensuremath{\Varid{cons}\;\Varid{n}\;\Varid{ns}})) = n
\]
\end{example}

\subsection{Interpreting with Functorial Algebras}\label{sec:adjFctAlg}
In the rest of this section we show how functorial algebras can be used to
interpret programs $P A$~(\ref{eq:e:recursive2}) with scoped operations.
We first construct a simple adjunction $\upC \dashv \downC$ 
between the base category $\CatC$ and $\CatEC$, which is then composed with the
free/forgetful adjunction $\Free_{\Fn} \dashv \Ul_{\Fn}$ between $\CatEC$ and
$\Fn\Alg$ for the functor $\Fn$ (\ref{eq:fn:equation}).
The resulting adjunction (\ref{eq:free:EndoC}) is proven to induce a monad $T$
isomorphic to $P$ (\autoref{thm:EisoT}), and by the adjoint-theoretic approach
to syntax and semantics (\ref{rem:adjoint:theoretic}), this adjunction
provides a means to interpret scoped operations modelled with the monad $P$
(\autoref{lem:fn:eval}).

First we define functor $\upC :  \CatC \to \CatEndof \times \CatC$ such that
$\upC X = \tuple{0, X}$ where $0 : \CatEndof$ is the initial endofunctor---%
the constant functor sending everything to the initial object in $\CatC$.
The functor $\upC$ is left adjoint to the projection functor $\downC : \CatEC
\to \CatC$ of the second component.

Then we would like to compose $\upC \dashv \downC$ with the free-forgetful
adjunction $\Free_\Fn \dashv \Ul_\Fn$ for the endofunctor $\Fn$
(\ref{eq:fn:equation}) on $\CatEC$, and the latter adjunction indeed exists.

\begin{lemma}\label{lem:fn:free:adj}
The endofunctor $\Fn$ (\ref{eq:fn:equation}) on $\CatEC$
has free algebras, i.e.\ there is a functor $\Free_\Fn : \CatEC \to \Fn\Alg$
left adjoint to the forgetful functor $\Ul_\Fn : \Fn\Alg \to \CatEC$.
\end{lemma}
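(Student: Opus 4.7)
The plan is to invoke \autoref{lem:free:alg:simple}, applied to the category $\CatEC$ and the endofunctor $\Fn$. This requires checking two things: (i) that $\CatEC$ has finite coproducts and colimits of $\omega$-chains, and (ii) that $\Fn$ preserves both.

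For (i), recall that $\CatC$ is lfp, hence cocomplete, and the category $\CatEndof$ of finitary endofunctors on $\CatC$ is likewise cocomplete, with colimits being computed pointwise (one also needs to observe that a pointwise filtered colimit of finitary functors remains finitary, which follows from the standard interchange of filtered colimits). Since colimits in a product category are computed componentwise, $\CatEC$ inherits all (small) colimits from its factors; in particular it has finite coproducts and $\omega$-colimits.

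For (ii), the componentwise computation of colimits in $\CatEC$ reduces the question to checking that each component of $\Fn\abracket{H,X} = \abracket{GH,\,I_H X}$ is finitary as a functor of $\abracket{H,X}$. The first component only depends on $H$, and the excerpt has already observed that $G = \Id + \Sigma\hcomp{-} + \Gamma\hcomp{-}\hcomp{-}$ is finitary, using that horizontal composition $\hcomp : \CatEndof \times \CatEndof \to \CatEndof$ is finitary. For the second component, $I_H X = \Sigma X + \Gamma(HX)$: since $\Sigma$ and $\Gamma$ are finitary, it suffices to show that evaluation $\mathrm{ev} : \CatEndof \times \CatC \to \CatC$, $\abracket{H, X} \mapsto HX$, is finitary. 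This is exactly the same kind of fact as the finitarity of horizontal composition: pointwise colimits in $\CatEndof$ give $(\colim_i H_i)(X) = \colim_i H_i(X)$, each $H_i$ is finitary, and a Fubini-style interchange of two filtered colimits yields $(\colim_i H_i)(\colim_j X_j) \cong \colim_i H_i(X_i)$ along a filtered diagram.

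Having verified (i) and (ii), \autoref{lem:free:alg:simple} delivers a left adjoint $\Free_{\Fn} : \CatEC \to \Fn\Alg$ to the forgetful functor $\Ul_{\Fn}$, constructed explicitly as $\abracket{H,X} \mapsto \abracket{\Fn^{*}\abracket{H,X},\, \mathit{op}_{\abracket{H,X}}}$ via initial algebras of $\abracket{H,X} + \Fn(-)$. The only real technical point is the finitarity of the bifunctor $I$, which is the expected analogue of the finitarity of $\hcomp$ that the paper has already relied on, so no new difficulty arises.
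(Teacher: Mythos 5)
Your proposal is correct and follows essentially the same route as the paper's own proof: verify the hypotheses of \autoref{lem:free:alg:simple} for $\Fn$ on $\CatEC$ (cocompleteness via pointwise colimits in functor and product categories, preservation of $\omega$-colimits componentwise) and then apply it. The one point you elaborate more explicitly than the paper---finitarity of the evaluation bifunctor $\abracket{H,X}\mapsto HX$, via pointwise colimits in $\CatEndof$, finitarity of each $H_i$, and the diagonal/interchange argument for filtered colimits---is exactly what the paper compresses into ``all functors in $\CatEndof$ preserve colimits of $\omega$-chains in their domains,'' so no substantive difference arises.
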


\begin{proof}[Proof sketch]
It can be verified that $\Fn$ is finitary and then we apply
\autoref{lem:free:alg:simple}.
A detailed proof can be found in \refapp{app:proofs}.
\end{proof}

These two adjunctions are depicted in the following diagram:
\begin{equation}\label{eq:free:EndoC}
  \begin{tikzcd}[column sep=large]
    \Fn\Alg\rar[yshift=-1ex]{\bot}[swap]{\Ul_{\Fn}} 
    & \CatEC \lar[yshift=1ex][swap]{\Free_{\Fn}} \rar[yshift=-1ex]{\bot}[swap]{\downC} 
    & \lar[yshift=1ex][swap]{\upC}\CatC \arrow["T"', loop, distance=2em, in=35, out=325]
  \end{tikzcd}
\end{equation}
and we compose them to obtain an adjunction $\Free_\Fn \upC \dashv \downC \Ul_\Fn$
between $\Fn\Alg$ and $\CatC$, giving rise to a monad $T = \downC \Ul_\Fn
\Free_\Fn \upC$.
In the rest of this section, we prove that $T$ is isomorphic to $P$
(\ref{eq:e:recursive}) in the category of monads, which is crucial 
in this paper, since it allows us to interpret scoped operations modelled by the monad
$P$ with functorial algebras $\Fn\Alg$.

We first establish a technical lemma characterising the free $\Fn$-algebra on
the product category $\CatEC$ in terms of the free algebras in $\CatC$ and
$\CatEndof$.

\begin{lemma}\label{lem:freeJ}
There is a natural isomorphism between $\Free_{\Fn}$ and the following
\[
\FFn \abracket{H,X} = \left\langle
G^{*} H : \CatEndof,\ \ \ (I_{G^{*}H})^{*}X : \CatC,\ \ \ \op^{G^*}_H,\ \ \ \op^{(I_{G^*H})^*}_{X} 
\right\rangle
\]
where $\op^{G^*}_H : G (G^{*} H) \to G^{*} H$ and $ \op^{(I_{G^*H})^*}_{X} :
I_{G^*H} ((I_{G^{*}H})^{*}X) \to (I_{G^{*}H})^{*}X$ are the structure maps of
the free $G$-algebra and $I_{G^{*}H}$-algebra respectively.
\end{lemma}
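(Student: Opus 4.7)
The plan is to show that the putative free-algebra functor $\FFn$ satisfies the universal property of the left adjoint to $\Ul_{\Fn}$; the claimed natural isomorphism $\Free_{\Fn} \cong \FFn$ then follows by uniqueness of adjoints. Concretely, for every functorial algebra $\abracket{K, Y, \beta^G, \beta^I}$ and every morphism $\abracket{\sigma, f} : \abracket{H, X} \to \abracket{K, Y}$ in $\CatEC$, we need a unique $\Fn$-algebra morphism $\abracket{\bar\sigma, \bar f} : \FFn\abracket{H,X} \to \abracket{K, Y, \beta^G, \beta^I}$ extending $\abracket{\sigma, f}$ along the unit.

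First I would deal with the endofunctor component. Since $\abracket{K,\beta^G}$ is a $G$-algebra and $\Free_G H = \abracket{G^* H, \op^{G^*}_H}$ is the free $G$-algebra on $H$ (applying \autoref{lem:free:alg:simple} to $G$ on $\CatEndof$, already justified in \autoref{sec:syntax:scoped}), the universal property yields a unique $G$-algebra morphism $\bar\sigma : G^* H \to K$ extending $\sigma$. This $\bar\sigma$ fixes the first component of the required $\Fn$-algebra morphism and, crucially, lets us pull back the $I_K$-algebra structure on $Y$ to an $I_{G^*H}$-algebra structure: writing $I_{\bar\sigma}$ for the natural transformation $I_{G^*H} \to I_K$ obtained from the functoriality of $I$ in its first argument (cf.\ (\ref{eq:I:def})), the composite $\beta^I \vcomp (I_{\bar\sigma})_Y : I_{G^*H} Y \to Y$ equips $Y$ with the desired $I_{G^*H}$-algebra structure.

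Next, I would extend the object component. Applying \autoref{lem:free:alg:simple} to $I_{G^*H}$ on $\CatC$, the pair $\Free_{I_{G^*H}} X = \abracket{(I_{G^*H})^* X, \op^{(I_{G^*H})^*}_X}$ is the free $I_{G^*H}$-algebra on $X$, so from $f : X \to Y$ together with the $I_{G^*H}$-algebra structure on $Y$ constructed above, we get a unique $I_{G^*H}$-algebra morphism $\bar f : (I_{G^*H})^* X \to Y$ extending $f$. A small diagram chase using the definition of $\bar f$ as an $I_{G^*H}$-algebra morphism, together with the definition of morphisms in $\Fn\Alg$ (the second square in \autoref{def:fctAlgs}, instantiated with $\sigma = \bar\sigma$ and $f = \bar f$), confirms that $\abracket{\bar\sigma, \bar f}$ is an $\Fn$-algebra morphism $\FFn\abracket{H, X} \to \abracket{K, Y, \beta^G, \beta^I}$.

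For uniqueness, any candidate extension $\abracket{\sigma', f'}$ must, by the first square of \autoref{def:fctAlgs}, be a $G$-algebra morphism extending $\sigma$, hence $\sigma' = \bar\sigma$ by freeness of $G^* H$; and then, by the second square, $f'$ must be an $I_{G^*H}$-algebra morphism into $(Y, \beta^I \vcomp (I_{\bar\sigma})_Y)$ extending $f$, hence $f' = \bar f$ by freeness of $(I_{G^*H})^* X$. Naturality of the bijection in $\abracket{H, X}$ then follows from naturality of the two constituent free/forgetful adjunctions. The main obstacle I anticipate is the bookkeeping around the second component: one must ensure that the $I_K$-algebra structure on $Y$ is reinterpreted through $\bar\sigma$ in the only way compatible with the definition of $\Fn$-algebra morphisms, which is precisely what makes the two free constructions in $\FFn$ compose correctly.
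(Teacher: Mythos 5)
Your proposal is correct and follows essentially the same route as the paper: both arguments decompose the universal property componentwise, first obtaining the unique $G$-algebra morphism out of the free $G$-algebra $G^*H$, and then reindexing the second component's algebra structure along that morphism (your $\beta^I \vcomp (I_{\bar\sigma})_Y$ is exactly the paper's $j_2 \vcomp (X + I_{k_1}\identity)$) before invoking freeness of $(I_{G^*H})^*X$. The only difference is presentational --- the paper phrases the argument as initiality of $\abracket{G^*H, (I_{G^*H})^*X}$ for the pointed functor $\abracket{H,X} + \Fn(-)$ and then appeals to \autoref{lem:free:alg}, whereas you verify the adjunction's universal property directly --- and these are interchangeable.
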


\begin{proof}[Proof sketch]
It follows from the formula of computing free algebras from initial algebras
\autoref{lem:free:alg:simple}.
A detailed proof can be found in the appendix.
\end{proof}

\begin{theorem}\label{thm:EisoT}
Monads $P$ (\ref{eq:e:recursive2}) and $T$ (\ref{eq:free:EndoC}) are
isomorphic as monads.
\end{theorem}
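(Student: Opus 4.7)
The plan is to unfold $T$ through the chain of adjunctions (\ref{eq:free:EndoC}), apply \autoref{lem:freeJ} to compute the free $\Fn$-algebra over $\upC X$ concretely, and recognise the result as the free monad of $\Sigma + \Gamma \hcomp P$, which is then identified with $P$ via (\ref{eq:e:from:free}). The key observation is that $\upC X = \abracket{0, X}$, where $0 : \CatEndof$ is the initial endofunctor, so by \autoref{lem:freeJ} we have $\Free_\Fn \upC X \cong \abracket{G^{*} 0,\ (I_{G^{*} 0})^{*} X}$ together with its free-algebra structure. Since $G^{*} 0 = \lfix{H}{0 + G H} \cong \lfix{H}{G H} = P$ and $I_{G^{*} 0} \cong I_P = \Sigma + \Gamma \hcomp P$ by (\ref{eq:I:def}), this yields a natural isomorphism
\[
\phi_X\;:\; T X = \downC \Ul_\Fn \Free_\Fn \upC X \;\cong\; (I_P)^{*} X \;\cong\; P X
\]
at the level of endofunctors.

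To promote $\phi$ to a monad isomorphism, I would verify compatibility with unit and multiplication separately. The unit of $\upC \dashv \downC$ at $X$ is $\mathit{id}_X$ (since $\downC \upC X = X$), so the unit $\eta^T_X$ of the composite adjunction reduces to the $X$-component of the unit of $\Free_\Fn \dashv \Ul_\Fn$ at $\abracket{0, X}$. By the explicit description in \autoref{lem:freeJ}, this is precisely the unit $\eta^{(I_P)^{*}}_X$ of the free $I_P$-monad, which matches $\eta^P_X$ under (\ref{eq:e:from:free}). For the multiplication, the composite counit at $\Free_\Fn \upC X = \abracket{P, (I_P)^{*} X}$ has two contributions: the counit of $\upC \dashv \downC$, whose only non-trivial component is the unique morphism $0 \to P$ in $\CatEndof$ (which is annihilated by the final $\downC$), and the counit of $\Free_\Fn \dashv \Ul_\Fn$, whose second component is the algebra structure $\op^{(I_P)^{*}}_X : I_P ((I_P)^{*} X) \to (I_P)^{*} X$ from \autoref{lem:freeJ}. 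Hence $\mu^T_X$ coincides with the free-monad multiplication of $(I_P)^{*}$, which is $\mu^P_X$ under (\ref{eq:e:from:free}).

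The main obstacle will be bookkeeping the two-layer adjunction structure when verifying compatibility with multiplication: one must track how the naturality square for $\phi$ decomposes over the two components of $\CatEC$ and how the counit of $\Free_\Fn \dashv \Ul_\Fn$ at $\abracket{P, (I_P)^{*} X}$ expresses itself componentwise through the structure maps from \autoref{lem:freeJ}. Once one observes that $\downC$ discards the $\CatEndof$-component entirely, so that the $\upC \dashv \downC$ counit contributes nothing to $\mu^T$, the calculation collapses to the standard fact that a free-algebra adjunction induces the free-monad structure on its underlying endofunctor---precisely the structure that $P$ inherits via (\ref{eq:e:from:free}).
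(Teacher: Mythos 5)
Your reduction of the statement to $T \cong (I_P)^*$ via \autoref{lem:freeJ}, the identification $G^*0 \cong P$ and $I_{G^*0} \cong I_P$, and your treatment of the unit all follow the same route as the paper and are sound. The gap is in the multiplication check. You claim that the counit of $\upC \dashv \downC$ ``contributes nothing to $\mu^T$'' because $\downC$ discards the $\CatEndof$-component. But in the composite counit
\[
\epsilon^T_{\Free_\Fn \upC X} \;=\; \epsilon^{\Fn^*}_{\Free_\Fn \upC X} \vcomp \Free_\Fn\big(\epsilon^{\downC\upC}_{\Ul_\Fn\Free_\Fn\upC X}\big)
\;=\; \epsilon^{\Fn^*}_{\Free_\Fn \upC X} \vcomp \Free_\Fn\abracket{\mathop{!},\identity},
\]
the morphism $\mathop{!} : 0 \to G^*0$ is applied \emph{before} $\downC$, inside $\Free_\Fn$, and $\Free_\Fn$ entangles the two components of $\CatEC$: by \autoref{lem:freeJ} the second component of $\Free_\Fn\abracket{H,Y}$ is $(I_{G^*H})^*Y$, which depends on $H$. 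Hence $\Free_\Fn\abracket{\mathop{!},\identity}$ has a non-identity second component of the form $(I_{G^*\mathop{!}})^*$ mapping $(I_{G^*0})^*(I_{G^*0})^*X \to (I_{G^*G^*0})^*(I_{G^*0})^*X$, and this does survive the final $\downC$. So the calculation does not collapse for the reason you give.

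What is missing is the argument that this leakage nevertheless acts as the identity. The paper handles it by base-functor fusion (naturality of the free--forgetful adjunction in the signature functor), which pushes $(I_{G^*\mathop{!}})^*$ into the structure map of the target algebra, where it composes with $I_{e_1}$ for $e_1 = \Ul_G(\epsilon^{G^*}_{\abracket{G^*0,\,\op^{G^*}_0}})$; one then observes that $e_1 \vcomp \mathop{!} : G^*0 \to G^*0$ is a $G$-algebra homomorphism out of the initial $G$-algebra and hence equals $\identity$. Only after this initiality step does $\mu^T_X$ reduce to the free-monad multiplication of $(I_{G^*0})^* \cong (I_P)^*$. Your proof would go through once you replace the ``annihilated by $\downC$'' claim with this fusion-plus-initiality argument.
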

\begin{proof}[Proof sketch]
By (\ref{eq:e:from:free}) and (\ref{eq:I:def}), $P \cong (\Sigma + \Gamma P)^* = (I_{P})^*$ as monads,
it is sufficient to show that $T \cong (I_{P})^*$ as monads, which follows
from \autoref{lem:freeJ} by careful calculation.
A detailed proof can be found in \refapp{app:proofs}.
\end{proof}

\begin{remark}\label{rem:notMonadic}
In general, the adjunction $\Free_\Fn \upC \dashv \downC \Ul_\Fn$ is \emph{not}
monadic since the right adjoint $\downC \Ul_\Fn$ does not reflect isomorphisms,
which is a necessary condition for it to be monadic by Beck's monadicity theorem
\cite{MacLane1978}.
This entails that the category $\Fn\Alg$ of functorial algebras is not
equivalent to the category of Eilenberg-Moore algebras.
Nonetheless, as we will see later in \autoref{sec:comparing}, functorial
algebras and Eilenberg-Moore algebras have the same expressive power
for interpreting scoped operations in the base category.
\end{remark}

The isomorphism established \autoref{thm:EisoT} enables us to interpret
programs modelled by the monad $P$ using functorial algebras following
(\ref{rem:adjoint:theoretic}):
for any functorial algebra $\abracket{H, X, \alpha^G, \alpha^I}$
(\autoref{def:fctAlgs}), and any morphism $g : A \to X$ in the base category
$\CatC$, there is a morphism 
\begin{equation}\label{eq:evalFctAlgDef}
\interp{\abracket{H, X, \alpha^G, \alpha^I}}{g} = \downC \Ul_\Fn
(\epsilon_{\abracket{H, X, \alpha^G, \alpha^I}} 
\vcomp \Free_\Fn \upC g) : TA \iso P A \to X
\end{equation}
which interprets programs $P A$ with the functorial algebra $\abracket{H, X,
\alpha^G, \alpha^I}$.
Furthermore, we can derive the following recursive formula
(\ref{eq:evalFctAlg}) for this interpretation morphism, which is exactly the
Haskell implementation in \autoref{fig:functorial_impl}.

\begin{theorem}[Interpreting with Functorial Algebras]
\label{lem:fn:eval}
For any functorial algebra $\alpha = \abracket{H, X, \alpha^G, \alpha^I}$
as in \autoref{def:fctAlgs}, and any morphism $g : A \to X$ for some $A$
in the base category $\CatC$, let $h = \cata{\alpha^G} : P \to H$ be
the catamorphism from the initial $G$-algebra $P$ to the $G$-algebra
$\alpha^G : G H \to H$.
The interpretation of $P A$ with this algebra $\alpha$ and $g$ satisfies
\begin{equation}\label{eq:evalFctAlg}
\interp{\alpha}{g} = [ g,\ \ 
  \alpha^I_\Sigma \vcomp \Sigma ({\interp{\alpha}{g}}),\ \ 
  \alpha^I_\Gamma \vcomp \Gamma h_X \vcomp \Gamma P (\interp{\alpha}{g})] \vcomp \inOp_A
\end{equation}
where $\inOp : P \to \Id + \Sigma \hcomp P + \Gamma \hcomp P \hcomp P$ is
the isomorphism between $P$ and $G P$; morphisms $\alpha^I_\Sigma = \alpha^I \vcomp
\iota_1 : \Sigma X \to X$ and $\alpha^I_\Gamma = \alpha^I \vcomp \iota_2 :
\Gamma H X \to X$ are the two components of $\alpha^I : \Sigma X + \Gamma H X
\to X$.
\end{theorem}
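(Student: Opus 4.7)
The plan is to exploit the adjoint-theoretic definition~(\ref{eq:evalFctAlgDef}) together with Lemma~\ref{lem:freeJ} to identify $\interp{\alpha}{g}$ as the object-component of the unique $\Fn$-algebra homomorphism $\Free_\Fn\upC A \to \alpha$ that transposes $g$, and then to unfold the resulting homomorphism conditions component by component. First, by Lemma~\ref{lem:freeJ} and the derivation leading to~(\ref{eq:e:from:free}), the free functorial algebra $\Free_\Fn\upC A = \Free_\Fn\abracket{0, A}$ is naturally isomorphic to $\abracket{P,\,PA,\,\iniso,\,\beta^I_A}$, where $\iniso : GP \to P$ is the initial $G$-algebra structure and $\beta^I_A : \Sigma(PA) + \Gamma(P(PA)) \to PA$ is the corresponding $I_P$-algebra structure, i.e.\ the restriction of $\iniso_A$ to its last two summands under $PA \iso A + \Sigma(PA) + \Gamma(P(PA))$. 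Under the composite adjunction $\Free_\Fn\upC \dashv \downC\Ul_\Fn$, the morphism $g : A \to X = \downC\Ul_\Fn\alpha$ transposes to a unique $\Fn$-algebra morphism $\abracket{h,\,\interp{\alpha}{g}} : \Free_\Fn\upC A \to \alpha$ whose second component is, by construction, the morphism defined in~(\ref{eq:evalFctAlgDef}).

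Next, I unpack the two homomorphism squares of Definition~\ref{def:fctAlgs}. The endofunctor-component $h : P \to H$ must satisfy $h \vcomp \iniso = \alpha^G \vcomp Gh$, which is precisely the defining equation of the catamorphism $h = \cata{\alpha^G}$ from the initial $G$-algebra $P$, matching the theorem statement. The object-component must then satisfy
\begin{equation*}
\interp{\alpha}{g} \vcomp \beta^I_A \;=\; \alpha^I \vcomp \bigl(\Sigma\,\interp{\alpha}{g} \;+\; \Gamma(h \hcomp \interp{\alpha}{g})\bigr).
\end{equation*}
Naturality of $h : P \to H$ rewrites $h \hcomp \interp{\alpha}{g}$ as $h_X \vcomp P(\interp{\alpha}{g}) : P(PA) \to HX$, so the $\Gamma$-summand becomes $\alpha^I_\Gamma \vcomp \Gamma h_X \vcomp \Gamma P(\interp{\alpha}{g})$ and the $\Sigma$-summand becomes $\alpha^I_\Sigma \vcomp \Sigma\,\interp{\alpha}{g}$. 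The unit/triangle law of the composite adjunction additionally forces $\interp{\alpha}{g}$ to restrict to $g$ along the $A$-summand of $PA$, supplying the return case. Assembling the three cases into a cotuple and precomposing with $\inOp_A$ then yields exactly~(\ref{eq:evalFctAlg}).

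The main obstacle is the bookkeeping in the first step: carefully transporting the concrete description of $\Free_\Fn$ from Lemma~\ref{lem:freeJ} through the isomorphism $G^{*}0 \cong P$ in $\CatEndof$ and the isomorphism $(I_P)^{*}A \cong PA$ in $\CatC$ underlying~(\ref{eq:e:from:free}), and matching the splitting $\iniso_A = [\varr_A,\,\beta^I_A]$ induced by $PA \cong A + I_P(PA)$ against the unit and structure map of the free $I_P$-algebra on $A$. Once this identification is in place, verifying~(\ref{eq:evalFctAlg}) amounts to reading off the commutativity of the two squares of Definition~\ref{def:fctAlgs} together with the triangle identity, with no residual calculation required.
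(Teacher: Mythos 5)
Your proposal is correct and carries out essentially the calculation the paper's proof sketch gestures at: both rest on the concrete description of $\Free_\Fn\upC A$ as $\abracket{P, PA, \iniso, [\iniso\vcomp\iota_2,\iniso\vcomp\iota_3]}$ from Lemma~\ref{lem:freeJ}, identify the counit composite with the transposed $\Fn$-algebra homomorphism, and read off (\ref{eq:evalFctAlg}) from the two homomorphism squares plus the triangle identity for the unit case. Your packaging via the universal property of the adjunct is a clean way of organising the "plug in $\epsilon$ and calculate" step the paper leaves implicit.
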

\begin{proof}[Proof sketch]
It can be calculated by plugging the formula of $\epsilon$ for the
adjunction $\downC \Ul_\Fn \dashv \Free_\Fn \upC$ (\autoref{lem:free:alg:simple})
into (\ref{eq:evalFctAlgDef}).
\end{proof}

To summarise, we have defined a notion of functorial algebras
that we use to handle scoped operations.
The heart of the development is the adjunction (\ref{eq:free:EndoC}) that
induces a monad isomorphic to the monad $P$ (\ref{eq:e:recursive2}) that models
the syntax of programs with scoped operations, following which 
we derive a recursive formula (\ref{eq:evalFctAlg}) that interprets programs
with functor algebras.
The formula is exactly the implementation in \autoref{fig:functorial_impl}:
the datatype \ensuremath{\Conid{EndoAlg}} represents the $\alpha^G$ in (\ref{eq:evalFctAlg});
datatype \ensuremath{\Conid{BaseAlg}} corresponds to $\alpha^I$;
function \ensuremath{\Varid{hcata}} implements $\cata{\alpha^G}$.

\section{Comparing the Models of Scoped Operations}
\label{sec:comparing}

Functorial algebras are not the only option for interpreting scoped operations.
In this section we compare functorial algebras with two other approaches, one being
Pir\'{o}g et al.\ \cite{PirogSWJ18}'s \emph{indexed algebras} and the other one being
\emph{Eilenberg-Moore} (EM) algebras of the monad $P$ (\ref{eq:e:recursive2}),
which simulate scoped operations with algebraic operations.
After a brief description of these two kinds of algebras, we compare them and
show that their expressive power is in fact equivalent.

\subsection{Interpreting Scoped Operations with Eilenberg-Moore Algebras}
\label{sec:EM-algebras}

In standard algebraic effects, handlers are just
$\Sigma$-algebras for some signature functor $\Sigma : \CatC \to \CatC$, and it
is well known that the category $\Sigma\Alg$ of $\Sigma$-algebras is
equivalent to the category $\CatC^{\Sigma^*}$ of EM algebras of the monad
$\Sigma^*$.
Thus handlers of algebraic operations are exactly EM algebras of the monad
$\Sigma^*$ modelling the syntax of algebraic operations.
This observation suggests that we may also use EM algebras of the monad $P$
(\ref{eq:e:recursive2}) as the notion of handlers for scoped operations.

\begin{lemma}
EM algebras of $P$ are equivalent to $(\Sigma + \Gamma \hcomp P)$-algebras.
In other words, an EM algebra of $P$ is equivalently a tuple
\begin{equation}\label{eq:e:em}
\tuple{X : \CatC,\ \alpha_\Sigma : \Sigma X \to X,\ \alpha_\Gamma : \Gamma (P X) \to X}  
\end{equation}
\end{lemma}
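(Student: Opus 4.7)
The plan is to reduce the statement to the classical correspondence between algebras of a functor and Eilenberg--Moore algebras of the free monad on that functor, using the monad isomorphism that has already been established in the paper.

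First, I would invoke the isomorphism~(\ref{eq:e:from:free}), which says $P \cong (\Sigma + \Gamma \hcomp P)^*$ as endofunctors, and the fact noted immediately after it that the monad structure on $P$ agrees with the free-monad structure on the right-hand side. Consequently $P \cong (\Sigma + \Gamma \hcomp P)^*$ as \emph{monads}, and the EM category is invariant under monad isomorphism, so $\CatC^{P} \simeq \CatC^{(\Sigma + \Gamma \hcomp P)^*}$. Note that $\Sigma + \Gamma \hcomp P$ is a finitary endofunctor on $\CatC$, since $\Sigma$ and $\Gamma$ are finitary by assumption and $P$ is finitary by the construction in \autoref{sec:syntax:scoped}; hence the hypotheses of \autoref{lem:free:alg:simple} apply with $F := \Sigma + \Gamma \hcomp P$.

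Next, I would appeal to the standard fact that for any endofunctor $F$ whose free monad $F^{*}$ exists, the comparison functor $F\Alg \to \CatC^{F^{*}}$ sending $\tuple{X, \beta : FX \to X}$ to the EM algebra $\ensuremath{\Varid{handle}}_{\tuple{X,\beta}}\identity_X : F^{*}X \to X$ (as defined in (\ref{eq:handle:alg})) is an equivalence of categories; conversely, every EM algebra $h : F^{*}X \to X$ is recovered from the $F$-algebra $h \vcomp \op_X : FX \to X$, and the two assignments are mutually inverse by the universal property of $F^{*}$ as the initial algebra generated by $X$. Instantiating this with $F = \Sigma + \Gamma \hcomp P$ gives $\CatC^{(\Sigma + \Gamma \hcomp P)^*} \simeq (\Sigma + \Gamma \hcomp P)\Alg$.

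Finally, I would unpack a $(\Sigma + \Gamma \hcomp P)$-algebra by the universal property of the coproduct: a morphism $(\Sigma + \Gamma \hcomp P)X \to X$ is precisely a pair $\tuple{\alpha_\Sigma : \Sigma X \to X,\ \alpha_\Gamma : \Gamma(PX) \to X}$, yielding the explicit tuple form in (\ref{eq:e:em}). Chaining the three equivalences $\CatC^{P} \simeq \CatC^{(\Sigma + \Gamma \hcomp P)^*} \simeq (\Sigma + \Gamma \hcomp P)\Alg \simeq \{\text{tuples as in (\ref{eq:e:em})}\}$ gives the lemma.

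The only point that requires care, rather than being a routine calculation, is the apparent circularity in treating $P$ as if it were a fixed finitary functor inside $\Sigma + \Gamma \hcomp P$ while simultaneously claiming $P$ is the free monad on that functor: but this is benign, because $P$ has already been constructed in \autoref{sec:syntax:scoped} as the initial algebra of~$G$, so we may regard it as a given finitary endofunctor when forming $\Sigma + \Gamma \hcomp P$, and (\ref{eq:e:from:free}) is then an identification of two monad structures on that already-constructed $P$. I therefore expect no real obstacle beyond making this point clear.
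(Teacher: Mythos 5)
Your proposal is correct and follows essentially the same route as the paper: the paper's proof likewise observes that the monad structure of $P$ is by construction that of the free monad $(\Sigma + \Gamma \hcomp P)^*$ (so the EM categories agree) and then appeals to the monadicity of the free--forgetful adjunction to identify $\CatC^{(\Sigma + \Gamma \hcomp P)^*}$ with $(\Sigma + \Gamma \hcomp P)\Alg$. Your version merely spells out the comparison functor and the coproduct unpacking explicitly, and your remark that the occurrence of $P$ inside its own signature is benign (since $P$ is already fixed as $\mu G$) is a point the paper leaves implicit.
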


\begin{proof}
Recall that the monad structure of $P$ (\ref{eq:e:from:free}) is exactly the
monad structure of the free monad $(\Sigma + \Gamma \hcomp P)^*$, and therefore
they have the same EM algebras.
Moreover, EM algebras of $(\Sigma + \Gamma \hcomp P)^*$ are equivalent to
plain $(\Sigma + \Gamma \hcomp P)$-algebras by the monadicity of the free-forgetful
adjunction.
\end{proof}

Thus we obtain a way of interpreting scoped operations based on the
free-forgetful adjunction $\Free_{\Sigma + \Gamma \hcomp P} \dashv
\Ul_{\Sigma + \Gamma \hcomp P}$:
given an EM algebra $\alpha = \abracket{X, \alpha_\Sigma, \alpha_\Gamma}$ of
$P$ as in (\ref{eq:e:em}), then for any $A : \CatC$ and morphism $g : A \to X$,
the interpretation of $P A$ by $g$ and this EM algebra is
\begin{equation}\label{eq:evalEMAlgDef}
\interp{\alpha}{g}
= \Ul_{\Sigma + \Gamma \hcomp P}(\epsilon_{\alpha}
\vcomp \Free_{\Sigma + \Gamma \hcomp P}\ g)
: P A \iso (\Sigma + \Gamma \hcomp P)^* A \to X
\end{equation}
The formula (\ref{eq:evalEMAlgDef}) can also be turned into a recursive form:
\begin{equation}
\interp{\alpha}{g} = 
  [g,\ \ 
  \alpha_\Sigma \vcomp \Sigma (\interp{\alpha}{g}),\ \ 
  \alpha_\Gamma \vcomp \Gamma P (\interp{\alpha}{g})] \vcomp \inOp_A
\end{equation}
that suits implementation (see \refapp{app:em} for more details).

Interpreting scoped operation with EM algebras can be understood as simulating 
scoped operations with algebraic operations and general recursion:
a signature $(\Sigma, \Gamma)$ of algebraic-and-scoped operations is simulated
by a signature $(\Sigma + \Gamma \circ P)$ of algebraic operations where
$P$ is recursively given by $(\Sigma + \Gamma \circ P)^*$.
In this way, one can simulate scoped operation in languages implementing
algebraic effects that allow signatures of operation to be recursive, such as
\cite{Hillerstrom2018Shallow,BP14Eff,Koka17}, but not the original design by
Plotkin and Pretnar \cite{PlotPret13Hand}, which requires signatures of operations to mention only
some \emph{base types}.

The downside of this simulating approach is that the denotational semantics of the
language becomes more complex and usually involves solving some domain-theoretic
recursive equations, like in \cite{BP14Eff}.
Moreover, this approach typically requires handlers to be defined with general
recursion, which obscures the inherent structure of scoped operations, making
reasoning about handlers of scoped operations more difficult.

\subsection{Indexed Algebras of Scoped Effects}\label{sec:indexedAlgOfE}

Indexed algebras of scoped operations by Pir\'{o}g et al.\ \cite{PirogSWJ18} are yet another
way of interpreting scoped operations.
They are based on the following adjunction:
\begin{equation}\label{eq:IxAdj}
\begin{tikzcd}
	\Ix\Alg & \CatCN & {\mathbb{C}}
	\arrow[""{name=0, anchor=center, inner sep=0}, "\upharpoonright"', shift right=2, from=1-3, to=1-2]
	\arrow[""{name=1, anchor=center, inner sep=0}, "\downharpoonleft"', shift right=2, from=1-2, to=1-3]
	\arrow[""{name=2, anchor=center, inner sep=0}, "{\Ul_\Ix}"', shift right=2, from=1-1, to=1-2]
	\arrow[""{name=3, anchor=center, inner sep=0}, "{\Free_\Ix}"', shift right=2, from=1-2, to=1-1]
	\arrow["\dashv"{anchor=center, rotate=-90}, draw=none, from=0, to=1]
	\arrow["\dashv"{anchor=center, rotate=-90}, draw=none, from=3, to=2]
\end{tikzcd}
\end{equation}
where $\CatCN$ is the functor category from the discrete category $\CatN$ of 
natural numbers to the base category $\CatC$.
That is to say, an object in $\CatCN$ is a family of objects $A_i$ in $\CatC$ 
indexed by natural numbers $i \in \CatN$, and a morphism $\tau : A \to B$ in
$\CatCN$ is a family of morphisms $\tau_i : A_i \to B_i$ in $\CatC$ with 
no coherence conditions between the levels.
An endofunctor $\Ix : \CatCN \to \CatCN$ is defined to characterise indexed algebras:
\begin{equation*}
\Ix A = \Sigma \hcomp A + \Gamma \hcomp (\previous A) + (\later A)
\end{equation*}
where $\previous$ and $\later$ are functors $\CatCN \to \CatCN$ \emph{shifting
indices} such that $(\previous A)_i = A_{i+1}$ and $(\later A)_0 = 0$ and
$(\later A)_{i+1} = A_i$.
Then objects in $\Ix\Alg$ are called \emph{indexed algebras}.
Furthermore, since a morphism $(\later A) \to A$ is in bijection with $A \to
(\previous A)$, an indexed algebra can be given by the following tuple:
\begin{equation}\label{eq:indexed:alg}
\abracket{A : \CatCN,\ \ 
a : \Sigma \circ A \to A,\ \ 
d : \Gamma (\previous A) \to A,\ \ 
p : A \to \previous A
}
\end{equation}
The operational intuition for it is that the carrier $A_i$ at level $i$
interprets the part of syntax enclosed by $i$ layers of scopes, and when
interpreting a scoped operation $\Gamma (P (P X))$ at layer~$i$, the part of
syntax outside the scope is first interpreted, resulting in $\Gamma (P A_i)$,
and then the indexed algebra provides a way $p$ to \emph{p}romote the carrier
to the next level, resulting in $\Gamma (P A_{i+1})$.
After the inner layer is also interpreted as $\Gamma A_{i+1}$, the indexed
algebra provides a way $d$ to \emph{d}emote the carrier, producing $A_i$ again. 
Additionally the morphism $a$ interprets ordinary \emph{a}lgebraic operations.

\begin{example}\label{ex:onceIx}
\autoref{ex:once:functorial} for nondeterministic choice with \ensuremath{\Varid{once}} can be
re-expressed with an indexed algebra as follows.
For any set $X$, we define an indexed object $A : \CatCN$ by $A_0 = \ensuremath{\Conid{List}}\ X$ and
$A_{i+1} = \ensuremath{\Conid{List}}\ A_i$.
The object $A$ carries an indexed algebra with the following structure maps:
for all $i \in \Nat$, $a_i(\iota_1\ \star) = \ensuremath{\Varid{nil}}$ and
\[
a_i(\iota_2\ \abracket{x, y}) = x \ensuremath{+\!\!\!+} y,
\ \ 
d_i(\ensuremath{\Varid{nil}}) = \ensuremath{\Varid{nil}},
\ \ 
d_i(\ensuremath{\Varid{cons}\;\Varid{x}\;\Varid{xs}}) = x,
\ \ 
p_i(x) = \ensuremath{\Varid{cons}\;\Varid{x}\;\Varid{nil}}
\]
\end{example}

The adjunction $\Free_\Ix \dashv \Ul_\Ix$ in (\ref{eq:IxAdj}) is the free-forgetful
adjunction for $\Ix$ on $\CatCN$.
The other adjunction $\upharpoonright \dashv \downharpoonleft$ is given by
$\downharpoonleft\! A = A_0$, $(\upharpoonright\! X)_0 = X$, and $(\upharpoonright\!
X)_{i+1} = 0$ for all $i \in \Nat$.
Importantly, Pir\'{o}g et al.\ \cite{PirogSWJ18} show that the monad induced by the adjunction (\ref{eq:IxAdj})
is isomorphic to monad $P$ (\ref{eq:e:recursive2}), thus indexed algebras can also
be used to interpret scoped operations
\begin{equation}\label{eq:evalIxDef}
\interp{\abracket{A, a, d, p}}{g} = \,\downharpoonleft \Ul_\Ix (\epsilon_{\abracket{A, a, d, p}} \vcomp
\Free_\Ix \upharpoonright g) 
\end{equation}
in the same way as what we do for functorial algebras in
\autoref{sec:adjFctAlg}.
Interpreting with indexed algebras can also be implemented in Haskell with
GHC's \texttt{DataKinds} extension for type-level natural numbers (which can be
found in \refapp{fig:hfold}).

\subsection{Comparison of Resolutions}
\label{sec:resolutions}

Now we come back to the real subject of this section---comparing the expressivity
of the three ways for interpreting scoped operations.
Specifically, we construct \emph{comparison functors} between the respective
categories of the three kinds of algebras, which translate one kind of algebras
to another in a way \emph{preserving the induced interpretation} in the base category.
Categorically, the three kinds of algebras correspond to three \emph{resolutions} of the 
monad $P$, which form a category of resolutions (\autoref{def:resCmp}) with 
comparison functors as morphisms.
In this category, the Eilenberg-Moore resolution is the terminal object, and
thus it automatically gives us comparison functors translating other kinds of
algebras to EM algebras.  
To complete the circle of translations, we then construct comparison functors
$\KL : \CatC^P \to \Fn\Alg$ translating EM algebras to functorial ones
(\autoref{sec:EMtoFn}) and $\KK : \Fn\Alg \to \Ix\Alg$ translating functorial
algebras to indexed ones (\autoref{sec:FntoIx}). 
The situation is pictured in \autoref{fig:cmpFunctors}.

\begin{definition}[Resolutions and Comparison Functors \cite{LS86Intro}]\label{def:resCmp}
Given a monad $M$ on $\CatC$, the category $\CatRes[M]$
of \emph{resolutions} of $M$ has as objects adjunctions $\tuple{\CatD, L \dashv R :
\CatD \to \CatC, \eta, \epsilon}$ such that the induced monad $RL$ is $M$.
A morphism from a resolution $\abracket{\CatD, L \dashv R, \eta, \epsilon}$ to
$\abracket{\CatD', L' \dashv R', \eta', \epsilon'}$ is a functor $K : \CatD \to \CatD'$,
called a \emph{comparison functor},
such that it commutes with the left and right adjoints, i.e.\ 
$K L = L'$ and $R' K = R$.
\end{definition}

We have seen adjunctions for indexed algebras, EM algebras and functorial
algebras respectively, each inducing the monad $P$ up to isomorphism, so each
of them can be identified with an object in the category $\CatRes$.
For each resolution $\abracket{\CatD, L, R, \eta, \epsilon}$, 
we have been using the objects $D$ in $\CatD$ to interpret scoped operations modelled
by $P$:
for any morphism $g : A \to RD$ in $\CatC$, the interpretation of $P A$ by
$D$ and $g$ is
$\interp{D}{g} = R(\epsilon_{D} \vcomp Lg) : P A = RLA \to RD$.
Crucially, we show that interpretations are preserved by comparison functors.

\begin{lemma}[Preservation of Interpretation]\label{lem:CFPreservesI}
Let $K : \CatD \to \CatD'$ be any comparison functor between resolutions
$\abracket{\CatD, L, R, \eta, \epsilon}$ and $\abracket{\CatD', L', R', \eta',
\epsilon'}$ of some monad $M : \CatC \to \CatC$.
For any object $D$ in $\CatD$ and any $g : A \to RD$ in $\CatC$, 
\begin{equation}
\interp{D}{g} = \interp{KD}{g} : MA \to RD (= R'KD)
\end{equation}
where each side interprets $MA$ using $L \dashv R$ and $L' \dashv R'$
respectively.
\end{lemma}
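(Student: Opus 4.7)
The plan is to reduce both interpretations $\interp{D}{g}$ and $\interp{KD}{g}$ to the same morphism by exploiting the defining equations $KL = L'$ and $R'K = R$ of a comparison functor, together with a companion compatibility between the counits that I will establish along the way.

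First I would prove the auxiliary identity $K\epsilon_D = \epsilon'_{KD}$ as morphisms $L'RD \to KD$ in $\CatD'$; both live in this hom-set because $KLRD = L'RD$ (by $KL = L'$) and $L'R'KD = L'RD$ (by $R'K = R$). Since both resolutions induce the same monad $M$ together with its canonical unit, the units of the two adjunctions agree: $\eta = \eta'$. Under the adjoint transpose $\CatD'(L'RD, KD) \iso \CatC(RD, R'KD)$, the transpose of the left-hand side is $R'(K\epsilon_D) \vcomp \eta'_{RD} = R\epsilon_D \vcomp \eta_{RD} = \identity_{RD}$ by a triangle identity for $L \dashv R$, while the transpose of the right-hand side is $R'\epsilon'_{KD} \vcomp \eta'_{R'KD} = \identity_{R'KD} = \identity_{RD}$ by a triangle identity for $L' \dashv R'$. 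Since the transposes agree, so do the original morphisms.

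Given this identity, the theorem reduces to a short calculation: unfolding $\interp{KD}{g} = R'(\epsilon'_{KD} \vcomp L'g)$, distributing $R'$ over the composition, and then rewriting $R'\epsilon'_{KD} = R'K\epsilon_D = R\epsilon_D$ and $R'L' = (R'K)L = RL$ delivers $R(\epsilon_D \vcomp Lg) = \interp{D}{g}$.

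The hard part is the auxiliary counit identity, because \autoref{def:resCmp} defines a comparison functor by $KL = L'$ and $R'K = R$ alone and does not explicitly include the condition $K\epsilon = \epsilon' K$. It nevertheless follows, but the proof crucially uses that both resolutions carry the \emph{same} unit $\eta$---which is forced by the hypothesis that they induce the same monad together with its full monad structure. Without this input the transpose argument would collapse, so pinning it down is where I would be most careful.
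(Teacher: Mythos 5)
Your proof is correct and follows essentially the same route as the paper's: both reduce the claim to the counit compatibility $K\epsilon_D = \epsilon'_{KD}$ (using $\eta = \eta'$, $KL = L'$, $R'K = R$) and then conclude by the same two-line calculation. The only difference is that where the paper obtains $K\epsilon = \epsilon' K$ by citing Mac Lane's result on maps of adjunctions, you prove it directly via adjoint transposition and the triangle identities---a self-contained variant of the same argument.
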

\begin{proof}
Since $L \dashv R$ and $L' \dashv R'$ induce the same monad, their
unit must coincide $\eta = \eta'$.
Together with the commutativity properties $KL = L'$ and $R'K = R$,
it makes a comparison functor a special case of a \emph{map of adjunctions}.
Then by Proposition 1 in \cite[page~99]{MacLane1978},
it holds that $K \epsilon = \epsilon'K$, and we have
\begin{align*}
  \interp{KD}{g} &= R'(\epsilon'_{KD} \vcomp L'g) = R'(K\epsilon_D \vcomp L'g) \\
                 &= R \epsilon_D \vcomp R'L'g = R \epsilon_D \vcomp RLg = \interp{D}{g}
\end{align*}
\end{proof}

This lemma implies that if there is a comparison functor $K$ from some
resolution $L \dashv R : \CatD \to \CatC$ to $L' \dashv R' : \CatD' \to \CatC$
of the monad $P$, then $K$ can \emph{translate} a $\CatD$ object to a $\CatD'$
object that preserves the induced interpretation.
Thus the expressive power of $\CatD$ for interpreting $P$ is not greater than
$\CatD'$, in the sense that every $\interp{D}{g}$ that one can obtain
from $D$ in $\CatD$ can also be obtained by an algebra $K D$ in $\CatD'$.
Thus the three kinds of algebras for interpreting scoped operations have 
the same expressivity if we can construct a circle of comparison functors
between their categories, which is what we do in the following.

\paragraph{Translating to EM Algebras}
\label{sec:toEM}
As shown in \cite{MacLane1978}, an important property of the Eilenberg-Moore
adjunction is that it is the terminal object in the category $\CatRes[M]$ for
any monad $M$, which means that there \emph{uniquely exists} a comparison
functor from \emph{every} resolution to the Eilenberg-Moore resolution.
Specifically, given a resolution $\abracket{\CatD, L, R, \eta, \counit}$ of a
monad $M$, the unique comparison functor $K$ from $\CatD$ to the category
$\CatC^M$ of the Eilenberg-Moore algebras is
\[
KD = \big(M(RD) = RLRD \xrightarrow{R\epsilon_D} RD\big)
\qquad\text{ and }\qquad
K(D \xrightarrow{f} D') = Rf
\]

\begin{lemma}
There uniquely exist comparison functors $\Kb : \Ix\Alg \to \CatC^P$ 
and $\Kb[Fn] : \Fn\Alg \to \CatC^P$ from the resolutions of indexed algebras
and functorial algebras to the resolution of EM algebras.
\end{lemma}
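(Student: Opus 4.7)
The plan is to invoke the terminal-object property of the Eilenberg-Moore resolution recalled immediately above. By \autoref{thm:EisoT} together with Pir\'og et al.'s analogous result for the indexed-algebra adjunction~(\ref{eq:IxAdj}), both adjunctions under consideration induce monads canonically isomorphic to $P$. Using these monad isomorphisms to transport structure, both $\Free_\Fn \upC \dashv \downC \Ul_\Fn$ and $\Free_\Ix \upharpoonright \dashv \downharpoonleft \Ul_\Ix$ become resolutions of $P$ in the sense of \autoref{def:resCmp}. Since $\CatC^P$ is the terminal object of $\CatRes[P]$ (classical; see \cite{MacLane1978}), the existence and uniqueness of the two comparison functors $\Kb$ and $\Kb[Fn]$ are immediate.

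For concreteness one can exhibit them via the standard formula. Writing $L \dashv R : \CatD \to \CatC$ with counit $\epsilon$ for either source resolution, the unique comparison functor into $\CatC^P$ sends an object $D$ to the EM algebra carried by $RD$ with structure map
\[
P(RD) \iso RLRD \xrightarrow{R \epsilon_D} RD,
\]
and sends a morphism $f : D \to D'$ to $R f$. Instantiating with $L = \Free_\Fn \upC$ and $R = \downC \Ul_\Fn$ gives $\Kb[Fn]$; instantiating with $L = \Free_\Ix \upharpoonright$ and $R = \downharpoonleft \Ul_\Ix$ gives $\Kb$. The EM-algebra laws (unit and associativity) for this structure map follow, as usual, from the triangle identities of the adjunction, once one knows that $RL$ coincides with $P$ as a monad.

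The only nontrivial bookkeeping is the monad-isomorphism transport: one must verify that the units of the two source adjunctions agree with the unit of $P$ along the canonical isomorphisms $RL \iso P$, and that the candidate structure map $R\epsilon_D$ satisfies the EM axioms for $P$ rather than merely for $RL$. Both facts follow from the fact that the isomorphism $RL \iso P$ is by construction an isomorphism of monads and therefore respects unit and multiplication. This is where the only real work lies, and it amounts to chasing two small naturality squares; once completed, no further obstacle remains, and the terminal-object property of $\CatC^P$ immediately delivers both functors together with their uniqueness.
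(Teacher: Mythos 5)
Your proposal is correct and follows essentially the same route as the paper: the lemma is a direct consequence of the terminal-object property of the Eilenberg--Moore resolution in $\CatRes[P]$, with the comparison functors given by the standard formula $D \mapsto R\epsilon_D$. Your extra remark about transporting along the isomorphism $RL \iso P$ (rather than a strict equality) is a genuine subtlety that the paper only acknowledges later, in the appendix proof of \autoref{lem:KLisCMP}, where it notes that the definition of comparison functor should strictly speaking be generalised to account for this.
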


\subsection{Translating EM Algebras to Functorial Algebras}
\label{sec:EMtoFn}

Now we construct a comparison functor $\KL : \CatC^P \to \Fn\Alg$ translating
EM algebras to functorial ones.
The idea is straightforward: 
given an EM algebra $X$, we map it to the functorial algebra with $X$ for
interpreting the outermost layer and the functor $P$ for interpreting the inner
layers, which essentially leaves the inner layers uninterpreted before they get
to the outermost layer.

Since $\CatC^P$ is isomorphic to $(\Sigma + \Gamma \hcomp P)\Alg$,
we can define $\KL$ on $(\Sigma + \Gamma \hcomp P)$-algebras instead.
Given any $\abracket{X : \CatC, \alpha : (\Sigma + \Gamma \hcomp P) X \to X}$,
it is mapped by $\KL$ to the functorial algebra
\[
\abracket{P,\ X,\ \iniso : G P \to P,\ \alpha : (\Sigma + \Gamma \hcomp P) X \to X}
\]
and for any morphism $f$ in $(\Sigma + \Gamma \hcomp P)\Alg$, it is mapped to
$\abracket{\identity_P, f}$.
To show $\KL$ is a comparison functor, we only need to show that it commutes
with the left and right adjoints of both resolutions.
Details can be found in \refapp{app:proofs}.

\begin{lemma}\label{lem:KLisCMP}
Functor $\KL$ is a comparison functor from the Eilenberg-Moore resolution of 
$P$ to the resolution $\Free_\Fn \upC \dashv \downC \Ul_\Fn$ of functorial
algebras.
\end{lemma}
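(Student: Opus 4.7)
The plan is to verify the two defining equations of a comparison functor directly: namely, $\downC\,\Ul_\Fn \circ \KL = \Ul^{\texttt{EM}}_P$ and $\KL \circ \Free^{\texttt{EM}}_P = \Free_\Fn \upC$, where $\Ul^{\texttt{EM}}_P \dashv \Free^{\texttt{EM}}_P$ is the Eilenberg–Moore resolution (identified, via the iso $\CatC^P \cong (\Sigma + \Gamma \hcomp P)\Alg$, with the free-forgetful adjunction for $\Sigma + \Gamma \hcomp P$).

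The right-adjoint equation is essentially bookkeeping. Given an object $\abracket{X, \alpha : (\Sigma + \Gamma \hcomp P) X \to X}$ of $\CatC^P$, the functor $\KL$ produces $\abracket{P, X, \iniso, \alpha}$; applying $\Ul_\Fn$ strips the structure maps to $\abracket{P, X}$, and $\downC$ projects to $X$, which is exactly $\Ul^{\texttt{EM}}_P \abracket{X, \alpha}$. The morphism part is equally immediate since $\KL$ sends $f : \abracket{X, \alpha} \to \abracket{X', \alpha'}$ to $\abracket{\identity_P, f}$, whose second projection is $f$.

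For the left-adjoint equation, the key is to compute $\Free_\Fn \upC\, A = \Free_\Fn \abracket{0, A}$ via Lemma~\ref{lem:freeJ}. This gives the functorial algebra
\[
\FFn \abracket{0, A} = \bigl\langle G^* 0,\ (I_{G^* 0})^* A,\ \op^{G^*}_0,\ \op^{(I_{G^* 0})^*}_A \bigr\rangle.
\]
The initial-algebra characterisation $P = \mu G = G^* 0$ identifies the endofunctor carrier with $P$ and makes $\op^{G^*}_0$ equal to $\iniso : G P \to P$. Substituting this into the object slot yields $I_{G^* 0} = I_P = \Sigma + \Gamma \hcomp P$, so $(I_{G^* 0})^* A = (\Sigma + \Gamma \hcomp P)^* A$, which by (\ref{eq:e:from:free}) is $P A$, and the structure map $\op^{(I_{G^* 0})^*}_A$ is exactly the universal $(\Sigma + \Gamma \hcomp P)$-algebra $\op_A$ on $P A$. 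But this is precisely $\KL$ applied to the free EM algebra $\Free^{\texttt{EM}}_P A = \abracket{P A, \op_A}$. The same reasoning, run on morphisms, shows that the two composites agree on arrows.

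The main obstacle is passing cleanly from Lemma~\ref{lem:freeJ}'s \emph{natural isomorphism} $\Free_\Fn \cong \FFn$ to the on-the-nose equality required by Definition~\ref{def:resCmp}. I would address this by fixing, once and for all, the concrete representative $\FFn$ as the chosen left adjoint (the initial-algebra construction of Lemma~\ref{lem:free:alg:simple} determines it up to unique isomorphism, and nothing in the rest of the paper depends on which representative is used); alternatively, one weakens Definition~\ref{def:resCmp} to allow comparison functors commuting with the adjoints up to natural isomorphism, which is the standard convention. Either way, the identifications $G^* 0 = P$ and $I_P^* = P$ are the substantive content and follow directly from results already established.
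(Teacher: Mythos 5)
Your proposal is correct and follows essentially the same route as the paper's proof: both verify the right-adjoint equation by direct unfolding and reduce the left-adjoint equation to Lemma~\ref{lem:freeJ} together with the identifications $P \cong G^*0$ and $I_P = \Sigma + \Gamma \hcomp P$, and both flag that the resulting agreement is only up to isomorphism, to be repaired by fixing a representative or relaxing Definition~\ref{def:resCmp}.
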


\subsection{Translating Functorial Algebras to Indexed Algebras}
\label{sec:FntoIx}
At this point we have comparison functors $\Ix\Alg \xrightarrow{\Kb} \CatC^P
\xrightarrow{\KL} \Fn\Alg$.
To complete the circle of translations, we construct a comparison functor
$\KK : \Fn\Alg \to \Ix\Alg$ in this subsection.
The idea of this translation is that given a functorial algebra carried
by endofunctor $H : \CatEndo$ and object $X : \CatC$, we map it to an indexed
algebra by iterating the endofunctor $H$ on $X$.
More precisely, $\KK : \Fn\Alg \to \Ix\Alg$ maps a functorial algebra 
\[\abracket{H : \CatEndo,\ X : \CatC,\ \alpha^G : \Id + \Sigma \hcomp H + \Gamma \hcomp H \hcomp H \to H,
\ \alpha^I : \Sigma X + \Gamma H X \to X}\]
to an indexed algebra carried by $A : \CatCN$ such that $A_i = H^{i} X$, i.e.\ 
iterating $H$ $i$-times on $X$.
The structure maps of this indexed algebra $\abracket{a : \Sigma A \to A,\ 
d : \Gamma (\previous A) \to A,\ p : A \to (\previous A)}$ are given by
\begin{align*}
a_0 =\, & (\alpha^I \vcomp \iota_1) : \Sigma X \to X  
& a_{i+1} =\, & (\alpha^G_{H^{i}X} \vcomp \iota_2) : \Sigma H H^i X \to H^{i+1} X \\
d_0 =\, & (\alpha^I \vcomp \iota_2) : \Gamma H X \to X
& d_{i+1} =\, &(\alpha^G_{H^{i}X} \vcomp \iota_3) : \Gamma H H H^{i} X \to H^{i+1} X
\end{align*}
and $p_i = \alpha^G_{H^i X} \vcomp \iota_1 : H^i X \to HH^i X$.
On morphisms, $\KK$ maps a morphism $\abracket{\tau : H \to H', f : X \to X'}$ in $\Fn\Alg$
to $\sigma : H^{i} X \to H'^{i}X'$ in $\Ix\Alg$ such that
$\sigma_0 = f$ and $\sigma_{i+1} = \tau \hcomp \sigma_i $ where $\hcomp$ is
horizontal composition.

\begin{lemma}\label{lem:KK:is:comp}
$\KK$ is a comparison functor from the resolution 
$\Free_\Fn \upC \dashv \downC \Ul_\Fn$ of functorial algebras to the resolution
$\Free_\Ix \mathop{\upharpoonright} \dashv \mathop{\downharpoonleft} \Ul_\Ix$
of indexed algebras.
\end{lemma}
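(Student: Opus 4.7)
The plan is to verify the three requirements of \autoref{def:resCmp}: (i) $\KK$ is a functor, (ii) the right-adjoint identity $\downharpoonleft \Ul_\Ix \circ \KK = \downC \Ul_\Fn$, and (iii) the left-adjoint identity $\KK \circ \Free_\Fn \upC = \Free_\Ix \circ \upharpoonright$.

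For functoriality, given a morphism $\abracket{\tau, f}$ in $\Fn\Alg$, the components $\sigma_i$ defined inductively by $\sigma_0 = f$ and $\sigma_{i+1} = \tau \hcomp \sigma_i$ must commute with each of the indexed-algebra structure maps $a_i, d_i, p_i$. At level $0$ this reduces to the $\alpha^I$-square required of $\abracket{\tau, f}$, and at level $i+1$ it reduces to the $\alpha^G$-square together with naturality of $\tau$, with the induction hypothesis carrying the commutation upward. The right-adjoint identity is then immediate by direct computation: $\downharpoonleft \Ul_\Ix (\KK \abracket{H, X, \alpha^G, \alpha^I}) = A_0 = H^0 X = X = \downC \Ul_\Fn \abracket{H, X, \alpha^G, \alpha^I}$, and likewise on morphisms by taking the $0$-th component.

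The main obstacle is the left-adjoint identity. Using \autoref{lem:freeJ}, $\Free_\Fn \upC X$ unfolds to $\langle G^{*} 0,\, (I_{G^{*}0})^{*} X,\, \op^{G^{*}}_0,\, \op^{(I_{G^{*}0})^{*}}_X \rangle$; identifying $G^{*} 0$ with the initial $G$-algebra $P$ and using (\ref{eq:e:from:free}) to rewrite $(I_P)^{*} X$ as $PX$, the image $\KK \Free_\Fn \upC X$ has carrier $A_i = P^{i+1} X$, with structure maps obtained by instantiating $\iniso$ at $P^{i} X$. On the other hand, $\Free_\Ix \upharpoonright X$ is determined by the recursive equation $A_i \cong (\upharpoonright X)_i + \Sigma A_i + \Gamma A_{i+1} + A_{i-1}$ with $A_{-1} = 0$ and $(\upharpoonright X)_{i+1} = 0$, which I would solve level-by-level to show that its unique solution is again $P^{i+1} X$ with matching structure maps---intuitively, each level of the indexed carrier corresponds to one more layer of scope nesting, which is exactly what iterating $P$ realises.

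A cleaner, more abstract alternative is to verify the universal property of $\Free_\Ix \upharpoonright X$ directly at $\KK \Free_\Fn \upC X$: any morphism $\upharpoonright X \to \Ul_\Ix \abracket{B, a, d, p}$ in $\CatCN$ is determined by a single morphism $X \to B_0$ in $\CatC$, and this extends uniquely through the $G$-algebra structure that $\abracket{a, d, p}$ induces on the sequence $(B_i)_{i \in \Nat}$ (via $a$ for the $\Sigma$-summand, $d$ for the $\Gamma$-summand, and $p$ for the $\Id$-summand), by the universal property of $P = \mu G$. The delicate point in either strategy is matching the $\later$-summand of $\Ix$, controlled by $p$, against the $\Id$-summand of $\iniso$ on the functorial side, so that the levels above $0$ produced by $\KK$ agree with those produced by $\Free_\Ix \upharpoonright$ on the nose rather than merely up to isomorphism.
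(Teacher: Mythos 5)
Your proof follows essentially the same route as the paper's: the right-adjoint identity by direct computation at level $0$, and the left-adjoint identity by exhibiting both $\KK \Free_\Fn \upC X$ and $\Free_\Ix \mathop{\upharpoonright} X$ as the indexed algebra carried by $i \mapsto P^{i+1}X$ with matching structure maps, the former via \autoref{lem:freeJ}. The only divergence is that the paper simply cites Pir\'{o}g et al.'s explicit characterisation of $\Free_\Ix \mathop{\upharpoonright} X$ as $P^+_X$, whereas you propose to re-derive it; if you do, note that the system $A_i \cong (\upharpoonright X)_i + \Sigma A_i + \Gamma A_{i+1} + A_{i-1}$ cannot literally be solved ``level-by-level'' because each $A_i$ depends on $A_{i+1}$ above it, so the whole family must be handled as a single initial algebra in $\CatCN$ --- which is what your universal-property alternative amounts to, and is the nontrivial content outsourced to the citation in the paper's version.
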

\begin{proof}[Proof sketch]
Again we only need to show the required commutativities for $\KK$ to be a comparison
functor:
$
\downC \Ul_\Fn \iso \mathop{\downharpoonleft} \Ul_\Ix \KK
$
and
$
\KK \Free_\Fn \upC \iso \Free_\Ix \mathop{\upharpoonright} 
$.
The first one is easy, and the second one follows from Pir\'{o}g et al.\ \cite{PirogSWJ18}'s
explicit characterisation of $\Free_\Ix \mathop{\upharpoonright} X$.
More details can be found in \refapp{app:proofs}.
\end{proof}

Since comparison functors preserve interpretation (\autoref{lem:CFPreservesI}),
the lemma above implies that the expressivity of functorial algebras is not
greater than indexed ones.
Together with the comparison functors defined earlier, we conclude that the
three kinds of algebras---indexed, functorial and Eilenberg-Moore algebras---have the
same expressivity for interpreting scoped operations.
\autoref{fig:cmpFunctors} summarises the comparison functors and resolutions that
we have studied.

\begin{remark}
Although the three kinds of algebras have the same expressivity in theory,
they structure the interpretation of scoped operations in different ways: 
EM algebras impose no constraint on how the part of syntax enclosed by scopes
is handled; 
indexed algebras demand them to be handled layer by layer but impose no coherent
conditions between the layers;
functorial algebras additionally force all inner layers must be handled in
a uniform way by an endofunctor.

On the whole, it is a trade-off \emph{simplicity} and \emph{structuredness}:
EM algebras are the simplest for implementation, whereas the structuredness of
functorial algebras make them easier to reason about.
This is another instance of the preference for structured programming over
unstructured language features, in the same way as structured loops being
favoured over \texttt{goto}, although they have the same expressivity in theory
\cite{Dijkstra68}.
\end{remark}

\section{Fusion Laws of Interpretation}
\label{sec:hybrid-adjoint}

A crucial advantage of the adjoint-theoretic approach to syntax and semantics is
that the naturality of an adjunction directly offers \emph{fusion laws} of
interpretation that fuse a morphism after an interpretation into a single
interpretation, which have proven to be a powerful tool for reasoning and
optimisation
\cite{Wad88Def,Tak95Fusion,Coutts07Stream,HHJ11The,WS15Fusion,YangWu21}.
In this section, we present the fusion law for functorial algebras.

\subsection{Fusion Laws of Interpretation}\label{sec:fusion}

Recall that given any resolution $L \dashv R$ with counit $\epsilon$ of some
monad $M : \CatC \to \CatC$ where $L : \CatC \to \CatD$, for any $g : A \to
RD$, we have an interpretation morphism
\[
\interp{D}{g} = R ( \epsilon_D \vcomp Lg) : M A \to RD
\]
Then whenever we have a morphism in the form of $(f \vcomp \interp{D}{g})$---an
interpretation followed by some morphism---the following \emph{fusion law}
allows one to fuse it into a single interpretation morphism.

\begin{lemma}[Interpretation Fusion]\label{lem:fusion}
Assume $L \dashv R$ is a resolution of monad $M : \CatC \to \CatC$ where $L :
\CatC \to \CatD$.
For every $D : \CatD$, $g : A \to R D$ and $f : R D \to X$, 
if there is some $D'$ and $h : D \to D'$ in $\CatD$ such that $R D' = X$ and
$R h = f$, then
\begin{equation}\label{eq:fusion}
f \vcomp \interp{D}{g} = \interp{D'}{(f \vcomp g)}
\end{equation}
\end{lemma}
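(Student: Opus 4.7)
The plan is to unfold the definition of the interpretation morphism on both sides of (\ref{eq:fusion}) and then reduce the statement to the naturality of the counit $\epsilon$ at the morphism $h : D \to D'$. Concretely, by the definition just recalled before the lemma, we have $\interp{D}{g} = R(\epsilon_D \vcomp Lg)$ and $\interp{D'}{(f \vcomp g)} = R(\epsilon_{D'} \vcomp L(f \vcomp g))$. Using the functoriality of $L$ and $R$, and the assumption $Rh = f$, the left-hand side of (\ref{eq:fusion}) becomes $R(h \vcomp \epsilon_D \vcomp Lg)$ and the right-hand side becomes $R(\epsilon_{D'} \vcomp LRh \vcomp Lg)$.

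The remaining step is then to observe that both sides agree provided
\[
h \vcomp \epsilon_D = \epsilon_{D'} \vcomp LRh,
\]
which is precisely the naturality square of the counit $\epsilon : LR \to \Id_\CatD$ instantiated at the morphism $h : D \to D'$. Postcomposing with $Lg$ and then applying $R$ gives the desired equation.

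I do not anticipate any real obstacle: once the definitions are unfolded, the result is essentially a one-line consequence of the naturality of $\epsilon$ together with functoriality. The only thing worth being careful about is that the hypothesis is stated as the existence of a witness $h$ in $\CatD$ with $Rh = f$ (rather than $f$ being given as the image of a $\CatD$-morphism a priori); this is exactly what allows us to replace the hypothesis $f$ by $Rh$ inside the argument of $R$, and thus bring the naturality of $\epsilon$ into play. No further properties of the particular monad $P$ or of the three resolutions discussed earlier are needed; the lemma holds for any resolution of any monad, which is what makes it immediately applicable to functorial algebras, indexed algebras, and Eilenberg--Moore algebras alike.
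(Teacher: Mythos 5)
Your proof is correct and follows essentially the same route as the paper's: unfold the definition of the interpretation morphism, use $Rh = f$ and functoriality to rewrite both sides, and conclude by the naturality square of the counit $\epsilon$ at $h$.
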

\begin{proof}
We have $f \vcomp \interp{D}{g} = R h \vcomp R (\epsilon_D \vcomp L g) = R (h \vcomp \epsilon_D \vcomp L g)$.
Then by the naturality of the counit $\epsilon$, $R (h \vcomp \epsilon_D \vcomp
L g) = R (\epsilon_{D'} \vcomp L (R h \vcomp g)) = \interp{D'}{(f \vcomp g)}$.
\end{proof}

Applying the lemma to the three resolutions of $P$ gives us three fusion laws:
for any $D : \CatD$ where $\CatD \in \{\Ix\Alg, \Fn\Alg, \CatC^P\}$, one
can fuse $f \vcomp \interp{D}{g}$ into a single interpretation if one can
make $f$ a $\CatD$-homomorphism.
Particularly, the following is the fusion law for functorial algebras.

\begin{corollary*}[Fusion Law for Functorial Algebras]\label{lem:fn:fusion}
Let $\hat{\alpha_1} = \tuple{H, X_1, \alpha^G_1, \alpha^I_2}$ be a functorial
algebra (\autoref{def:fctAlgs}) and $g : A \to X_1$, $f : X_1 \to X_2$ be any
morphisms in $\CatC$.
If there is a functorial algebra $\hat{\alpha_2} = \tuple{H_2, X_2,
\alpha^G_2, \alpha^I_2}$ and a functorial algebra morphism $\tuple{\sigma : H_1
\to H_2, h : X_1 \to X_2}$, then
\[
f \vcomp \interp{\hat{\alpha_1}} g = \interp{\hat{\alpha_2}} (f \vcomp g)
\]
\end{corollary*}

\begin{example}
Let $\hat{\alpha}$ be the functorial algebra of nondeterminism with \ensuremath{\Varid{once}} in
\autoref{ex:once:functorial} and $\ensuremath{\Varid{len}} : \ensuremath{\Conid{List}}\,A \to \Nat$ be the function
mapping a list to its length.
Then using the fusion law, $\ensuremath{\Varid{len}} \vcomp \interp{\hat{\alpha}} g =
\interp{\hat{\beta}} (\ensuremath{\Varid{len}} \vcomp g)$ if we can find a suitable functorial
algebra $\hat{\beta} : \Fn\Alg $ and $h : \hat{\alpha} \to \hat{\beta}$ s.t.\
$\downarrow\Ul_\Fn h = \ensuremath{\Varid{len}}$.
In fact, a suitable $\hat{\beta}$ is just the functorial algebra in
\autoref{ex:once:nat:alg} and $h = \abracket{\identity, \ensuremath{\Varid{len}}}$.
\end{example}

\begin{example}\label{ex:hfold}
Although Pir\'{o}g et al.\ \cite{PirogSWJ18} propose the adjunction (\ref{eq:IxAdj}) to interpret
scoped operations with indexed algebras, their Haskell implementation is not a 
faithful implementation of the interpretation morphism (\ref{eq:evalIxDef}), but
rather a more efficient one skipping the step of transforming $P$ to the
isomorphic free indexed algebra $(\downharpoonleft \Ul_\Ix \Free_\Ix
\upharpoonright)$.
However, it is previously unclear whether this implementation indeed coincides
with the interpretation morphism (\ref{eq:evalIxDef}) due to the discrepancy
between the syntax monad $P$ and indexed algebras.

This issue is in fact one of the original motivations for us to develop functorial
algebras---a way to interpret $P$ that directly follows the syntactic structure.
Using the comparison functors to transform between indexed and functorial algebras,
we can reason about Pir\'{o}g et al.\ \cite{PirogSWJ18}'s implementation with functorial
algebras, and its correctness can be established using fusion laws.
Due to limited space, this extended case study is in \refapp{app:hfold}.
\end{example}

\section{Related Work}
\label{sec:related}

The most closely related work is that of Pir\'{o}g et al.\ \cite{PirogSWJ18} on
categorical models of scoped effects. That work in turn builds on Wu et al.\ \cite{WuSH14}
who introduced the notion of scoped effects after identifying modularity
problems with using algebraic effect handlers for catching
exceptions~\cite{PlotPret13Hand}.
Scoped effects have found their way into several Haskell implementations of
algebraic effects and handlers~\cite{fused-effects,polysemy,eff}.

\paragraph{Effect Handlers and Modularity} Spivey~\cite{Spivey1990}, Moggi~\cite{Moggi89a} and
Wadler~\cite{Wadler:1990} initiated monads for modeling and
programming with computational effects. Soon after, the desire arose to define
complex monads by combining modular definitions of individual
effects~\cite{steele,jones93}, and monad transformers were developed to meet
this need~\cite{liang95monad}.
Yet, several years later, algebraic effects were proposed as an alternative more
structured approach for defining and combining computational
effects~\cite{fossacs/PlotkinP02,GenericEffects,HPP06Combine}.  The addition of
handlers~\cite{PlotPret13Hand} has made them practical for implementation and
many languages and libraries have been developed since.  
Schrijvers et al.\ \cite{SchrijversPWJ19}
have characterized modular handlers by means of modular carriers, and shown that
they correspond to a subclass of monad transformers. 
Forster et al.\ \cite{ForsterKLP19} have
also shown that algebraic effects, monads and delimited control are
macro-expressible in terms of each other in an untyped language but not in a
simply typed language.

Scoped operations are generally not algebraic operations in the original design
of algebraic effects \cite{fossacs/PlotkinP02}, but as we have seen in
\autoref{sec:EM-algebras}, an alternative view on Eilenberg-Moore algebras of
scoped operations is regarding them as handlers of \emph{algebraic} operations
of signature $\Sigma + \Gamma P$.
However, the functor $\Sigma + \Gamma P$ involves the type $P$ modelling
computations, and thus it is not a valid signature of algebraic effects in the
original design of effect handlers \cite{PlotPret09Hand,PlotPret13Hand}, in
which the signature of algebraic effects can only be built from some \emph{base
types} to avoid the interdependence of the denotations of signature functors
and computations.
In spite of that, many later implementations of effect handlers such as
\textsc{Eff}~\cite{BP14Eff}, \textsc{Koka}~\cite{Koka17} and \textsc{Frank}~
\cite{Lin17DoBe} do not impose this restriction on signature functors (at the
cost that the denotational semantics involves solving recursive
domain-theoretic equations), and thus scoped operations can be implemented in
these languages with EM algebras as handlers.  

Other variations of scoped effects have been suggested.
Recently, Poulsen et al.\ \cite{PoulsenVS21} and {van den Berg} et al.\
\cite{berg2021latent} have proposed a notion of \emph{staged} or \emph{latent}
effect, which is a variant of scoped effects, for modelling the deferred
execution of computations inside lambda abstractions and similar constructs. 
Ahman and Pretnar \cite{AhmanP21} investigate \emph{asynchronous effects}, and they note that
interrupt handlers are in fact scoped operations.
We have not yet investigated this in our framework, but it will be an
interesting use case.

\paragraph{Abstract Syntax}
This work focusses on the problem of abstract
syntax and semantics of programs. 
The practical benefit of abstract syntax is that it allows for \emph{generic
programming} in languages like Haskell that have support for, e.g.\ type
classes, \GADTs~\cite{JohannG08} and so on. As an example, Swierstra \cite{Swierstra08}
showed that it is possible to modularly create compilers by formalising syntax
in Haskell.

The problem of formalising abstract syntax categorically for operations with
variable binding was first addressed by Fiore et al.\  \cite{FiorePT99,FioreT01}.
Subsequently,~Ghani and Uustalu \cite{GhaniUH06} model the abstract
syntax of explicit substitutions as an initial algebra in the endofunctor
category and show that it is a monad.
Pir\'{o}g et al.\ \cite{PirogSWJ18} and this paper use a monad $P$, which is a slight
generalisation of the monad of explicit substitutions, to model the syntax of
scoped operations.
The datatype underlying $P$ is an instance of \emph{nested datatypes} studied
by Bird and Paterson \cite{BirdP99} and Johann and Ghani \cite{JohannG07}.

In this paper we have not treated \emph{equations} on effectful
operations, which are both theoretically and practically important.
Plotkin and Power \cite{fossacs/PlotkinP02} show that theories of various effects with suitable
equations \emph{determine} their corresponding monads, and later
Hyland et al.\ \cite{HPP06Combine} show that certain combinations of effect theories are
equivalent to monad transformers.
Equations are also used for reasoning about programs with algebraic
effects and handlers \cite{PP08Logic,YangWu21,Kiselyov21}.
Possible ways to extend scoped effects with equations include the approach in
\cite{Kelly82} (\autoref{rem:equations}), the categorical framework of
\emph{equational systems} \cite{Fiore09EqSys}, second order Lawvere theories
\cite{Fiore10Sec}, and syntactic frameworks like \cite{Staton13}.

\section{Conclusion}
\label{sec:conclusion}

The motivation of this work is to develop a modular approach to the syntax and
semantics of scoped operations.
We believe our proposal, functorial algebras, is at a sweet spot in the
trade-off between structuredness and simplicity, allowing practical examples of
scoped operations to be programmed and reasoned about naturally, and
implementable in modern functional languages such as Haskell and OCaml.
We put our model and two other models for interpreting scoped effects in
the same framework of resolutions of the monad modelling syntax. 
By constructing interpretation-preserving functors between the three kinds of
algebras, we showed that they have equivalent expressivity for interpreting
scoped effects, although they form non-equivalent categories.
The uniform theoretical framework also induces fusion laws of interpretation
in a straightforward way.

There are two strains of work that should be pursued from here.
The first one would be investigating ways to compose algebras of scoped operations.
The second one would be the design of a language supporting handlers of scoped
operations natively and its type system and operational semantics.
\section*{Acknowledgements}
This work is supported by EPSRC grant number EP/S028129/1 on `Scoped
Contextual Operations and Effects', by FWO project G095917N, and KU Leuven
project C14/20/079.
The authors would like to thank the anonymous reviewers for their constructive
feedback.

\bibliography{../references}

\clearpage

\appendix
\section{Figures}\label{app:figures}

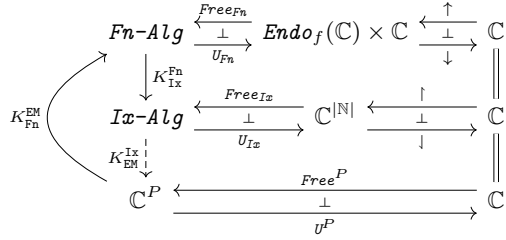
\begin{figure}[h]
\begin{center}
\begin{tikzcd}[row sep=15pt]
	\Fn\Alg & {\CatEC} & \CatC \\
	\Ix\Alg & {\CatCN} & \CatC \\
	{\CatC^P} && \CatC

	\arrow["\KK", from=1-1, to=2-1]
	\arrow["\Kb"', dashed, from=2-1, to=3-1]
	\arrow[""{name=0, anchor=center, inner sep=0}, "{\Free_\Fn}"', shift right=2, from=1-2, to=1-1]
	\arrow[""{name=1, anchor=center, inner sep=0}, "\Ul_\Fn"', shift right=2, from=1-1, to=1-2]
	\arrow[""{name=2, anchor=center, inner sep=0}, "{\Free_\Ix}"', shift right=2, from=2-2, to=2-1]
	\arrow[""{name=3, anchor=center, inner sep=0}, "\Ul_\Ix"', shift right=2, from=2-1, to=2-2]
	\arrow[""{name=4, anchor=center, inner sep=0}, "\upharpoonright"', shift right=2, from=2-3, to=2-2]
	\arrow[""{name=5, anchor=center, inner sep=0}, "\downharpoonleft"', shift right=2, from=2-2, to=2-3]
	\arrow[""{name=6, anchor=center, inner sep=0}, "\upC"', shift right=2, from=1-3, to=1-2]
	\arrow[""{name=7, anchor=center, inner sep=0}, "\downC"', shift right=2, from=1-2, to=1-3]
	\arrow[""{name=8, anchor=center, inner sep=0}, "{\Ul^P}"', shift right=2, from=3-1, to=3-3]
	\arrow[""{name=9, anchor=center, inner sep=0}, "{\Free^P}"', shift right=2, from=3-3, to=3-1]
	\arrow[Rightarrow, no head, from=1-3, to=2-3]
	\arrow[Rightarrow, no head, from=2-3, to=3-3]
	\arrow["\KL", shift left=2, curve={height=-30pt}, from=3-1, to=1-1]
	\arrow["\dashv"{anchor=center, rotate=-90}, draw=none, from=9, to=8]
	\arrow["\dashv"{anchor=center, rotate=-90}, draw=none, from=6, to=7]
	\arrow["\dashv"{anchor=center, rotate=-90}, draw=none, from=4, to=5]
	\arrow["\dashv"{anchor=center, rotate=-90}, draw=none, from=0, to=1]
	\arrow["\dashv"{anchor=center, rotate=-90}, draw=none, from=2, to=3]
\end{tikzcd}
\end{center}
\caption{The resolutions of functorial, indexed and Eilenberg-Moore algebras}\label{fig:cmpFunctors}
\end{figure}

\begin{figure}[h]
\begin{codepage}
\begin{nscenter}\begin{hscode}\SaveRestoreHook
\column{B}{@{}>{\hspre}l<{\hspost}@{}}%
\column{28}{@{}>{\hspre}l<{\hspost}@{}}%
\column{34}{@{}>{\hspre}l<{\hspost}@{}}%
\column{45}{@{}>{\hspre}l<{\hspost}@{}}%
\column{73}{@{}>{\hspre}l<{\hspost}@{}}%
\column{E}{@{}>{\hspre}l<{\hspost}@{}}%
\>[B]{}\mathbf{data}\;\Conid{EMAlg}\;\Sigma\;\Gamma\;\Varid{x}\mathrel{=}\Conid{EM}\;{}\<[34]%
\>[34]{}\{\mskip1.5mu \Varid{callEM}{}\<[45]%
\>[45]{}\mathbin{::}\Sigma\;\Varid{x}\to \Varid{x},\Varid{enterEM}{}\<[73]%
\>[73]{}\mathbin{::}\Gamma\;(\Conid{Prog}\;\Sigma\;\Gamma\;\Varid{x})\to \Varid{x}\mskip1.5mu\}{}\<[E]%
\\[\blanklineskip]%
\>[B]{}\Varid{handle}_{\Varid{EM}}\mathbin{::}(\Conid{Functor}\;\Sigma,\Conid{Functor}\;\Gamma)\Rightarrow (\Conid{EMAlg}\;\Sigma\;\Gamma\;\Varid{x})\to (\Varid{a}\to \Varid{x})\to \Conid{Prog}\;\Sigma\;\Gamma\;\Varid{a}\to \Varid{x}{}\<[E]%
\\
\>[B]{}\Varid{handle}_{\Varid{EM}}\;\Varid{alg}\;\Varid{gen}\;(\Conid{Return}\;\Varid{x}){}\<[28]%
\>[28]{}\mathrel{=}\Varid{gen}\;\Varid{x}{}\<[E]%
\\
\>[B]{}\Varid{handle}_{\Varid{EM}}\;\Varid{alg}\;\Varid{gen}\;(\Conid{Call}\;\Varid{op}){}\<[28]%
\>[28]{}\mathrel{=}(\Varid{callEM}\;\Varid{alg}\hsdot{\cdot}{.\ }\Varid{fmap}\;(\Varid{handle}_{\Varid{EM}}\;\Varid{alg}\;\Varid{gen}))\;\Varid{op}{}\<[E]%
\\
\>[B]{}\Varid{handle}_{\Varid{EM}}\;\Varid{alg}\;\Varid{gen}\;(\Conid{Enter}\;\Varid{op}){}\<[28]%
\>[28]{}\mathrel{=}(\Varid{enterEM}\;\Varid{alg}\hsdot{\cdot}{.\ }\Varid{fmap}\;(\Varid{fmap}\;(\Varid{handle}_{\Varid{EM}}\;\Varid{alg}\;\Varid{gen})))\;\Varid{op}{}\<[E]%
\ColumnHook
 \end{hscode}\resethooks
\end{nscenter}
\end{codepage}
\caption{Haskell implementation of EM algebras of $P$ based on \autoref{thm:em:rec}}\label{fig:em:impl}
\end{figure}

\begin{figure}[H]
\begin{codepage}
\begin{nscenter}\begin{hscode}\SaveRestoreHook
\column{B}{@{}>{\hspre}l<{\hspost}@{}}%
\column{3}{@{}>{\hspre}l<{\hspost}@{}}%
\column{10}{@{}>{\hspre}c<{\hspost}@{}}%
\column{10E}{@{}l@{}}%
\column{13}{@{}>{\hspre}l<{\hspost}@{}}%
\column{22}{@{}>{\hspre}l<{\hspost}@{}}%
\column{28}{@{}>{\hspre}l<{\hspost}@{}}%
\column{E}{@{}>{\hspre}l<{\hspost}@{}}%
\>[B]{}\mathbf{data}\;\Conid{Nat}\mathrel{=}\Conid{Zero}\mid \Conid{Nat}\mathbin{+}\mathrm{1}{}\<[E]%
\\[\blanklineskip]%
\>[B]{}\mathbf{data}\;\Conid{IxAlg}\;\Sigma\;\Gamma\;\Varid{a}\mathrel{=}{}\<[E]%
\\
\>[B]{}\hsindent{3}{}\<[3]%
\>[3]{}\Conid{IxAlg}\;{}\<[10]%
\>[10]{}\{\mskip1.5mu {}\<[10E]%
\>[13]{}\Varid{action}{}\<[22]%
\>[22]{}\mathbin{::}\forall \Varid{n}\hsforall \hsdot{\cdot}{.\ }\Sigma\;(\Varid{a}\;\Varid{n})\to \Varid{a}\;\Varid{n}{}\<[E]%
\\
\>[10]{},{}\<[10E]%
\>[13]{}\Varid{demote}{}\<[22]%
\>[22]{}\mathbin{::}\forall \Varid{n}\hsforall \hsdot{\cdot}{.\ }\Gamma\;(\Varid{a}\;(\Varid{n}\mathbin{+}\mathrm{1}))\to \Varid{a}\;\Varid{n}{}\<[E]%
\\
\>[10]{},{}\<[10E]%
\>[13]{}\Varid{promote}{}\<[22]%
\>[22]{}\mathbin{::}\forall \Varid{n}\hsforall \hsdot{\cdot}{.\ }\Varid{a}\;\Varid{n}\to \Varid{a}\;(\Varid{n}\mathbin{+}\mathrm{1})\mskip1.5mu\}{}\<[E]%
\\[\blanklineskip]%
\>[B]{}\Varid{hfold}\mathbin{::}(\Conid{Functor}\;\Varid{f},\Conid{Functor}\;\Varid{g})\Rightarrow \Conid{IxAlg}\;\Varid{f}\;\Varid{g}\;\Varid{a}\to \forall \Varid{n}\mathbin{::}\Conid{Nat}\hsforall \hsdot{\cdot}{.\ }\Conid{Prog}\;\Varid{f}\;\Varid{g}\;(\Varid{a}\;\Varid{n})\to \Varid{a}\;\Varid{n}{}\<[E]%
\\
\>[B]{}\Varid{hfold}\;\Varid{ixAlg}\;(\Conid{Return}\;\Varid{x}){}\<[28]%
\>[28]{}\mathrel{=}\Varid{x}{}\<[E]%
\\
\>[B]{}\Varid{hfold}\;\Varid{ixAlg}\;(\Conid{Call}\;\Varid{op}){}\<[28]%
\>[28]{}\mathrel{=}\Varid{action}\;\Varid{ixAlg}\;(\Varid{fmap}\;(\Varid{hfold}\;\Varid{ixAlg})\;\Varid{op}){}\<[E]%
\\
\>[B]{}\Varid{hfold}\;\Varid{ixAlg}\;(\Conid{Enter}\;\Varid{scope}){}\<[E]%
\\
\>[B]{}\hsindent{3}{}\<[3]%
\>[3]{}\mathrel{=}\Varid{demote}\;\Varid{ixAlg}\;(\Varid{fmap}\;(\Varid{hfold}\;\Varid{ixAlg}\hsdot{\cdot}{.\ }\Varid{fmap}\;(\Varid{promote}\;\Varid{ixAlg}\hsdot{\cdot}{.\ }\Varid{hfold}\;\Varid{ixAlg}))\;\Varid{scope}){}\<[E]%
\ColumnHook
 \end{hscode}\resethooks
\end{nscenter}
\end{codepage}
\caption{The hybrid fold for interpreting monad $P$ with indexed algebras \cite{PirogSWJ18}} \label{fig:hfold}
\end{figure}

\section{Notation}\label{app:notation}
This section summarises the mathematical notation used in this paper.

\paragraph{Typeface Conventions}
We use boldface variable such as $\CatC$ and $\CatD$ for abstract categories and
typewriter font for specific categories such as $\Set$.
Functors and objects are denoted by capitalised letters such as $F, K, G, X, A$
in general, but functors representing signatures of operations are always
denoted by Greek letters $\Gamma$ and $\Sigma$, and some concrete functors are
denoted by typewriter font such as $\Free$ and $\Fn$.
Morphisms and natural transformations are denoted by uncapitalised letters such
as $f$, $g$ and $h$ and Greek letters such as $\alpha$, $\beta$ and $\tau$.

\paragraph{Categories}
For any two categories $\CatC$ and $\CatD$, we write $\CatC \times \CatD$
for their \emph{product category}, whose objects are denoted by $\abracket{X, Y}$
where $X : \CatC$ and $Y : \CatD$ and morphisms are also denoted by $\abracket{f, g}$
where $f : X \to X'$ and $g : Y \to Y'$.
The \emph{functor category} from $\CatC$ to $\CatD$ is denoted by
$\CatD^\CatC$.  Specifically, the category of \emph{finitary endofunctors} on
$\CatC$ is denoted by $\CatEndof$.
Given a monad $P : \CatC \to \CatC$, the category of Eilenberg-Moore algebras is
denoted by $\CatC^P$.
Given an endofunctor $F$ on $\CatC$, the category of $F$-algebras is $F\Alg$.

\paragraph{Functors}

An adjoint pair of functors $L : \CatC \to \CatD$ and $R : \CatD \to \CatC$ is 
denoted by $L \dashv R$ where $L$ is the left adjoint.
The associated unit of the adjunction is $\eta : \Id \to R L$ and the counit is
$\epsilon : L R \to \Id$, where $\Id$ is the identity functor.
Sometimes we write $\eta^{R L}$ or $\epsilon^{R L}$ to make clear the monad for 
(co)units.

For monads, letters $P$, $T$, $M$ are used, and $\mu$ is used for their multiplication.
In particular, $P$ is used for the monad of programs with algebraic and scoped operations,
while $T$ is used for the monad from the adjunction $\Free_{\Fn}
\uparrow \dashv \downarrow \Ul_{\Fn}$.

Functor $G$ denotes the functor encoding the grammar (\ref{eq:g:equation}), and
functor $I$ (\ref{eq:I:def}) and $\Fn$ (\ref{eq:fn:equation}) are used for
defining functorial algebras.
Similarly, functors $\Ix$, $\previous$ and $\later$ are used for defining
indexed algebras (\autoref{sec:indexedAlgOfE}).

In some of the examples in the paper, we make use of the functor $\ensuremath{\Conid{List}} : \Set
\to \Set$ on $\Set$ that sends every set $X$ to the set of finite lists with elements
from $X$. 
We use \ensuremath{\Varid{xs}+\!\!\!+\Varid{ys}} for list concatenation and \ensuremath{\Varid{cons}\;\Varid{x}\;\Varid{xs}} for a list with \ensuremath{\Varid{x}} in
front of \ensuremath{\Varid{xs}}.

Comparison functors (\autoref{sec:comparing}) are denoted by $K$, and in particular
$\KL$ and $\KK$.

\paragraph{Objects and Morphisms}
We write $f \vcomp g$ for vertical composition of morphisms, and
$F \hcomp G$ for horizontal composition of natural transformations and composition
of functors.

Products in a category are denoted by $X_1 \times \dots \times X_n$ for finite products
or $\prod_{i \in I} X_i$ for some set $I$, and coproducts are denoted by $X_1 +
\dots + X_n$ or $\coprod_{i \in I} X_i$ for some set $I$.
The injection morphisms into coproducts are denoted by $\iota_i : X_i \to X_1 + \dots + X_n$
for each $i$, and the morphisms out of coproducts are $[f_1, \dots, f_n] : X_1 + \dots + X_n  \to Z$
for objects $Z$ with morphisms $f_i : X_i \to Z$ for all $i$.

Initial objects are generally written as $0$, and the unique morphism from the initial
object to an object $X$ is denoted by $! : 0 \to X$.
In particular, the initial object $0$ in a functor category is the constant functor
mapping to the initial object in the codomain.

The (carrier of the) \emph{initial algebra} of an endofunctor $G : \CatC
\to \CatC$ is denoted by $\mu G$ or $\mu Y.\ G Y$.
The structure map of the initial algebra is denoted by $\iniso : G (\mu G) \to G$
and its inverse is $\inOp$.
The unique morphism from the initial $G$-algebra to a $G$-algebra $\alpha : G X
\to X$, i.e.\ the catamorphism for $\alpha$, is denoted by $\cata{\alpha} : \mu
G \to X$.

Given an endofunctor $\Sigma : \CatC \to \CatC$, the free monad over it is
denoted by $\Sigma^*$.
Given a $\Sigma$-algebra $\abracket{X : \CatC, \alpha : \Sigma X \to X}$ and
a morphism $g : A \to X$, the interpretation morphism is denoted by
$\interp{\abracket{X, \alpha}}{g} : \Sigma^* A \to X$.
The notation $\interpName$ is also used for functorial and indexed algebras.
The structure map of the free algebra $\Sigma^* A$ is denoted by $\op : \Sigma
\Sigma^* A \to \Sigma^* A$.

Functorial algebras are usually denoted by $\abracket{H, X, \alpha^G,
\alpha^I}$ (\autoref{def:fctAlgs}), where $H$ is an endofunctor on the base
category $\CatC$ and $X$ is an object in $\CatC$.

\section{Proofs Omitted in the Main Paper}\label{app:proofs}
This section contains the proofs omitted in the paper.
The following two lemmas are a more detailed version of the standard results
\autoref{lem:free:alg:simple}.
We refer the reader to \cite{Adamek1974,Barr70Co} for their proofs.
\begin{lemma}\label{lem:initial:alg}
If category $\CatC$ has an initial object and colimits of all $\omega$-chains
and endofunctor $\Sigma : \CatC \to \CatC$ preserves them, then there exists an
initial $\Sigma$-algebra $\tuple{\mu \Sigma : \CatC,\ \iniso : \Sigma(\mu
\Sigma) \to \mu \Sigma}$. 
Moreover, the structure map $\iniso$ is an isomorphism.
\end{lemma}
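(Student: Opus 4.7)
The plan is to follow Adámek's classical initial-algebra construction. First I would build the $\omega$-chain
\[
0 \xrightarrow{!} \Sigma 0 \xrightarrow{\Sigma !} \Sigma^2 0 \xrightarrow{\Sigma^2 !} \Sigma^3 0 \to \cdots
\]
where $!$ is the unique morphism out of the initial object and each subsequent morphism is obtained by applying $\Sigma$ to the previous one. By hypothesis $\CatC$ has colimits of $\omega$-chains, so we can form the colimit $\mu \Sigma := \colim_{n \in \Nat} \Sigma^n 0$ with colimiting cocone $\kappa_n : \Sigma^n 0 \to \mu \Sigma$.

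Next I would produce the structure map $\iniso : \Sigma(\mu \Sigma) \to \mu \Sigma$ and argue that it is an isomorphism. Since $\Sigma$ preserves colimits of $\omega$-chains, $\Sigma(\mu \Sigma) \iso \colim_{n} \Sigma^{n+1} 0$ with colimiting cocone $\Sigma \kappa_n$. The shifted chain $\Sigma 0 \to \Sigma^2 0 \to \cdots$ is cofinal in the original chain, so it has the same colimit $\mu \Sigma$ up to a canonical isomorphism induced by the cocone $(\kappa_{n+1} : \Sigma^{n+1} 0 \to \mu \Sigma)_{n}$. Taking $\iniso$ to be the composite of the $\Sigma$-preservation isomorphism with this cofinality isomorphism makes $\iniso$ invertible by construction.

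Finally I would verify initiality. Given any $\Sigma$-algebra $\tuple{X, \alpha : \Sigma X \to X}$, define a cocone $(h_n : \Sigma^n 0 \to X)_{n}$ inductively by $h_0 = {!} : 0 \to X$ and $h_{n+1} = \alpha \vcomp \Sigma h_n$. A short diagram chase shows this family is compatible with the chain, so the universal property of $\mu \Sigma$ gives a unique $\cata{\alpha} : \mu \Sigma \to X$ with $\cata{\alpha} \vcomp \kappa_n = h_n$. Checking that $\cata{\alpha}$ is an algebra homomorphism, i.e.\ $\cata{\alpha} \vcomp \iniso = \alpha \vcomp \Sigma \cata{\alpha}$, reduces (via $\Sigma$-preservation of the colimit) to verifying the equation after precomposing with each $\Sigma \kappa_n$, where both sides collapse to $h_{n+1}$. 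Uniqueness of $\cata{\alpha}$ among algebra homomorphisms follows from the same colimit universal property applied to any competing homomorphism.

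The main obstacle is being careful about the cofinality step that produces the inverse of $\iniso$: one must show that the colimit of the shifted chain agrees with $\mu \Sigma$ via an isomorphism that is coherent with both the cocone $(\Sigma \kappa_n)$ obtained from $\Sigma$-preservation and the cocone $(\kappa_{n+1})$ obtained by reindexing. Once this coherence is nailed down, both the invertibility of $\iniso$ and the algebra-homomorphism equation for $\cata{\alpha}$ follow by routine colimit arguments; the rest of the proof is standard diagram chasing.
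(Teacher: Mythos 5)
Your proposal is correct and is exactly the classical Ad\'amek initial-algebra construction via the $\omega$-chain $0 \to \Sigma 0 \to \Sigma^2 0 \to \cdots$, which is precisely what the paper relies on: it gives no proof of its own for this lemma and instead defers to the standard references (Ad\'amek 1974, Barr 1970) for this argument. All the steps you outline, including the cofinality coherence for the inverse of $\iniso$ and the inductive cocone $h_{n+1} = \alpha \vcomp \Sigma h_n$ for initiality, are the standard ones and check out.
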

\begin{lemma}\label{lem:free:alg}
If category $\CatC$ has finite coproducts and colimits of all $\omega$-chains and
endofunctor $\Sigma : \CatC \to \CatC$ preserves them, then the forgetful
functor $\Sigma\Alg \to \CatC$ has a left adjoint $\Free_{\Sigma} : \CatC \to
\Sigma\Alg$ mapping every object $X$ to $\lfix{Y}{X + \Sigma Y}$
with structure map 
\[
\op_X = \big(\Sigma (\lfix{Y}{X + \Sigma Y}) \xrightarrow{\iota_2} X +  \Sigma (\lfix{Y}{X + \Sigma Y}) 
\xrightarrow{\iniso} \lfix{Y}{X + \Sigma Y}\big)
\]
and the unit $\eta$ and counit $\epsilon$ of the adjunction $\Free_\Sigma \dashv \Ul_\Sigma$ are
\begin{equation}\label{eq:unitFromInitial}
  \eta_X = \iniso \vcomp \iota_1 : X \to U_\Sigma (\Free_\Sigma X)
  \quad\quad
  \epsilon_{\abracket{ X, \alpha}} = \cata{ [\identity, \alpha] }
     : \Free_\Sigma X  \to \abracket{ X, \alpha}
\end{equation}
where $\cata{ [\identity, \alpha] }$ denotes the catamorphism from the initial
algebra $\lfix{Y}{X + \Sigma Y}$ to $X$.
Moreover, this adjunction is strictly monadic.
\end{lemma}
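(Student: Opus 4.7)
The plan is to derive this result as a corollary of the preceding \autoref{lem:initial:alg}, applied pointwise in $X$. First, for each object $X$ in $\CatC$, I would consider the auxiliary endofunctor $F_X : \CatC \to \CatC$ defined by $F_X Y = X + \Sigma Y$, whose algebras are exactly pairs $\tuple{A, \tuple{f : X \to A, \alpha : \Sigma A \to A}}$. I would verify the hypotheses of \autoref{lem:initial:alg} for $F_X$: the category has an initial object (as a nullary coproduct) and $\omega$-chain colimits by assumption, and $F_X$ preserves $\omega$-chain colimits because $\Sigma$ does by assumption and the functor $X + (-)$ preserves all colimits on the right (since colimits commute with coproducts in the left argument under mild conditions, but more directly because $X+(-)$ has a right adjoint in $\Set$-like settings; more robustly one checks directly that finite coproducts preserve $\omega$-colimits in any category with the relevant colimits). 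This yields an initial $F_X$-algebra $\iniso_X : X + \Sigma(\mu F_X) \to \mu F_X$ which is an isomorphism.

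Next I would package this into the claimed left adjoint. Define $\Free_\Sigma X = \tuple{\mu F_X,\ \op_X}$ with $\op_X = \iniso_X \vcomp \iota_2 : \Sigma(\mu F_X) \to \mu F_X$, and define $\eta_X = \iniso_X \vcomp \iota_1 : X \to \mu F_X$. The universal property to verify is: for every $\Sigma$-algebra $\tuple{A, \alpha}$ and morphism $g : X \to A$, there is a unique $\Sigma$-algebra morphism $\hat{g} : \Free_\Sigma X \to \tuple{A, \alpha}$ such that $\Ul_\Sigma \hat{g} \vcomp \eta_X = g$. The key observation is that $[g, \alpha] : X + \Sigma A \to A$ endows $A$ with the structure of an $F_X$-algebra, so by initiality of $\mu F_X$ there is a unique $F_X$-algebra morphism $\hat{g} = \cata{[g, \alpha]} : \mu F_X \to A$. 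Composing this with the two injections $\iota_1, \iota_2$ yields exactly the unit law $\hat{g} \vcomp \eta_X = g$ and the $\Sigma$-homomorphism law $\hat{g} \vcomp \op_X = \alpha \vcomp \Sigma \hat{g}$; uniqueness also follows from initiality since any competitor satisfies the pair of equations assembled by $[g, \alpha]$. Instantiating at $\tuple{A, \alpha} = \tuple{X, \alpha}$ with $g = \identity_X$ yields the stated counit formula $\epsilon_{\tuple{X, \alpha}} = \cata{[\identity, \alpha]}$, and routine naturality checks make $\eta$ and $\epsilon$ natural transformations.

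Finally, for strict monadicity I would appeal to Beck's monadicity theorem: it suffices to show that $\Ul_\Sigma$ has a left adjoint (done above), reflects isomorphisms, and creates coequalisers of $\Ul_\Sigma$-split pairs. Reflection of isomorphisms is immediate since a $\Sigma$-algebra morphism whose underlying $\CatC$-morphism is invertible has an inverse that is automatically a $\Sigma$-morphism (conjugate by $\alpha$ and $\alpha'$). Creation of split coequalisers is a direct diagram chase: given a $\Ul_\Sigma$-split parallel pair in $\Sigma\Alg$, the split coequaliser in $\CatC$ lifts uniquely to a $\Sigma$-algebra on the coequaliser object using the splitting, and verifying that this lift is the coequaliser in $\Sigma\Alg$ is routine. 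I expect the main obstacle to be a purely bookkeeping one, namely making the coend/colimit preservation argument for $F_X = X + \Sigma(-)$ water-tight in a general lfp setting without implicitly assuming Cartesian closure; this is why I would formulate it abstractly as ``finite coproducts preserve $\omega$-colimits in each variable,'' which holds in any category with the colimits in question because $\omega$-colimits are filtered and filtered colimits commute with finite limits, dually with finite colimits in the coproduct presentation via the standard interchange argument.
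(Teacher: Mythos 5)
Your proposal is correct and is essentially the standard argument: the paper gives no proof of this lemma itself, deferring to Ad\'amek and Barr, and those references establish it exactly as you do --- by forming the initial algebra of $F_X = X + \Sigma(-)$, reading off the adjunction from its universal property via $[g,\alpha]$, and invoking Beck for strict monadicity. The only wobble is your justification that $X + (-)$ preserves $\omega$-colimits (the remark about filtered colimits commuting with finite limits is a red herring); the clean reason is that $\omega$ is connected, so the colimit of the constant diagram at $X$ is $X$ and the interchange of colimits with colimits does the rest --- but you do land on the right fact.
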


\begin{lemma*}[\ref{lem:fn:free:adj}]
The endofunctor $\Fn$ (\ref{eq:fn:equation}) on $\CatEC$
has free algebras, i.e.\ there is a functor $\Free_\Fn : \CatEC \to \Fn\Alg$
left adjoint to the forgetful functor $\Ul_\Fn : \Fn\Alg \to \CatEC$.
\end{lemma*}

\begin{proof}
Since $\CatC$ is lfp, it is cocomplete, and it follows that $\CatEC$ is cocomplete
because colimits in functor categories and product categories can be computed
pointwise \cite[Thm V.3.2]{MacLane1978}.
It is also easy verification that $\Fn$ preserves all colimits of $\omega$-chains
following from the fact that $G$, $\Sigma$ and $\Gamma$, and all functors in
$\CatEndof$ preserve colimits of $\omega$-chains in their domains respectively.
Hence by \autoref{lem:free:alg:simple}, there is a functor $\Free_\Fn$ left adjoint to
$\Ul_\Fn$.
\end{proof}

\begin{lemma*}[\ref{lem:freeJ}]
There is a natural isomorphism between $\Free_{\Fn}$ and the following functor
\[
\FFn \abracket{H,X} = \left\langle
G^{*} H : \CatEndof,\ \ \ (I_{G^{*}H})^{*}X : \CatC,\ \ \ \op^{G^*}_H,\ \ \ \op^{(I_{G^*H})^*}_{X} 
\right\rangle
\]
where $\op^{G^*}_H : G (G^{*} H) \to G^{*} H$ and $ \op^{(I_{G^*H})^*}_{X} :
I_{G^*H} ((I_{G^{*}H})^{*}X) \to (I_{G^{*}H})^{*}X$ are the structure maps of
the free $G$-algebra and $I_{G^{*}H}$-algebra respectively.
\end{lemma*}

\begin{proof}
By \autoref{lem:free:alg}, the free $\Fn$-algebra generated by $\abracket{H,X}$
can be constructed from the initial algebra of a functor $\Fn_{\abracket{H,X}} :
\CatEC \to \CatEC$ such that
\[\Fn_{\abracket{H,X}}Y = \abracket{H,X} + \Fn Y\]
Then we show that an initial $\Fn_{\abracket{H,X}}$-algebra carried by $\abracket{
G^{*}H, (I_{G^{*}H})^* X }$ with structure map 
\begin{align*}
\abracket{ i_1, i_2 } :\ & \Fn_{\abracket{H,X}}\abracket{ G^{*}H, (I_{G^{*}H})^* X } =
\abracket{ H + GH, X + I_{G^{*}H}((I_{G^{*}H})^*X) } \\
  & \to \abracket{ G^{*}H, (I_{G^{*}H})^* X }
\end{align*}
where $i_1 = [\eta^{G^{*}}_H, \op^{G^{*}}_{H}] : H + G(G^{*}H) \to G^{*} H$
and 
\[
i_2 = [\eta^{(I_{G^{*}H})^*}_{X}, \op^{(I_{G^{*}H})^*}_{X}] : X +
I_{G^{*}H}((I_{G^{*}H})^{*}X) \to (I_{G^{*}H})^{*} X
.\]
To see that this $\Fn_{\abracket{H,X}}$-algebra is initial, consider any
$\abracket{C, D}$ in $\CatEndo \times \CatC$ with structure map $\abracket{
  j_1, j_2 } : \Fn_{\abracket{H,X}}\abracket{ C,D } \to \abracket{ C,D }$.
We have
\begin{align*}
&\abracket{ k_1, k_2 } : \Fn_{\abracket{H,X}}\Alg\left(
\big\langle \abracket{ G^{*}H, (I_{G^{*}H})^* X },\ \abracket{ i_1, i_2 }\big\rangle 
,\ 
\big\langle\abracket{ C, D }, \abracket{ j_1, j_2 }\big\rangle\right) \\
\Leftrightarrow\ & \Big(k_1 \in (H + G-)\Alg\big( \abracket{ G^{*}H, i_1 },\
\abracket{ C, j_1 } \big)\Big) \\
&\quad\quad \wedge\ \Big(k_2 \in (X + I_{G^*H}-)\Alg\big(\abracket{ I_{G^*H}X, i_2 },\
\abracket{ D, j_2 \vcomp (X + I_{k_1}\identity) }\big)\Big) \\
\Leftrightarrow\ & k_1 = \ladjunct{j_1 \vcomp \iota_1}_{\abracket{ C, j_1 \vcomp \iota_2 }}
\ \wedge\ k_2 = \ladjunct{ j_2 \vcomp (X + I_{k_1}) \vcomp
\iota_1}_{\abracket{ D,  j_2 \vcomp (X + I_{k_1}) \vcomp \iota_2 }}
\end{align*}
where we use subscripts of $\ladjunct{\cdot}$ to indicate the $B$ for some
$\ladjunct{f} : L A \to B$.
The calculation shows that the $\Fn_{\abracket{H,X}}$-algebra homomorphism $(k_1, k_2)$
uniquely exists, and thus $\abracket{ G^{*}H, (I_{G^{*}H})^* X }$ with
structure map $\abracket{ i_1, i_2 }$ is initial.
Then by \autoref{lem:free:alg}, this initial algebra gives the
free $\Fn$-algebra generated by $\abracket{H,X}$, and thus we have the isomorphism
between $\Free_\Fn$ and $\FFn$ in the lemma.
\end{proof}
This characterisation of free $\Fn$-algebras also allows us to express the
unit and counit of the adjunction $\Free_\Fn \dashv \Ul_\Fn$ in terms of those
of some simpler adjunctions.

\begin{lemma}\label{lem:unitFreeBarG}
Letting the $\phi$ be the isomorphism in \autoref{lem:freeJ},
the unit of adjunction $\Free_{\Fn} \dashv \Ul_{\Fn}$ is
\[
\eta_{\abracket{H,X}} = 
\begin{tikzcd}[column sep=large]
  \abracket{H,X} \arrow{r}{\abracket{ \eta^{G^{*}}_{H}, \eta^{(I_{G^{*}H})^*}_X
  }} & \abracket{ G^{*}H, (I_{G^{*}H})^* X } \arrow{r}{\phi^{-1}} &
  \Fn^*\abracket{H,X}
\end{tikzcd}
\]
and its counit $\epsilon$ at some $\Fn$-algebra $\abracket{ \abracket{H,X}, \abracket{
\beta_1, \beta_2 } }$ is
\[
\Free_{\Fn}\abracket{H,X} \xrightarrow{\phi}  \big\langle \abracket{ G^{*}H, (I_{G^{*}H})^{*}X }, 
\abracket{ \op^{G^*}_H, \op^{(I_{G^*H})^*}_{X} } \big\rangle \xrightarrow{\abracket{ e_1, e_2 }} 
\abracket{ \abracket{H,X}, \abracket{ \beta_1, \beta_2 } }
\]
where $e_1 = \Ul_{G}(\epsilon^{G^*}_{\abracket{ H, \beta_1 }})$ and $e_2 = \Ul_{I_{G^{*}
H}}(\epsilon^{(I_{G^{*} H})^*}_{\abracket{ X, \beta_2 \vcomp I_{e_1} }})$.
\end{lemma}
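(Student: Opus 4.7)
The plan is to apply the explicit formulas for units and counits of free-forgetful adjunctions from Lemma~\ref{lem:free:alg} to the nested initial-algebra construction used in the proof of Lemma~\ref{lem:freeJ}, and then transport the results along the natural isomorphism $\phi$ of that lemma. Since $\phi$ is an isomorphism of left adjoints over a common right adjoint $\Ul_\Fn$, both the unit and the counit of $\Free_\Fn \dashv \Ul_\Fn$ are obtained from those of the ``$\FFn$-presentation'' of the same adjunction by pre- or post-composing with $\phi^{-1}$ on the appropriate side.

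For the unit, recall that the proof of Lemma~\ref{lem:freeJ} exhibits the initial $\Fn_{\abracket{H,X}}$-algebra as $\abracket{G^*H, (I_{G^*H})^*X}$ with structure map $\abracket{i_1, i_2}$, where $i_1 = [\eta^{G^*}_H, \op^{G^*}_H]$ and $i_2 = [\eta^{(I_{G^*H})^*}_X, \op^{(I_{G^*H})^*}_X]$. By Lemma~\ref{lem:free:alg}, the unit is $\iniso \vcomp \iota_1$ into the generator, which componentwise evaluates to $\abracket{i_1 \vcomp \iota_1,\, i_2 \vcomp \iota_1} = \abracket{\eta^{G^*}_H,\, \eta^{(I_{G^*H})^*}_X}$. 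Post-composing with $\phi^{-1}$ yields the displayed formula for $\eta_{\abracket{H,X}}$.

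For the counit at an $\Fn$-algebra $\abracket{\abracket{H,X}, \abracket{\beta_1, \beta_2}}$, the same lemma presents it as the catamorphism $\cata{[\identity, \abracket{\beta_1, \beta_2}]}$ out of the initial $\Fn_{\abracket{H,X}}$-algebra. I would invoke the uniqueness calculation already carried out in the proof of Lemma~\ref{lem:freeJ} (instantiated with $\abracket{C,D} := \abracket{H,X}$ and $\abracket{j_1, j_2} := [\identity, \abracket{\beta_1, \beta_2}]$), which factors this catamorphism into a pair $\abracket{e_1, e_2}$. The first component $e_1$ is the catamorphism from $G^*H$ to the $G$-algebra $\abracket{H, \beta_1}$, i.e.\ $\Ul_G(\epsilon^{G^*}_{\abracket{H, \beta_1}})$. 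The second component is a catamorphism from $(I_{G^*H})^*X$ to an $I_{G^*H}$-algebra whose carrier is $X$ and whose structure map, because $\Fn$ depends functorially on its first argument via $I_{(-)}(-)$, must be re-indexed along $e_1$, giving $\beta_2 \vcomp I_{e_1} : I_{G^*H} X \to I_H X \to X$. Hence $e_2 = \Ul_{I_{G^*H}}(\epsilon^{(I_{G^*H})^*}_{\abracket{X, \beta_2 \vcomp I_{e_1}}})$, as claimed.

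The only delicate point is this $I_{e_1}$ re-indexing: the two catamorphisms cannot be computed independently because the second component of $\Fn$ is coupled to the first. Fortunately, this interaction is already exposed by the uniqueness argument in Lemma~\ref{lem:freeJ} (its dependence of $k_2$ on $k_1$ through $I_{k_1}$), so the present proof reduces to reading off that earlier calculation and matching its outputs with the stated formulas, with the ordering $e_1$-first-then-$e_2$ dictated precisely by that dependency.
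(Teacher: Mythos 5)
Your proposal is correct and follows exactly the route the paper intends: its proof sketch simply says the formulas "can be calculated from (\ref{eq:unitFromInitial}) and Lemma~\ref{lem:freeJ}", and you carry out precisely that calculation, reading the unit off as $\iniso \vcomp \iota_1$ for the initial $\Fn_{\abracket{H,X}}$-algebra and the counit off the uniqueness argument in the proof of Lemma~\ref{lem:freeJ} (including the essential $I_{e_1}$ re-indexing that couples $e_2$ to $e_1$). No gaps.
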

\begin{proof}[Proof sketch]
It can be calculated from (\ref{eq:unitFromInitial}) and \autoref{lem:freeJ}.
\end{proof}

\begin{theorem*}[\ref{thm:EisoT}]
The monad $P$ is isomorphic to $T$ in the category of monads.
\end{theorem*}

\begin{proof}
By (\ref{eq:e:from:free}), $P \cong (\Sigma + \Gamma P)^* = (I_{P})^*$ as monads,
it is sufficient to show that $T$ is isomorphic to $(I_{P})^*$ as monads.
Recall that $P = \mu G \cong G^* {0}$ as endofunctors.
Let $\psi : G^*{0} \to P$ be the isomorphism, and let $\phi$ be the isomorphism
between $\Fn^*$ and $\Ul_\Fn \FFn$ by \autoref{lem:freeJ}, then for all $X : \CatC$,
\begin{align}\label{eq:isoTandE}
T X &= \downC (\Ul_{\Fn} (\Free_{\Fn} (\upC X))) =
      \downC (\Fn^* \abracket{ 0, X }) \overset{\downC\phi}{\cong}
      \downC \tuple{ G^{*}0, \abracket{ I_{G^{*}0} }^*X } \\
    &= (I_{G^{*}0})^{*}X \overset{(I_{\psi})^*}{\cong}
      (I_{P})^*X  \nonumber
\end{align}
Thus $T$ is isomorphic to $(I_P)^*$ as endofunctors. 

What remains is to show that the isomorphism (\ref{eq:isoTandE}) preserves
their units and multiplications.
The unit of $T$ is precisely the unit of the adjunction
$(\Free_{\Fn}\upC) \dashv (\downC \Ul_{\Fn})$ composed from the
adjunctions $\upC \dashv \downC$ and $\Free_{\Fn} \dashv \Ul_{\Fn}$.
Therefore the unit of $T$ is
$\eta^{T}_X = \downC (\eta^{\Fn^*}_{\upC X}) \vcomp \eta^{\downC\upC}_X$
where $\eta^{\downC\upC}$ and $\eta^{\Fn^*}$ are the units of the adjunctions
$\upC \dashv \downC$ and $\Free_{\Fn} \dashv \Ul_{\Fn}$ respectively.
Hence by \autoref{lem:unitFreeBarG}, we have
\begin{align*}
\eta^{T}_X 
&= \downC (\eta^{\Fn^*}_{\upC X}) \vcomp \eta^{\downC\upC}_X
= \downC (\eta^{\Fn^*}_{\upC X}) \vcomp \identity
= \downC (\phi^{-1} \vcomp \abracket{ \eta^{G^{*}}_{0}, \eta^{(I_{G^{*}0})^*}_X }) \\
& = \downC (\phi^{-1}) \vcomp \eta^{(I_{G^{*}0})^*}_X
= \downC \phi \vcomp ({I_{\psi}}^*)^{-1} \vcomp \eta^{(I_{P})^*}_X
\end{align*}
which shows that the isomorphism~(\ref{eq:isoTandE}) preserves the units of
$T$ and $(I_{P})^*$.

Proving the preservation of the multiplications of the two monads is
also direct verification but slightly more involved.
By definition, $\mu^{T}_X = \downC \Ul_{\Fn}
\epsilon^T_{\Free_{\Fn} \upC X}$
where $\epsilon^T$ is the counit of the adjunction $(\Free_{\Fn}\upC)
\dashv (\downC \Ul_{\Fn})$ satisfying
\begin{align*}
\epsilon^T_{\Free_{\Fn} \upC X} 
=  \epsilon^{\Fn^*}_{\Free_{\Fn} \upC X} \vcomp
\Free_{\Fn} (\epsilon^{\downC\upC}_{\Fn^{*}\upC X})
=  \epsilon^{\Fn^*}_{\Free_{\Fn} \upC X} \vcomp
\Free_{\Fn} \abracket{ \mathop{!}, \identity }
\end{align*}
where $\mathop{!}$ is the unique $G$-algebra homomorphism from $G^*0$ to
$G^*G^*0$.
Then by \autoref{lem:unitFreeBarG}, we have
$\epsilon^{\Fn^*}_{\Free_{\Fn} \upC X} = \abracket{ e_1, e_2 } \vcomp
\phi$ where $e_1 = \Ul_{G}(\epsilon^{G^*}_{(G^*0,\,
 \op^{G^*}_{0})}) : {G}^* {G}^* {0} \to {G}^*
{0}$ and 
$e_2 = \Ul_{I_{G^{*} G^*0}}(\epsilon^{(I_{G^{*} G^*0})^*}_{b})$
where $b$ is the $I_{G^{*} G^*0}$-algebra
$\abracket{ (I_{G^*0})^*X, \op^{(I_{G^*0})^*}_{X}
\vcomp I_{e_1} }$.
Hence we have
\begin{align*}
\mu^{T}_X 
&= \downC \Ul_{\Fn} (\epsilon^T_{\Free_{\Fn} \upC X}) \\
&= \Ul_{I_{G^{*} G^*0}}(\epsilon^{(I_{G^{*}
G^*0})^*}_{b}) \vcomp \downC
\phi \vcomp \downC \Ul_{\Fn} \Free_{\Fn}(\mathop{!}, \identity) \\
&= \Ul_{I_{G^{*} G^*0}}(\epsilon^{(I_{G^{*}
G^*0})^*}_{b}) \vcomp \downC
\phi \vcomp \downC (\phi^{-1} \vcomp \abracket{ G^*{\mathop{!}},
(I_{G^*{\mathop{!}}})^* } \vcomp \phi)\\
&= \Ul_{I_{G^{*} G^*0}}(\epsilon^{(I_{G^{*}
G^*0})^*}_{b}) \vcomp (I_{G^*{\mathop{!}}})^* \vcomp \downC \phi 
\end{align*}
where $(I_{G^*{\mathop{!}}})^*$ is a natural transformation from
$(I_{G^*0})^*$ to $(I_{G^*G^*0})^*$.
Then by \emph{base functor fusion} \cite{Hinze13Adj} i.e.\ the naturality of the
free-forgetful adjunction in the base functor, we have
\begin{align*}
\mu^{T}_X 
= \Ul_{I_{G^{*} G^*0}}(\epsilon^{(I_{G^{*}
G^*0})^*}_{b}) \vcomp (I_{G^*{\mathop{!}}})^* \vcomp \downC \phi
= \Ul_{I_{G^*0}}(\epsilon^{(I_{G^*0})^*}_{b'}) \vcomp \downC \phi 
\end{align*}
where $b'$ is the $I_{G^*0}$-algebra with the same carrier 
as $b$ and with $b$'s structure map precomposed with
$I_{G^*{\mathop{!}}}$:
\[
\op^{(I_{G^*0})^*}_{X} \vcomp I_{e_1} \vcomp I_{G^*{\mathop{!}}} 
= \op^{(I_{G^*0})^*}_{X} \vcomp I_{e_1 \vcomp \mathop{!}}
= \op^{(I_{G^*0})^*}_{X} \vcomp I_{\identity}
= \op^{(I_{G^*0})^*}_{X} 
\]
where the second equality follows from that $e_1 \vcomp \mathop{!} :
G^*{0} \to G^*{0}$ is a $G$-algebra homomorphism, and
thus $e_1 \vcomp \mathop{!} = \identity$ since $G^*{0}$ is an initial
$G$-algebra.
Finally we have
\begin{align*}
\mu^{T}_X 
&= \Ul_{I_{G^*0}}(\epsilon^{(I_{G^*0})^*}_{\abracket{
  (I_{G^*0})^*X, \op^{(I_{G^*0})^*}_{X} }}) \vcomp
  \downC \phi \\
& = \Ul_{I_P}(\epsilon^{I^*_P}_{\abracket{ I^*_{P} X, \op^{(I_{P})^*}_{X} }})
\vcomp ((I_{\psi})^* \vcomp \downC \phi) \vcomp T((I_{\psi})^* \vcomp \downC
\phi)
\end{align*}
which means exactly that the isomorphism (\ref{eq:isoTandE}) preserves the
multiplications of $T$ and $(I_P)^*$.
\end{proof}

\begin{lemma*}[\ref{lem:KLisCMP}]
Functor $\KL$ is a comparison functor from the Eilenberg-Moore resolution of 
$P$ to the resolution $\Free_\Fn \upC \dashv \downC \Ul_\Fn$ of functorial
algebras.
\end{lemma*}

\begin{proof}
By \autoref{def:resCmp}, we need to show that $\KL$ commutes with left and right adjoints
of both resolutions:
for right adjoints, we have
$
\Ul_{\Sigma + \Gamma \hcomp P} \abracket{X, \alpha} = X = \downC \Ul_{\Fn} \KL \abracket{X, \alpha}
$; and for left adjoints, %
\begin{align*}
\KL(\Free_{\Sigma + \Gamma \hcomp P} X) =\,&
\KL \abracket{ (\Sigma + \Gamma \hcomp P)^* X,\,
  \op^{(\Sigma + \Gamma \hcomp P)^*}_X
  } \\
  =\,& \abracket{ P, (\Sigma + \Gamma \hcomp P)^* X, \iniso, \op^{\cdots}_X} 
    \tag*{\ensuremath{\{ P \iso G^*0
    \text{ and } (\Sigma + \Gamma \hcomp P) = I_P \text{ (\autoref{sec:fctAlgDef})}\}}} \\
 \iso\,&  \abracket{G^* 0,  (I_{G^* 0})^* X,
 \op^{G^*}_{0}, \op^{(I_{G^*{0}})^*}_{X}}  
 \tag*{\{By \autoref{lem:freeJ}\}}\\
 \iso\,& \Free_\Fn{(\upC X)}
\end{align*}
and similarly for the actions on morphisms.
Here we only have $\KL\Free_{\Sigma + \Gamma \hcomp P}$ being isomorphic to $\Free_\Fn \upC$
instead of a strict equality, since these two resolutions induce the monad
$P$ only up to isomorphism.
To remedy this, one can slightly generalise the definition of comparison functors to
take an isomorphism into account, but we leave it out here.
\end{proof}

\begin{lemma*}[\ref{lem:KK:is:comp}]
$\KK$ is a comparison functor from the resolution 
$\Free_\Fn \upC \dashv \downC \Ul_\Fn$ of functorial algebras to the resolution
$\Free_\Ix \mathop{\upharpoonright} \dashv \mathop{\downharpoonleft} \Ul_\Ix$
of indexed algebras.
\end{lemma*}
\begin{proof}
We need to show the required commutativities for $\KK$ to be a comparison
functor:
\[
\downC \Ul_\Fn \iso \mathop{\downharpoonleft} \Ul_\Ix \KK
\qquad\text{and}\qquad
\KK \Free_\Fn \upC \iso \Free_\Ix \mathop{\upharpoonright} 
\]
First it is easy to see that it commutes with the right adjoints:
\[
  \mathop{\downharpoonleft} \Ul_\Ix (\KK \abracket{H, X, \alpha^G, \alpha^I})
= \mathop{\downharpoonleft} \Ul_\Ix \abracket{ A, a, d, p }
= A_0
= X
= \mathop{\downharpoonleft} \Ul_\Ix (\KK \abracket{H, X, \alpha^G, \alpha^I})
\]
Its commutativity with the left adjoints is slightly more involved, and we show a 
sketch here.
Pir\'{o}g et al.\ \cite{PirogSWJ18} show that $\Free_\Ix \mathop{\upharpoonright} X$ is isomorphic
to the indexed algebras carried by $P^+_X : \CatCN$ such that $(P^+_X)_i = P^{i+1} X$
with structure maps
\begin{gather*}
k^{\Sigma}_n = \big(\Sigma (P^+_X)_n 
 = \Sigma P P^n X \xrightarrow{\iniso \vcomp \iota_2} PP^nX  \big)\\
k^{\Gamma \previous}_n = \big((\Gamma \previous P^+_X)_n = \Gamma P P P^n X  
\xrightarrow{\iniso \vcomp \iota_3} PP^n X \big) \\
k^{\later}_n = \big((P^+_X)_n = PP^n X \xrightarrow{\iniso \vcomp \iota_1} PPP^n X 
\big)
\end{gather*}
where $\iniso : \Id + \Sigma \hcomp P + \Gamma \hcomp P \hcomp P \to P$ is the
isomorphism between $P$ and $G P$.
Also by \autoref{lem:freeJ}, we know that $\Free_\Fn \upC X$ is isomorphic to 
the functorial algebra $\abracket{P, PX, \iniso, [\iniso \vcomp \iota_2,
\iniso \vcomp \iota_3]}$.
Clearly $\KK\Free_\Fn \upC$ and $\Free\Ix \mathop{\upharpoonright}$ agree on the
carrier $P^+_X$.
It can be checked that they agree on the structure maps and the action on morphisms
too.
\end{proof}

\section{Handling with EM Algebras in Haskell} \label{app:em}

This section provides some details about how Eilenberg-Moore algebras of the monad
$P$ (\ref{eq:e:recursive}) are used for interpret scoped operations.
First we have the follow theorem characterising $\interpName$ in a way that suits
recursive implementation.
A Haskell implementation is in (\autoref{fig:em:impl}).

\begin{theorem}[Interpreting with EM Algebras]\label{thm:em:rec}
Given an Eilenberg-Moore algebra as in (\ref{eq:e:em}), for any morphism $g : A
\to X$ in $\CatC$ for some $A$, the interpretation of $P A$ with this algebra
and $g$ satisfies
\begin{align}
 \interp{\abracket{X, \alpha_\Sigma, \alpha_\Gamma }}{g}  
= [\  & g,\ \ 
  \alpha_\Sigma \vcomp \Sigma (\interp{\abracket{X, \alpha_\Sigma, \alpha_\Gamma }}{g}),\ \ \label{eq:evalEMAlg} \\
   & \alpha_\Gamma \vcomp \Gamma P (\interp{\abracket{X, \alpha_\Sigma,
  \alpha_\Gamma }}{g})] \vcomp \inOp_A \nonumber
\end{align}
where $\inOp : P \to \Id + \Sigma \hcomp P + \Gamma \hcomp P \hcomp P$ is
the isomorphism between $P$ and $G P$.
\end{theorem}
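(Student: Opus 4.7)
The plan is to unfold the adjoint-theoretic definition (\ref{eq:evalEMAlgDef}) using the explicit construction of free $F$-algebras from Lemma \ref{lem:free:alg} in \refapp{app:proofs}, where $F = \Sigma + \Gamma \hcomp P$, and then to apply the universal property of the resulting catamorphism one unfolding step to obtain the recursive form.

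First I would use the equivalence between EM algebras of $P$ and $F$-algebras (stated just before (\ref{eq:e:em})), together with the monad isomorphism $P \cong F^{*}$ from (\ref{eq:e:from:free}), to identify the EM algebra $\abracket{X, \alpha_\Sigma, \alpha_\Gamma}$ with the $F$-algebra $\abracket{X, \alpha}$ whose structure map is $\alpha = [\alpha_\Sigma, \alpha_\Gamma] : FX \to X$. The interpretation morphism (\ref{eq:evalEMAlgDef}) is then, by the adjoint correspondence between $\Free_F$ and $\Ul_F$, precisely the unique $F$-algebra homomorphism $\Free_F A \to \abracket{X, \alpha}$ extending $g$ along the unit $\eta_A$.

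Next, by the explicit description of $\Free_F A$ as the initial $(A + F-)$-algebra $\lfix{Y}{A + FY}$ together with the unit/counit formulas (\ref{eq:unitFromInitial}), this homomorphism coincides, as a morphism in $\CatC$, with the catamorphism $\cata{[g, \alpha]} : F^{*} A \to X$ into the $(A + F-)$-algebra whose structure map is $[g, \alpha] : A + FX \to X$. Applying the defining equation $\cata{[g,\alpha]} \vcomp \iniso = [g, \alpha] \vcomp (\identity_A + F \cata{[g,\alpha]})$, precomposing with $\inOp_A$, and distributing $F = \Sigma + \Gamma \hcomp P$ over the coproduct then delivers the stated formula (\ref{eq:evalEMAlg}) on the nose.

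The principal subtlety, rather than a real obstacle, is a bookkeeping one: the $\inOp_A$ appearing in the statement is the inverse of the initial $G$-algebra structure on $P$, of type $PA \to A + \Sigma(PA) + \Gamma(P(PA))$, whereas the $\inOp$ arising naturally from the catamorphism calculation is the inverse of the initial $(A + F-)$-algebra structure on $F^{*}A$, of type $F^{*}A \to A + \Sigma(F^{*}A) + \Gamma P(F^{*}A)$. These two morphisms transport to one another under the monad isomorphism $P \cong F^{*}$ established in \autoref{sec:syntax:scoped} via the Backhouse diagonal rule, and once this identification is spelled out the two presentations coincide and (\ref{eq:evalEMAlg}) follows.
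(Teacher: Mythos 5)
Your proposal is correct and follows essentially the same route as the paper's own (one-line) proof: plug the counit formula from \autoref{lem:free:alg} into (\ref{eq:evalEMAlgDef}), recognise the resulting composite as the catamorphism $\cata{[g,\alpha]}$ out of the initial $(A + (\Sigma + \Gamma\hcomp P)-)$-algebra, and unfold its defining equation once. Your explicit remark that the $\inOp_A$ of the statement (the inverse of the initial $G$-algebra structure on $P$) must be transported along the isomorphism $P \cong (\Sigma + \Gamma \hcomp P)^*$ of (\ref{eq:e:from:free}) is a detail the paper leaves implicit, and is handled correctly.
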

\begin{proof}
It can be calculated by plugging the counit $\epsilon$ for the monad
$\Free_{\Sigma + \Gamma \hcomp P} \dashv \Ul_{\Sigma + \Gamma \hcomp P}$
(\autoref{lem:free:alg}) into (\ref{eq:evalEMAlgDef}).
\end{proof}

\begin{example}\label{ex:em:once}
With the implementation of EM algebras and their interpretation
(\autoref{fig:em:impl}), the scoped operation \ensuremath{\Varid{once}} in \autoref{ex:ndet:hdl}
can be interpreted by the following EM algebra
\begin{nscenter}\begin{hscode}\SaveRestoreHook
\column{B}{@{}>{\hspre}l<{\hspost}@{}}%
\column{3}{@{}>{\hspre}l<{\hspost}@{}}%
\column{5}{@{}>{\hspre}l<{\hspost}@{}}%
\column{7}{@{}>{\hspre}l<{\hspost}@{}}%
\column{14}{@{}>{\hspre}l<{\hspost}@{}}%
\column{20}{@{}>{\hspre}l<{\hspost}@{}}%
\column{E}{@{}>{\hspre}l<{\hspost}@{}}%
\>[B]{}\Varid{onceAlgEM}\mathbin{::}\Conid{EMAlg}\;\Conid{Choice}\;\Conid{Once}\;[\mskip1.5mu \Varid{a}\mskip1.5mu]{}\<[E]%
\\
\>[B]{}\Varid{onceAlgEM}\mathrel{=}\Conid{EM}\;\{\mskip1.5mu \mathinner{\ldotp\ldotp}\mskip1.5mu\}\;\mathbf{where}{}\<[E]%
\\
\>[B]{}\hsindent{3}{}\<[3]%
\>[3]{}\Varid{callEM}\mathbin{::}\Conid{Choice}\;[\mskip1.5mu \Varid{a}\mskip1.5mu]\to [\mskip1.5mu \Varid{a}\mskip1.5mu]{}\<[E]%
\\
\>[B]{}\hsindent{3}{}\<[3]%
\>[3]{}\Varid{callEM}\;\Conid{Fail}{}\<[20]%
\>[20]{}\mathrel{=}[\mskip1.5mu \mskip1.5mu]{}\<[E]%
\\
\>[B]{}\hsindent{3}{}\<[3]%
\>[3]{}\Varid{callEM}\;(\Conid{Or}\;\Varid{x}\;\Varid{y}){}\<[20]%
\>[20]{}\mathrel{=}\Varid{x}+\!\!\!+\Varid{y}{}\<[E]%
\\[\blanklineskip]%
\>[B]{}\hsindent{3}{}\<[3]%
\>[3]{}\Varid{enterEM}\mathbin{::}\Conid{Once}\;(\Conid{Prog}\;\Conid{Choice}\;\Conid{Once}\;[\mskip1.5mu \Varid{a}\mskip1.5mu])\to [\mskip1.5mu \Varid{a}\mskip1.5mu]{}\<[E]%
\\
\>[B]{}\hsindent{3}{}\<[3]%
\>[3]{}\Varid{enterEM}\;(\Conid{Once}\;\Varid{p})\mathrel{=}{}\<[E]%
\\
\>[3]{}\hsindent{2}{}\<[5]%
\>[5]{}\mathbf{case}\;\Varid{handle}_{\Varid{EM}}\;\Varid{onceAlgEM}\;(\lambda \Varid{x}\to [\mskip1.5mu \Varid{x}\mskip1.5mu])\;\Varid{p}\;\mathbf{of}{}\<[E]%
\\
\>[5]{}\hsindent{2}{}\<[7]%
\>[7]{}[\mskip1.5mu \mskip1.5mu]{}\<[14]%
\>[14]{}\to [\mskip1.5mu \mskip1.5mu]{}\<[E]%
\\
\>[5]{}\hsindent{2}{}\<[7]%
\>[7]{}(\Varid{x}\mathbin{:\char95 }){}\<[14]%
\>[14]{}\to \Varid{x}{}\<[E]%
\ColumnHook
 \end{hscode}\resethooks
\end{nscenter}
Note that this EM algebra is defined with recursion:
the part of syntax enclosed by the scope of \ensuremath{\Varid{once}} is interpreted by a
recursive call to \ensuremath{\Varid{handle}_{\Varid{EM}}} with the algebra itself.
\end{example}

\section{Handling with Indexed Algebras by Hybrid Fold}\label{app:hfold}

This section elaborates \autoref{ex:hfold} about using fusion laws and functorial 
algebra to prove the correctness of a recursive scheme used by Pir\'{o}g et al.\ \cite{PirogSWJ18}
to interpret scoped operations with indexed algebras.
Although Pir\'{o}g et al.\ \cite{PirogSWJ18} propose the adjunction $(\Free_\Ix \upharpoonright) \dashv
(\downharpoonleft \Ul_\Ix)$ for interpreting scoped operations with indexed algebras,
they use the recursive function \ensuremath{\Varid{hfold}} (\autoref{fig:hfold}) in their
implementation to interpret $P$ (\ref{eq:e:recursive2}) with indexed algebras.
Compared to a faithful implementation of $\interpName$, their \ensuremath{\Varid{hfold}} is more
efficient since it skips transforming $P$ to the free indexed algebra
$(\downharpoonleft \Ul_\Ix \Free_\Ix \upharpoonright)$ but directly works on $P$.
Thus we call it a \emph{hybrid fold} since it works on a syntactic structure
$P$ that is not freely generated by its type of algebras giving semantics.

While the definition of \ensuremath{\Varid{hfold}} is computationally intuitive,
Pir\'{o}g et al.\ \cite{PirogSWJ18} did not provide formal justification for this recursive
definition.
In this subsection, we fill the gap by showing that \ensuremath{\Varid{hfold}} coincides with \ensuremath{\Varid{handle}}
with indexed algebras.
We divide the proof into three parts for clarity:
after making the problem precise (\autoref{sec:hfoldProblem}),
we first show that the \ensuremath{\Varid{hfold}} for an indexed algebra \ensuremath{\Conid{A}} is equivalently 
a catamorphism from $P$ in $\CatEndof$ (\autoref{sec:hfoldProof1}), which is then
a special case of interpreting with functorial algebras
(\autoref{sec:hfoldProof2}), and finally we translate this functorial algebra
into the category $\Ix\Alg$ of indexed algebras using $\KK$, and show that it
induces the same interpretation as the one from the indexed algebra $A$ that we
start with (\autoref{sec:hfoldProof3}).

\subsection{Semantic Problem of Hybrid Fold}
\label{sec:hfoldProblem}

Fix an indexed algebra carried by $A : \CatCN$ in this section:
\begin{equation}\label{eq:fixed:a}
\abracket{A : \CatCN,\ a : \Sigma \hcomp A \to A,\ 
d : \Gamma \hcomp (\previous A) \to A, p : A \to (\previous A)}
\end{equation}
For notational convenience, we define a functor $S : \CatN \to \CatN$ such that
$Sn = n+1$, then $\previous A = A \hcomp S$ since $(\previous A)_n = A_{n+1} = A(Sn)$.
With functor $S$, we can view $p$ and $d$ as $p : A \to A \hcomp S$ and $d :
\Gamma \hcomp A \hcomp S \to A$.
Then the recursive definition of \ensuremath{\Varid{hfold}} in \autoref{fig:hfold} can be
understood as 
a morphism $h : P \hcomp A \to A$ in $\CatCN$ satisfying the equation 
\begin{equation}\label{eq:hfoldEq}
h = [h_1, h_2, h_3] \vcomp (\inOp \hcomp A)
\end{equation}
where $\inOp : P \to \Id + \Sigma \hcomp P + \Gamma \hcomp P \hcomp P$ is
the isomorphism between $P$ and $G P$, and $h_1$, $h_2$ and $h_3$ correspond to
the three cases of \ensuremath{\Varid{hfold}} respectively:
\begin{gather*}
h_1 = \big( \Id \hcomp A \xrightarrow{\identity} A \big)
\hspace{2em}
h_2 = \big( \Sigma \hcomp P \hcomp A \xrightarrow{\Sigma \hcomp h} \Sigma \hcomp A \xrightarrow{a} A\big) \\
h_3 = \big( \Gamma \hcomp P \hcomp P \hcomp A \xrightarrow{\Gamma \hcomp P \hcomp h}
\Gamma \hcomp P \hcomp A \xrightarrow{\Gamma \hcomp P \hcomp p} 
\Gamma \hcomp P \hcomp A \hcomp S \xrightarrow{\Gamma \hcomp h \hcomp S} 
\Gamma \hcomp A \hcomp S \xrightarrow{d} 
A \big)
\end{gather*}
In general, an equation in the form of (\ref{eq:hfoldEq}) does \emph{not}
necessarily have a (unique) solution.
Thus the semantics of the function \ensuremath{\Varid{hfold}} is not clear.
We settle this problem with the following result.

\begin{theorem}[Hybrid Folds Coincide with Interpretation]\label{thm:hfoldOK}
There exists a unique solution to (\ref{eq:hfoldEq}) and it coincides with
the interpretation with indexed algebra $A$ at level $0$ (\ref{eq:evalIxDef}):
\[h_0 = \interp{\abracket{A,a,d,p}} \identity : P A_0 \to A_0\]
\end{theorem}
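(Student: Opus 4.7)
The strategy is dictated by the three-step outline just before the statement: rewrite the hybrid fold equation (\ref{eq:hfoldEq}) as a catamorphism in $\CatEndof$, recognize that catamorphism as interpretation with a functorial algebra, and finally transport the result back to the indexed-algebra resolution through the comparison functor $\KK$ of \autoref{sec:FntoIx} using the preservation of interpretation (\autoref{lem:CFPreservesI}).

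First I would build a $G$-algebra $\alpha^G : G \tilde H \to \tilde H$ on a suitable finitary endofunctor $\tilde H : \CatC \to \CatC$ derived from $\abracket{A, a, d, p}$, with the goal that $\tilde H$ applied iteratively to $A_0$ recovers the whole indexed carrier (i.e.\ $\tilde H^i A_0 \cong A_i$), and that the three components of $\alpha^G$---the identity, the $\Sigma$-action, and the $\Gamma$-action---at the objects $A_n$ are precisely $p_n$, $a_{n+1}$, and $d_{n+1}$ respectively (one concrete realization is a Kan-extension-style gluing of $\{A_n\}$ through the promotion maps). Initiality of $P$ in $G\Alg$ (\autoref{lem:initial:alg}) yields a unique catamorphism $\cata{\alpha^G} : P \to \tilde H$ whose restriction to $A_n$ is $p_n \vcomp h_n$, forcing uniqueness of the solution to (\ref{eq:hfoldEq}). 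I would then extend $\alpha^G$ to a functorial algebra $\hat\alpha = \abracket{\tilde H, A_0, \alpha^G, \alpha^I}$ with $\alpha^I = [a_0, d_0] : \Sigma A_0 + \Gamma \tilde H A_0 \to A_0$ (using $\tilde H A_0 \cong A_1$). Applying \autoref{lem:fn:eval} to $\hat\alpha$ with $g = \identity_{A_0}$ gives a recursion whose unfolding is exactly (\ref{eq:hfoldEq}) at level $0$, hence $h_0 = \interp{\hat\alpha}{\identity}$.

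Finally I would apply $\KK$ to $\hat\alpha$: by construction $\KK(\hat\alpha)$ has carrier $\tilde H^i A_0 \cong A_i$ and structure maps matching $a$, $d$, $p$ level-by-level, so $\KK(\hat\alpha) \cong A$ in $\Ix\Alg$. Then \autoref{lem:CFPreservesI} applied to $\KK$ gives $h_0 = \interp{\hat\alpha}{\identity} = \interp{\KK(\hat\alpha)}{\identity} = \interp{A}{\identity}$, which is the desired identification.

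The delicate step is the construction of $\tilde H$: the indexed carrier $A$ is defined only on the discrete category $\CatN$, whereas $\tilde H$ must be a finitary endofunctor on all of $\CatC$ whose $G$-algebra structure simultaneously encodes $a$, $d$, and $p$, and whose iterated action on $A_0$ reproduces every $A_n$. Showing that such a $\tilde H$ can be taken finitary (so that the free-algebra results from \autoref{sec:foundation} and the isomorphism with $P$ in (\ref{eq:e:from:free}) apply) and that $\KK(\hat\alpha)$ reproduces $A$ on the nose (not merely up to a transformation that breaks preservation of interpretation at level $0$) is where the bulk of the technical work lies; the remaining steps are essentially formal consequences of the machinery developed in \autoref{sec:foundation} and \autoref{sec:comparing}.
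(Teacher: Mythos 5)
The three-step skeleton you follow is the one the paper itself announces before the theorem, but two of your steps have genuine gaps. First, uniqueness: you try to derive uniqueness of the solution of (\ref{eq:hfoldEq}) from initiality of $P$ in $G\Alg$, but (\ref{eq:hfoldEq}) is an equation for a morphism $P \hcomp A \to A$ in $\CatCN$, not for a $G$-algebra morphism out of $P$. Exhibiting one solution as (the transpose of) a catamorphism does not rule out other solutions; you need a bijection between solutions of (\ref{eq:hfoldEq}) and $G$-algebra morphisms $P \to \tilde H$. The paper gets exactly this from Hinze's Mendler-style adjoint fold theorem (\autoref{thm:men}) applied to the adjunction $(- \hcomp A) \dashv \Ran{A}$: the equation is recast as $h = \Phi_P(h) \vcomp (\inOp \hcomp A)$ for a natural transformation $\Phi$, and the unique solution is $\epsilon_A \vcomp \cata{\radjunct{\Phi_{\Ran{A}A}(\epsilon_A)}}A$. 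Your ``Kan-extension-style gluing'' points at the right carrier, but the adjoint transposition is the load-bearing part of the uniqueness argument and is missing from your write-up.

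Second, and more seriously, your final step needs an endofunctor $\tilde H$ with $\tilde H^i A_0 \cong A_i$ so that $\KK(\hat\alpha) \cong A$ as indexed algebras. No such $\tilde H$ exists in general: $A$ is an arbitrary family indexed by the \emph{discrete} category $\CatN$, and there is no reason the iterates of a single endofunctor on $A_0$ should reproduce every $A_i$. The paper's choice $\tilde H = \Ran{A}A$ yields only canonical maps $(\Ran{A}A)^i A_0 \to A_i$, not isomorphisms; consequently $\KK\hat{\alpha'}$ is \emph{not} isomorphic to $A$, and the paper instead constructs an indexed-algebra homomorphism $\tau : \KK\hat{\alpha'} \to \abracket{A,a,d,p}$ with $\tau_0 = \identity$ and applies the fusion law (\autoref{lem:fusion}) a second time --- a weaker statement that suffices because only level $0$ matters. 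Relatedly, since $(\Ran{A}A)A_0 \not\cong A_1$, one cannot take $\alpha^I = [a_0, d_0]$ as you do; the paper first obtains $h_0 = (\epsilon_A)_0 \vcomp \interp{\hat{\alpha}}{(\alpha^G_{A_0} \vcomp \iota_1)}$ and needs an intermediate fusion step to absorb the trailing $(\epsilon_A)_0$, arriving at $\hat{\alpha'}$ with $\alpha^I = [a_0,\ d_0 \vcomp \Gamma (\epsilon_A)_0 \vcomp \Gamma (\Ran{A}A) p]$. You correctly flagged the construction of $\tilde H$ as the delicate point, but the resolution is not to build a better $\tilde H$: it is to weaken the isomorphism to a homomorphism that is the identity at level $0$ and invoke fusion twice.
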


\noindent
We prove the theorem in the rest of this section with the tools that we have
developed.

\subsection{Hybrid Fold Is an Adjoint Fold}
\label{sec:hfoldProof1}

The first step of our proof is to show the unique existence of the solution
to (\ref{eq:hfoldEq}) based on the observation that it is an \emph{adjoint
fold equation} \cite{Hinze13Adj} with the adjunction between \emph{right Kan
extension} \cite{MacLane1978} and composition with $A$.
Hinze \cite{Hinze13Adj} shows the following theorem stating that adjoint fold equations
have a unique solution.

\begin{theorem}[Mendler-style Adjoint Folds \cite{Hinze13Adj}]\label{thm:men}
Given any adjunction $L \dashv R : \CatD \to \CatC$, an endofunctor $G : \CatD
\to \CatD$ whose initial algebra $\abracket{\mu G, \iniso}$ exists, and a
natural transformation $\Phi : \CatC(L -, B) \to \CatC(LD-, B)$ for some $B :
\CatC$, then there exists a unique $x : L(\mu G) \to B$ satisfying
\begin{equation}
x = \Phi_{\mu G}(x) \vcomp L\inOp
\end{equation}
and the unique solution satisfies $\radjunct{x} = \cata{\, \radjunct{\Phi_{RB}
(\epsilon_B)}\,}$ where $\radjunct{\cdot} : \CatC(LD, C) \to \CatD(D, RC)$ is 
the isomorphism for the adjunction $L \dashv R$.
\end{theorem}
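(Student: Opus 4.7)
My plan is to prove both uniqueness and the explicit formula in one shot by transposing the fixed-point equation across the adjunction $L \dashv R$ and recognising the transposed equation as a standard catamorphism equation in $\CatD$, whose unique solution is supplied by initiality of $\mu G$.

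Concretely, set $y = \radjunct{x} : \mu G \to RB$. First I apply $\radjunct{\cdot}$ to both sides of $x = \Phi_{\mu G}(x) \vcomp L\inOp$ and use the standard left-sided naturality $\radjunct{f \vcomp Lg} = \radjunct{f} \vcomp g$ of the adjunction, obtaining $y = \radjunct{\Phi_{\mu G}(x)} \vcomp \inOp$. Next, using the triangle identity $x = \ladjunct{y} = \epsilon_B \vcomp Ly$, I rewrite the argument of $\Phi$ and invoke naturality of $\Phi$ (as a transformation between the $\CatD^{op} \to \Set$ functors $\CatC(L-,B)$ and $\CatC(LG-,B)$) at the morphism $y : \mu G \to RB$, giving $\Phi_{\mu G}(\epsilon_B \vcomp Ly) = \Phi_{RB}(\epsilon_B) \vcomp LGy$. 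Transposing once more via $\radjunct{\cdot}$, this yields $\radjunct{\Phi_{\mu G}(x)} = \radjunct{\Phi_{RB}(\epsilon_B)} \vcomp Gy$. Substituting back, the original equation for $x$ is equivalent, through the bijection $\radjunct{\cdot}$, to the equation $y = \radjunct{\Phi_{RB}(\epsilon_B)} \vcomp Gy \vcomp \inOp$ in $\CatD$.

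This is precisely the defining equation of the catamorphism from the initial $G$-algebra into the $G$-algebra $\radjunct{\Phi_{RB}(\epsilon_B)} : G(RB) \to RB$, so by initiality it has a unique solution $y = \cata{\,\radjunct{\Phi_{RB}(\epsilon_B)}\,}$. Transposing back, the original equation has the unique solution $x = \ladjunct{\cata{\,\radjunct{\Phi_{RB}(\epsilon_B)}\,}}$, and in particular $\radjunct{x} = \cata{\,\radjunct{\Phi_{RB}(\epsilon_B)}\,}$, as claimed. The only delicate point is to observe that each of the three rewriting steps (the two transpositions and the naturality of $\Phi$) is an \emph{equivalence} rather than a one-way implication, so that existence and uniqueness are inherited in both directions; once one is careful to line up the naturality of $\Phi$ with the morphism $y$ in the correct variance, the argument is essentially bookkeeping and no deeper obstacle arises.
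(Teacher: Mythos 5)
Your proof is correct. The paper does not prove this theorem itself---it imports it from Hinze's work on adjoint folds \cite{Hinze13Adj}---and your argument is precisely the standard transposition proof from that source: transpose the Mendler-style equation along $L \dashv R$ using naturality of the adjunction and of $\Phi$, recognise the result as the catamorphism equation for the $G$-algebra $\radjunct{\Phi_{RB}(\epsilon_B)}$, and pull existence, uniqueness, and the explicit formula back through the bijection $\radjunctbare$.
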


Since $h : P A \to A$ and $P = \mu G : \CatEndof$, to apply this theorem to
(\ref{eq:hfoldEq}), we only need to (i) make $(- \hcomp A) : \CatEndof \to \CatCN$ a left
adjoint and (ii) make $[h_1, h_2, h_3]$ in (\ref{eq:hfoldEq}) an instance of
$\Phi_P(h)$ for some natural transformation $\Phi$:
\begin{itemize}
\item For (i), the functor $- \hcomp A$ is left adjoint to the \emph{right
Kan extension} along $A$, that is a functor $\Ran{A} : \CatCN \to \CatEndof$,
which always exists when $\CatC$ is lfp.

\item For (ii), we define a natural transformation $\Phi :  \CatCN(- \hcomp A, A) \to
\CatCN((G -) \hcomp A, A)$ such that for all $H : \CatEndof$ and $f \in \CatCN(H \hcomp A, A)$,
\[\Phi_H(f) = [f_1, f_2, f_3] \vcomp (\inOp \hcomp A)\] where 
\begin{gather}
f_1 = \big( \Id \hcomp A \xrightarrow{\identity} A \big) \label{eq:phiDef}
\hspace{2em}
f_2 = \big( \Sigma \hcomp H \hcomp A \xrightarrow{\Sigma \hcomp f} \Sigma \hcomp A \xrightarrow{a} A\big) \\
f_3 = \big( \Gamma \hcomp H \hcomp H \hcomp A \xrightarrow{\Gamma \hcomp H \hcomp f}
\Gamma \hcomp H \hcomp  A \xrightarrow{\Gamma \hcomp H \hcomp p} 
\Gamma \hcomp H \hcomp A \hcomp S \xrightarrow{\Gamma f}
\Gamma \hcomp A \hcomp S \xrightarrow{d} A \big) \nonumber
\end{gather}
It is immediate that \ensuremath{\Varid{hfold}} (\ref{eq:hfoldEq}) is exactly $h = \Phi_{P}(h) \vcomp (\inOp \hcomp A)$.
\end{itemize}
Then by \autoref{thm:men} we have the following result.

\begin{lemma}[Unique Existence of Hybrid Fold]\label{lem:hfoldExists}
The recursive definition $h : P A \to A$ (\ref{eq:hfoldEq}) of hybrid folds
(\autoref{fig:hfold}) has a unique solution:
\begin{equation}
\label{eq:hfoldByCata}
h = \big( PA \xrightarrow{\cata{ \radjunct{\Phi_{\Ran{A} A}(\epsilon_A)}}A}
  (\Ran{A}A) A \xrightarrow{\epsilon_A}
  A \big)
\end{equation}
where $\epsilon_A : (\Ran{A}A)\hcomp A \to A$ is the counit, and
$\cata{\radjunct{\Phi_{\Ran{A} A}(\epsilon_A)}}$ is the catamorphism from the
initial $G$-algebra $P$ (\ref{eq:e:recursive2}) to the $G$-algebra carried by $\Ran{A}A$ with structure
map $\radjunct{\Phi_{\Ran{A} A}(\epsilon_A)}$.
\end{lemma}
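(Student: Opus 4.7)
The plan is to instantiate the Mendler-style adjoint fold theorem (\autoref{thm:men}) with the adjunction $(- \hcomp A) \dashv \Ran{A}$ between $\CatEndof$ and $\CatCN$, the higher-order grammar functor $G$ on $\CatEndof$ whose initial algebra is $P$, and the natural transformation $\Phi$ defined in (\ref{eq:phiDef}). Once all hypotheses of \autoref{thm:men} are verified, the hybrid fold equation (\ref{eq:hfoldEq}), which by construction has precisely the shape $h = \Phi_P(h) \vcomp (L\inOp)$ with $L = (- \hcomp A)$, must have a unique solution given by the formula stated in the theorem, which specialises to (\ref{eq:hfoldByCata}).

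The first thing to check is that the right Kan extension $\Ran{A} : \CatCN \to \CatEndof$ along $A$ exists and is right adjoint to $(- \hcomp A)$. Because $\CatC$ is lfp (and in particular complete), the pointwise formula for right Kan extensions along $A$ computes a well-defined finitary endofunctor, and the adjunction $(- \hcomp A) \dashv \Ran{A}$ is the standard one. The initial $G$-algebra $P$ has already been constructed in \autoref{sec:syntax:scoped}, so this hypothesis is immediate.

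The next step is to verify that $\Phi$ as defined in (\ref{eq:phiDef}) really is a natural transformation $\CatCN(- \hcomp A, A) \to \CatCN((G-)\hcomp A, A)$ in the variable $H : \CatEndof$. This is a routine diagram chase: given a morphism $\sigma : H \to H'$ in $\CatEndof$, we must show $\Phi_{H'}(f \vcomp (\sigma \hcomp A)) = \Phi_H(f) \vcomp (G\sigma \hcomp A)$ for every $f : H' \hcomp A \to A$. The three summands $f_1$, $f_2$, $f_3$ all factor through $f$ precomposed with some horizontal composite involving $\sigma$, so naturality reduces to functoriality of $\Sigma$, $\Gamma$, $H$, and the interchange law for horizontal/vertical composition.

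With those hypotheses in place, \autoref{thm:men} directly yields a unique $h : P \hcomp A \to A$ satisfying $h = \Phi_P(h) \vcomp (\inOp \hcomp A)$, which is exactly (\ref{eq:hfoldEq}), and moreover the explicit form $\radjunct{h} = \cata{\, \radjunct{\Phi_{\Ran{A}A}(\epsilon_A)}\,}$, which unfolds via the unit/counit of $(- \hcomp A) \dashv \Ran{A}$ into formula (\ref{eq:hfoldByCata}). I expect the main obstacle to be purely bookkeeping: making sure the adjoint transpose $\radjunct{\cdot}$ is applied on the correct side so that the catamorphism really lands in a $G$-algebra structure on $\Ran{A}A$ whose structure map is $\radjunct{\Phi_{\Ran{A}A}(\epsilon_A)}$; this is where the compactness of Hinze's formulation pays off, reducing the whole calculation to an application of a known theorem rather than a hand-crafted induction.
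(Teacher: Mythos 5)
Your proposal is correct and follows essentially the same route as the paper: the paper likewise establishes the lemma by instantiating the Mendler-style adjoint fold theorem with the adjunction $(-\hcomp A)\dashv\Ran{A}$, the grammar functor $G$ with $\mu G = P$, and the natural transformation $\Phi$ from (\ref{eq:phiDef}), then reading off the unique solution from Hinze's formula. The extra checks you flag (existence of $\Ran{A}$ and naturality of $\Phi$) are exactly the hypotheses the paper verifies, the latter only implicitly.
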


\subsection{Catamorphism as Interpretation}
\label{sec:hfoldProof2}
We have shown that $\ensuremath{\Varid{hfold}}_A$ for all indexed algebras $A$ uniquely exists,
and now to show its coincidence with $\interpName_A$ (the second part of
\autoref{thm:hfoldOK}), we show that the hybrid fold coincides with
$\interpName$ with some \emph{functorial} algebra, and then use the comparison
functor $\KK$ (\autoref{sec:FntoIx}) to translate this functorial algebra into
an indexed algebra.

The following lemma relating $G\Alg$ and $\Fn\Alg$ is straightforward.
\begin{lemma}\label{prop:g:as:fn}
For the functor $G$ in (\ref{eq:g:equation}) and every $X : \CatC$, there is a
faithful functor $J_X : G\Alg \to \Fn\Alg$ that maps any $G$-algebra $\tuple{H,
\alpha : G H \to H}$ to the functorial algebra $\tuple{H, H X, \alpha,  \beta
}$ where
$\beta = [\alpha_X \vcomp \iota_2, \alpha_X \vcomp \iota_3] : I_H(H X) \to H X$.
\end{lemma}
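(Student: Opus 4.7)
The plan is to verify in turn that $J_X$ is well-defined on objects, well-defined on morphisms, functorial, and faithful; functoriality and faithfulness will then be immediate from the explicit formulas. The whole argument is essentially bookkeeping around the observation that the data of a $G$-algebra, when evaluated at the object $X$, already encodes the object-level part of a functorial algebra at $HX$, so no new content has to be invented for $\beta$.

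For objects, the quadruple $\tuple{H, HX, \alpha, \beta}$ is a well-typed element of $\Fn\Alg$ by inspection: the only non-obvious datum is $\beta$, and since the component $\alpha_X : X + \Sigma(HX) + \Gamma(H(HX)) \to HX$ of $\alpha : GH \to H$ at $X$ (using $G = \Id + \Sigma \hcomp {-} + \Gamma \hcomp {-} \hcomp {-}$) supplies morphisms $\alpha_X \vcomp \iota_2 : \Sigma(HX) \to HX$ and $\alpha_X \vcomp \iota_3 : \Gamma(H(HX)) \to HX$, their copair is exactly $\beta$ of the required type $\Sigma(HX) + \Gamma(H(HX)) \to HX$. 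No equations have to be checked on objects, since functorial algebras impose none.

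For morphisms, I would set $J_X(\sigma) = \tuple{\sigma,\, \sigma_X}$ for every $G$-algebra morphism $\sigma : \tuple{H, \alpha} \to \tuple{H', \alpha'}$. The first square of \autoref{def:fctAlgs} is literally the $G$-algebra morphism equation for $\sigma$, so it commutes by assumption. For the second square, the only real subtlety is recognising that $\sigma \hcomp \sigma_X = (\sigma \hcomp \sigma)_X$ by naturality of $\sigma$. Precomposing the square with the coproduct injections separately and expanding $\beta$, $\beta'$ via their defining copairings reduces commutativity to the identities $(G\sigma)_X \vcomp \iota_2 = \iota_2 \vcomp \Sigma \sigma_X$ and $(G\sigma)_X \vcomp \iota_3 = \iota_3 \vcomp \Gamma(\sigma \hcomp \sigma)_X$ combined with the $\iota_2$- and $\iota_3$-components of $\sigma \vcomp \alpha = \alpha' \vcomp G\sigma$ taken at $X$, both of which hold.

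Functoriality of $J_X$ is then manifest, since identities and composition in $G\Alg$ are sent to identities and composition in each of the two components of $\tuple{\sigma, \sigma_X}$. Faithfulness is equally easy: from $J_X(\sigma) = J_X(\sigma')$, the agreement of the first components alone yields $\sigma = \sigma'$ as natural transformations, hence as morphisms in $G\Alg$. I do not foresee any genuine obstacle; the only step that benefits from being stated explicitly is the naturality identity $\sigma \hcomp \sigma_X = (\sigma \hcomp \sigma)_X$, which is what makes the object-level square reduce cleanly to the $G$-algebra morphism equation at $X$.
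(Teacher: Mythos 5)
Your proof is correct. The paper itself offers no proof of this lemma (it is introduced with ``The following lemma \ldots{} is straightforward'' and left unproved in both the main text and the appendix), and your argument is exactly the evident verification the authors are implicitly relying on: well-typedness of $\beta$ from the component $\alpha_X$, the first square being literally the $G$-algebra morphism condition, the second square reducing via the coproduct injections and the identity $(\sigma \hcomp \sigma)_X = \sigma_{H'X} \vcomp H\sigma_X$ to the $\iota_2$- and $\iota_3$-components of $\sigma \vcomp \alpha = \alpha' \vcomp G\sigma$ at $X$, and faithfulness from the first component of $\tuple{\sigma, \sigma_X}$ alone.
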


Now consider the $G$-algebra in \autoref{lem:hfoldExists} carried by
$\Ran{A}{A}$ with structure map $\alpha^G = \radjunct{\Phi_{\Ran{A} A}(\epsilon_A)}$,
we can trivially make it a functorial algebra $\hat{\alpha}$ by \autoref{prop:g:as:fn},
hoping for applications of the fusion law of functorial algebras
(\ref{lem:fn:fusion}):
\begin{equation}\label{eq:fn:alpha}
\hat{\alpha} = J_{A_0} \tuple{\Ran{A}{A}, \alpha^G} = \abracket{\Ran{A} A,
(\Ran{A} A) A_0, \alpha^G, \alpha^G \vcomp [\iota_2, \iota_3]}
\end{equation}
The catamorphism to $\alpha^G$ and $\interpName_{\hat{\alpha}}$ are related by
$\cata{\alpha^G}_{A_0} = \interp{\hat{\alpha}}(\alpha^G_{A_0} \vcomp \iota_1)$,
which can be shown by checking that the formula (\ref{eq:evalFctAlg}) for
computing $\interpName_{\hat{\alpha}}$ is exactly the defining equation of the
catamorphism $\cata{\alpha^G}_{A_0}$.
Plugging the identity  into (\ref{eq:hfoldByCata}), we obtain
\begin{equation}\label{eq:hIsInterpAndEps}
h_0 = (\epsilon_A)_0 \vcomp \interp{\hat{\alpha}}{(\alpha^G_{A_0} \vcomp \iota_1)}
\end{equation}
Now we have made some progress because the right-hand side takes exactly the
form for applications of the fusion law---an interpretation followed by a
morphism $(\epsilon_A)_0 : (\Ran{A} A)A_0 \to A_0$.

In order to apply the fusion law, we need to make $(\epsilon_A)_0 :
(\Ran{A} A)A_0 \to A_0$ a functorial algebra homomorphism from $\hat{\alpha}$
(\ref{eq:fn:alpha}).
With some exploration, we can find a functorial algebra
$\hat{\alpha'} = \abracket{\Ran{A} A,\ A_0,\ \alpha^G,\ \alpha^I}$ differing
from $\hat{\alpha}$ by the second and fourth components, where
\[
\alpha^I = [a_0,  d_0 \vcomp \Gamma (\epsilon_A)_0 \vcomp \Gamma (\Ran{A}A) p]
: \Sigma A_0 + \Gamma ((\Ran{A}{A}) A_0) \to A_0
\]
and $a$, $d$ and $p$ are the structure maps of indexed algebra $A$ (\ref{eq:fixed:a}).
It can be checked that $\abracket{\identity, (\epsilon_A)_0}$ is
a functorial algebra homomorphism from $\hat{\alpha}$ to $\hat{\alpha'}$.
Thus by \autoref{lem:fn:fusion}, we obtain
\begin{equation}\label{eq:hfoldbyF}
h_0 = \interp{\hat{\alpha}}{ \big((\epsilon_A)_0 \vcomp (\alpha^G)_{A_0} \vcomp \iota_1 \big) } 
= \interp{\hat{\alpha'}}{\identity} : PA_0 \to A_0
\end{equation}
which means that the hybrid fold coincides with the interpretation with functorial
algebra $\hat{\alpha'}$.

\subsection{Translating Back to Indexed Algebras}
\label{sec:hfoldProof3}
The last step of our proof is translating the functorial algebra $\hat{\alpha'}$ back
to an indexed algebra using comparison functor $\KK$ (\autoref{sec:FntoIx}), and
showing that the resulting indexed algebra induces the same interpretation morphism
as the one induced by $A$.

Recall that the comparison functor $\KK$ maps a functorial algebra carried by
$\abracket{H,X}$ to an indexed algebra carried by $i \mapsto H^i X$.
Thus $\KK \hat{\alpha'}$ is an indexed algebra carried by $i \mapsto (\Ran{A} A)^i A_0$
and by \autoref{lem:CFPreservesI} and (\ref{eq:hfoldbyF}), $\KK$ preserves the
induced interpretation:
\begin{equation}
h_0 = \interp{\hat{\alpha'}} \identity = \interp{\KK \hat{\alpha'}} \identity : PA_0 \to A_0
\end{equation}
What remains is to prove that $\interp{\KK \hat{\alpha'}} \identity = \interp{A} \identity$,
and we show this by the fusion law (of indexed algebras):
define a natural transformation $\tau : (\Ran{A} A)^{i} A_0 \to A_{i}$ 
between $\CatCN$ functors by
$\tau_0 = \identity : (\Ran{A} A)^0 A_0 \to A_0$ and
\[
\tau_{i+1} = \big( (\Ran{A} A) (\Ran{A}A)^{i} A_0 \xrightarrow{(\Ran{A}A)\tau_i} (\Ran{A} A)A_i
\xrightarrow{(\Ran{A}A) p}  (\Ran{A} A)A_{i+1} \xrightarrow{\epsilon_A} A_{i+1}  \big)
\]
and it can be checked that $\tau$ is an indexed algebra homomorphism from
$\KK \hat{\alpha'}$ to $\abracket{A, a, d, p}$.
Note that we have $\downharpoonleft \Ul_\Ix \tau$ equals $\identity : A_0 \to A_0$, 
and by \autoref{lem:CFPreservesI}, we have that
\[
h_0 = \interp{\KK \hat{\alpha'}} \identity = \identity \vcomp \interp{\KK \hat{\alpha'}} \identity
= (\downharpoonleft \Ul_\Ix) \tau \vcomp \interp{\KK \hat{\alpha'}} \identity
= \interp{A}\identity
\]
This completes our proof of \autoref{thm:hfoldOK} saying that Pir\'{o}g et al.\ \cite{PirogSWJ18}'s
\ensuremath{\Varid{hfold}} indeed correctly implements $\interpName$ with indexed algebras.

To summarise, we first connected \ensuremath{\Varid{hfold}} to a functorial algebra using a Mendler-style 
adjoint fold, and then used the fusion law of functorial algebras to simplify it.
Then it is translated to the category of indexed algebras, and we used the
fusion law one more time there.

\end{document}